\newcommand{\E}{\mathbb{E}}
\newcommand {\NP}{\mathcal {NP}}
\newcommand{\MOPH}{\mathcal{MPH}}
\newcommand{\PLE}{\mathcal{PLE}}
\newcommand{\XOS}{\mathcal{XOS}}
\newcommand{\nonnegR}{\mathbb{R}^+}
\newcommand{\SMD}{\mathcal{SMD}}
\newcommand{\PH}{\mathcal{PH}}
\newcommand{\items}{\mathop{\mathrm{items}}\nolimits}
\newtheorem{theorem}{Theorem}[section]
\newtheorem{lemma}[theorem]{Lemma}
\newtheorem{prop}[theorem]{Proposition}
\newtheorem{corollary}[theorem]{Corollary}
\newtheorem{observation}[theorem]{Observation}
\newtheorem{conj}[theorem]{Conjecture}
\newtheorem{defn}{Definition}[section]
\newtheorem{LP}[defn]{Linear Program}
\def\squareforqed{\hbox{\rule{2.5mm}{2.5mm}}}
\def\QED{\ifmmode\squareforqed 
  \else{\nobreak\hfil   
    \penalty50                 
    \hskip1em                  
    \null                      
    \nobreak                   
    \hfil                      
    \squareforqed              
    \parfillskip=0pt           
    \finalhyphendemerits=0     
    \endgraf}                  
  \fi}
\def\blksquare{\rule{2mm}{2mm}}
\def\qedsymbol{\blksquare}
\newcommand{\bg}[1]{\medskip\noindent{\bf #1}}
\newcommand{\ed}{{\hfill\qedsymbol}\medskip}
\newenvironment{proofof}[1]{{\it{Proof of #1 : }}}{\ed}
\newcommand{\R}{\ensuremath{\mathbb R}}
\newcommand{\A}{\ensuremath{\mathcal{A}}}
\newcommand{\T}{\ensuremath{\mathcal T}}
\newcommand{\Ell}{\ensuremath{\mathcal{L}}}
\newcommand{\comment}[1]{}
 {}
\newcommand{\junk}[1]{}
\newlength{\tmp} \newlength{\lpsx} \newlength{\lpsy} \newlength{\upsx} \newlength{\upsy}
\newcommand{\opt}{\text{\textsc{Opt}} }
\newcommand{\X}{\ensuremath{\mathcal{X}}}
\newcommand{\opti}{\ensuremath{X_{i}^*}}
\newcommand{\ESk}[1]{\ensuremath{#1 | _k}}
\newcommand{\rbid}{\ensuremath{{\bf b}}}
\newcommand{\rprice}{\ensuremath{{\bf p}}}
\newcommand{\vals}{\ensuremath{{\bf v}}}
\newcommand{\val}{\ensuremath{v}}
\newcommand{\Omit}[1]{}
\begin{document}

\title{A Unifying Hierarchy of Valuations\\with Complements and Substitutes}

\author{
Uriel Feige\thanks{Weizmann Institute of Science; {\tt uriel.feige@weizmann.ac.il}}
\and
Michal Feldman\thanks{Tel-Aviv University; {\tt mfeldman@tau.ac.il}}
\and
Nicole Immorlica\thanks{Microsoft Research; {\tt nicimm@gmail.com}}
\and
Rani Izsak\thanks{Weizmann Institute of Science; {\tt ran.izsak@weizmann.ac.il}}
\and
Brendan Lucier\thanks{Microsoft Research; {\tt brlucier@microsoft.com}}
\and
Vasilis Syrgkanis \thanks{Cornell University; {\tt vasilis@cs.cornell.edu}}
}
\date{}

\maketitle

\begin{abstract}
We introduce a new hierarchy over monotone set functions, that we refer to as $\MOPH$ (Maximum over Positive Hypergraphs).
Levels of the hierarchy correspond to the degree of complementarity in a given function.
The highest level of the hierarchy, $\MOPH$-$m$ (where $m$ is the total number of items) captures all monotone functions. The lowest level, $\MOPH$-$1$, captures all monotone submodular functions, and more generally, the class of functions known as $\XOS$. Every monotone function that has a positive hypergraph representation of rank $k$ (in the sense defined by Abraham, Babaioff, Dughmi and Roughgarden [EC 2012]) is in $\MOPH$-$k$. Every monotone function that has supermodular degree $k$ (in the sense defined by Feige and Izsak [ITCS 2013]) is in $\MOPH$-$(k+1)$. In both cases, the converse direction does not hold, even in an approximate sense.  We present additional results that demonstrate the expressiveness power of $\MOPH$-$k$.

One can obtain good approximation ratios for some natural optimization problems, provided that functions are required to lie in low levels of the $\MOPH$ hierarchy. We present two such applications. One shows that the maximum welfare problem can be approximated within a ratio of $k+1$ if all players hold valuation functions in $\MOPH$-$k$.
The other is an upper bound of $2k$ on the price of anarchy of simultaneous first price auctions.

Being in $\MOPH$-$k$ can be shown to involve two requirements -- one is monotonicity and the other is a certain requirement that we refer to as $PLE$ (Positive Lower Envelope). Removing the monotonicity requirement, one obtains the $\PLE$ hierarchy over all non-negative set functions (whether monotone or not), which can be fertile ground for further research.

\end{abstract}

%
%
%
%
%
%

\newpage

\pagenumbering{arabic}
\setcounter{page}{1}

\section{Introduction}

In a combinatorial auction setting, a set $M$ of $m$ items is to be allocated among a set $N$ of $n$ buyers. Each buyer $i\in N$ has a valuation function that assigns a non-negative real number $v_i(S)$ to every bundle of items $S\subseteq M$ . A well motivated objective is to find a partition of the items $X=(X_1, \ldots, X_n)$ among the buyers so as to maximize the {\em social welfare}, defined as the sum of buyers' valuations from the bundles they obtain $SW(X)=\sum_{i\in N}v_i(X_i)$. The model of combinatorial auctions is highly applicable to real-world settings such as spectrum auctions and electronic advertisement markets.

Most of the existing literature on combinatorial auctions has focused on the case where buyer valuations are complement-free.
Roughly speaking, this means that the value for the union of two bundles of items cannot exceed the sum of the values for each individual bundle.
Such valuations do not capture scenarios where certain items produce more value when acquired in conjunction with each other (such as a left and right shoe).
Complement-free valuations are arguably more well-behaved than general combinatorial valuations in many aspects.
From an algorithmic perspective, complement-free valuations admit constant-factor polynomial time approximation algorithms \cite{FV,DS2006,Feige2006}, while general valuations are hard to approximate even to within a factor that is sub-polynomial in the number of items \cite{LCS99}.
From a game-theoretic perspective, simple auctions, such as running simultaneously a single-item first-price auction for each item, induce equilibria that achieve constant factor approximations to the optimal welfare if all valuations are complement-free \cite{Syrgkanis2013,Feldman2013,Hassidim2011,Christodoulou2008,Bhawalkar2011}. In contrast, if valuations exhibit complements, the worst-case inefficiency grows with the number of items \cite{Hassidim2011}.

While the theory suggests that complementarities degrade the performance of combinatorial auctions, they arise very naturally in many economic scenarios. A prominent example is the FCC spectrum auctions, where it is desirable to win licenses for the same band of spectrum in adjacent geographical regions. The prevalence of complementarities in practice calls for a better theoretical understanding of 
%
the effect of the level of complementarity on the performance of auctions. 

To this end, we introduce a new hierarchy of monotone set functions called {\em maximum over positive hypergraphs} ($\MOPH$), whose level captures the degree of complementarity. 
A new hierarchy is useful if it has a strong expressiveness power on the one hand, and algorithmic and economic implications on the other.
We show that important classes of functions are captured in low levels of our hierarchy (a detailed exposition is deferred to Section \ref{sec:related}.).
%
We then present algorithmic and economic results that illustrate the usefulness of our hierarchy.  In particular, we develop an algorithm that approximates the welfare maximization problem to within a factor of $k+1$, where $k$ is the degree of complementarity of the valuations, as captured by our hierarchy. We further show that an auction that solicits bids on each item separately and allocates each item to the highest bidder (at a cost equals to her bid) achieves a $2k$-approximation to the optimal welfare at any equilibrium of bidder behavior.
\subsection{The Maximum over Positive Hypergraph ($\MOPH$) Hierarchy}\label{sec:moh}\label{sec:prelims}
Given a set $M$ of $m$ items, a set function $v: 2^M \to \nonnegR$ is {\em normalized} if $v(\emptyset) = 0$ and {\em monotone} if $v(T) \ge v(S)$ whenever $S \subseteq T \subseteq M$.\footnote{We use $\nonnegR$ for non-negative real numbers. That is, 0 is included.}
A normalized monotone set function is necessarily non-negative.
Throughout the paper we assume that all set functions are normalized and monotone, unless stated otherwise.
In the context of combinatorial auctions, we refer to the set functions as valuation functions.

A set function $v$ is symmetric if $v(S) = v(T)$ whenever $|S| = |T|$. A hypergraph representation of a set function $v:2^M \to \nonnegR$ is a (normalized but not necessarily monotone) set function $h:2^M \to \mathbb{R}$ that satisfies $v(S) = \sum_{T \subseteq S} h(T)$. It is easy to verify that any set function $v$ admits a unique hypergraph representation and vice versa.
A set $S$ such that $h(S) \neq 0$ is said to be a {\em hyperedge} of $h$.
Pictorially, the hypergraph representation can be thought of as a weighted hypergraph, where every vertex is associated with an item in $M$, and the weight of each hyperedge $e\subseteq M$ is $h(e)$. Then the value of the function for any set $S\subseteq M$, is the total value of all hyperedges that are contained in $S$. 

The {\em rank} of a hypergraph representation $h$ is the largest cardinality of any hyperedge. Similarly, the {\em positive rank} (respectively, {\em negative rank}) of $h$ is the largest cardinality of any hyperedge with strictly positive (respectively, negative) value. The rank of a set function $v$ is the rank of its corresponding hypergraph representation, and we refer to a function $v$ with rank $r$ as a \emph{hypergraph-$r$} function. Last, if the hypegraph representation is non-negative, i.e. for any $S\subseteq M$, $h(S)\geq 0$, then we refer to such a function as a \emph{positive hypergraph-$r$} ($\PH$-$r$) function .

\Omit{\vsdelete{We say that a set function $v$ is in the class $\MOPH$-$k$ if it can be expressed as a maximum over an arbitrary set of $\PH$-$k$ functions.  We define the $\MOPH$ hierarchy more formally, with a discussion of its properties, in Section~\ref{sec:moh}.}}

\Omit{\vsdelete{We review the combinatorial auction setting described in the introduction.
In a combinatorial auction setting a set $M$ of $m$ items is allocated among a set $N$ of $n$ agents.
Each agent $i \in N$ has a valuation function $\val_i:2^M \to \nonnegR$ that assigns a non-negative real value to every set of items $S \subseteq M$.
An allocation is a partition of the items into sets $X=(X_1, \ldots, X_n)$, where $X_i$ denotes the bundle allocated to agent $i$.
The social welfare of an allocation is the sum of values that the agents derive from their bundles; i.e., $SW(X)=\sum_{i=1}^{n}v_i(X_i)$.}}

We define a parameterized hierarchy of set functions, with a parameter that corresponds to the degree of complementarity.
\begin{defn}[Maximum Over Positive Hypergraph-$k$ ($\MOPH$-$k$) class]
A monotone set function $v:2^M\to \nonnegR$ is {\em Maximum over Positive Hypergraph-$k$} ($\MOPH$-$k$) if it can be expressed as a maximum over a set of $\PH$-$k$ functions.  That is, there exist $\PH$-$k$ functions $\{v_{\ell}\}_{\ell\in\Ell}$ such that for every set $S \subseteq M$,
\begin{equation}
\textstyle{v(S) = \max_{\ell \in \Ell} v_{\ell}(S)},
\end{equation}
where $\Ell$ is an arbitrary index set.
\end{defn}

The $\MOPH$ hierarchy has the following attributes:

\begin{enumerate}
\setlength{\itemsep}{1pt}
\setlength{\parskip}{0pt}
\setlength{\parsep}{0pt}
\item {\bf Completeness}. Every monotone set function is contained in some level of the hierarchy (see below).

\item {\bf Usefulness}.
The hierarchy has implications that relate the level in the hierarchy to the efficiency of solving optimization problems.
Specifically, we show implications of our hierarchy to the approximation guarantee of the algorithmic welfare maximization problem (in Section~\ref{sec:algorithmic}) and the price of anarchy of simultaneous single item auctions (in Section~\ref{sec:poa}).

\item {\bf Expressiveness}. The hierarchy is expressive enough to contain many functions in its lowest levels (see Section~\ref{sec:results}).
\end{enumerate}

We conclude this section with some basic properties of the $\MOPH$ hierarchy (for more properties, see Section~\ref{sec:results}).
The two extreme cases of $\MOPH$-$k$ functions coincide with two important classes of valuations.
Specifically, $\MOPH$-$1$ is the class of functions that can be expressed as the maximum over a set of additive functions. This is exactly the class of $\XOS$ valuations~\cite{Lehmann2001}, which is a complement-free valuation class that has been well-studied in the literature. This class contains all submodular valuations, i.e. valuations that exhibit decreasing marginal returns. On the other side, $\MOPH$-$m$ coincides with the class of all monotone functions,\footnote{Simply create a separate $\PH$-$|S|$ function for each set $S$ with a single hyperedge equal to the set $S$ and with weight $f(S)$. Then, by monotonicity, the maximum of these functions is equal to the initial valuation.} and so the hierarchy is complete. For intermediate values of $k$, $\MOPH$-$k$ is monotone; namely, for every $k < k'$ it holds that $\MOPH$-$k\subset \MOPH$-$k'$.
We get the following hierarchy:
\begin{equation} \label{eq:moph_hierarchy}
\text{Submodular}\subset XOS = \MOPH\text{-}1 \subset \cdots \subset \MOPH\text{-}m = \text{Monotone}
\end{equation}

\begin{figure}[tpb]
\label{fig:hypergraph}
\centering
\input{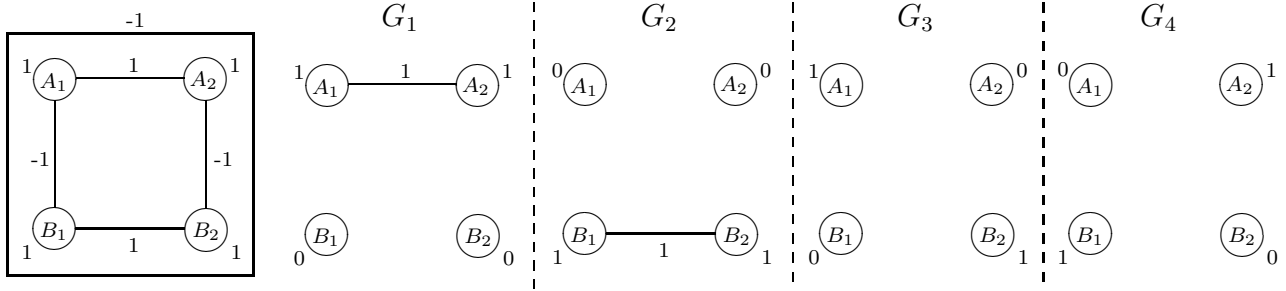}
\caption{The left figure depicts a spectrum auction inspired hypergraph valuation with positive edges and negative hyperedges, which can be expressed as the maximum over the positive graphical valuations on the right.}
\end{figure}

\paragraph{A simple example.}
Consider the example depicted in Figure \ref{fig:hypergraph}, which has an intuitive interpretation in the context of FCC spectrum auctions.
Suppose that $A$, $B$ are two spectrum bands
and that $A_i,B_i$ are auctions representing band $A$ or $B$ at location $i$. Locations $1$ and $2$ are
neighboring geographic regions and therefore, a bidder gets a much larger value for getting the same band
in both regions. Therefore, $A_1$ and $A_2$ have a complementary relationship and similarly $B_1$ and $B_2$. However,
each $A_i$ has a substitute relationship with $B_i$ and additionally the pair $(A_1,A_2)$ has a substitute relationship with the pair $(B_1,B_2)$, since a bidder will only utilize one pair of bands. This valuation can be represented as a hypergraph, as in the left-most diagram in Figure~\ref{fig:hypergraph}. Also, as illustrated in Figure~\ref{fig:hypergraph}, this valuation can be represented as a maximum over positive hypergraph valuations of rank $2$.

\paragraph{Fractionally ``Subadditive'' Characterization of $\MOPH$-$k$.}
We show that the definition of $\MOPH$-$k$ functions has a natural analogue as an extension of {\it fractionally subadditive} functions. See Appendix~\ref{sec:fractionally-subadditive}.

\subsection{Related Work}
\label{sec:related}

\Omit{\vsdelete{Our motivation for introducing the $\MOPH$ hierarchy comes from the study of combinatorial auctions. However, we remark that some classes of set functions received much attention in combinatorial optimization even prior to the interest in combinatorial auctions. This is especially true for the class of {\em submodular} functions on which there is a vast body of literature. We further remark that outside the context of combinatorial auctions, set functions are often not required to be monotone, and hence questions such as submodular function minimization (which can be solved in polynomial time, see~\cite{Cunningham}, for example) and maximization (which can be approximated within a factor of~$1/2$, see~\cite{BFNS}) are natural in these contexts. Similar questions may be relevant to our $\PLE$ hierarchy, which is a variation on the $\MOPH$ hierarchy that does not require monotonicity.}}

\paragraph{Expressiveness.} 
Since the maximum welfare allocation problem is $\NP$-hard to approximate even with very poor ratio (see for example~\cite{LCS99} for the case of {\em single minded bidders} -- bidders that want one particular bundle of items), there has been extensive work on classification of monotone set functions. We distinguish between two types of classifications. One is that of {\em restricted classes} of set functions, and the other is {\em inclusive hierarchies} that capture all monotone set functions.

{\bf Restricted classes of monotone set functions.}
Lehmann, Lehmann and Nisan~\cite{Lehmann2001} initiated a systematic classification of set functions without complementarities. The main classes in their classification (in order of increasing expressiveness) are 
{\em additive}, {\em gross substitutes} (a class introduced by Kelso and Crawford~\cite{KC82}), {\em submodular}, {\em $\XOS$} (the terminology for this class is taken from earlier work of Sandholm~\cite{Sandholm}), and 
{\em subadditive}. 
Subsequent research showed that the maximum welfare can be approximated within a ratio somewhat better than $1 - 1/e$ in the submodular case~\cite{FV}, $1 - 1/e$ in the $\XOS$ case~\cite{DS2006} and~2 in the subadditive case~\cite{Feige2006}. These approximation algorithms assume {\em demand queries} access to the valuation functions, though for the submodular case, if one is satisfied with a $1 - 1/e$ ratio, then {\em value queries} suffice~\cite{Vondrak}. It follows from the definitions that $\MOPH$-1, the first level of our hierarchy, coincides with the class of $\XOS$ functions.

Conitzer, Sandholm, and Santi~\cite{Conitzer2005}
consider the class of 
{\em graphical valuations}. Namely, every item has a weight, and every pair of items (edge of the graph) has a weight (positive if the items are complements, negative if they are substitutes, and~0 if they are independent), and the value of a set of items is the sum of weights of items and edges within the set. It will follow (though this is a nontrivial claim that requires a proof) that 
this class is contained in $\MOPH$-2.

Abraham, Babaioff, Dughmi, and Roughgarden~\cite{Abraham2012} consider the hierarchy of $\PH$-$k$ valuation functions, as already defined, (which are obviously contained in $\MOPH$-$k$) that allows only complements but no substitutes. In particular, submodular functions cannot be expressed in this hierarchy, and moreover, even some supermodular functions cannot be expressed in this hierarchy. 
It is shown in~\cite{Abraham2012} that the maximum welfare problem can be approximated within a ratio of~$k$ if all valuation functions are in $\PH$-$k$. They also discuss mechanisms that are truthful in expectation, but the approximation ratios achieved by their mechanisms deteriorate with the total number $m$ of items, even if $k$ remains fixed. 

{\bf Inclusive hierarchies of monotone set functions.} Feige and Izsak~\cite{Feige2013} introduced a hierarchy of monotone set functions, parameterized by the so-called {\em supermodular degree}. Unlike the $\MOPH$ hierarchy whose levels are numbered from~1 to~$m$, the levels of the supermodular degree hierarchy are numbered from~0 to $m-1$, and one should keep this in mind when comparing levels of these hierarchies. The functions with supermodular degree~0 are the submodular functions. As shown in~\cite{Feige2013}, there is a greedy algorithm that approximates the welfare maximization problem within a ratio of $k+2$ if the supermodular degree of all valuation functions is at most $k$.
We show 
that for every $k$, functions of supermodular degree $k$ are in $\MOPH$-$(k+1)$. For the other direction, there are functions in $\MOPH$-2 that cannot even be approximated by functions of low supermodular degree (e.g., functions of supermodular degree $\sqrt{m}$ approximate them only within a ratio of $\Omega(\sqrt{m})$).

\Omit{\vscomment{These are mentioned at the end of the next section and don't cite anything, so why part of related work?}
\vsdelete{Another inclusive hierarchy that one may consider is similar to the $\PH$ hierarchy mentioned above, but without the restriction that hyperedges have positive values, i.e. the $k$ level contains all monotone \emph{hypergraph}-$k$ functions.
Level~1 of this hierarchy is the linear functions, level~2 coincides with the graphical valuations mentioned above, and level $m$ captures all monotone set functions (on $m$ variables). There are no known positive results that connect between approximation ratios of the maximum welfare problem and the level of the valuation functions in the hypergraph hierarchy.}

\vsdelete{We conjecture that every function in level $k$ of the hypergraph hierarchy is in $\MOPH$-$O(k^2)$. For $k=2$ we confirm this conjecture (and even prove stronger results), and for every $k$ we prove the conjecture in the special case of symmetric functions. For the other direction, there are functions in $\MOPH$-1 that are not contained in any fixed level (independent of $m$) of the hypergraph hierarchy, and moreover, they cannot even be approximated within a fixed level of the hypergraph hierarchy.}}


The $\XOS$ class introduced in~\cite{Lehmann2001} is based on ``OR" and ``XOR" operations previously introduced in~\cite{Sandholm}, but with the restriction that ``OR" operations are applied on single items. Removing this restriction and allowing operations on bundles, one obtains an $\XOS$ hierarchy parameterized by the size of the largest bundle. (The $\XOS$ hierarchy was suggested to us in personal communication by Noam Nisan.) While $\XOS$-$k$ and $\MOPH$-$k$ coincide for $k=1$, $\MOPH$-$k$ is strictly better than $\XOS$-$k$. It can be shown that $\XOS$-$k$ is contained in $\MOPH$-$k$, whereas there are functions in $\MOPH$-2 that cannot even be approximated in $\XOS$-$k$ for any constant $k$. 
(The proof uses the $\PH$-2 function used in Section~\ref{app:incompatible}).

\paragraph{Welfare approximation}


The complement-free valuations introduced in \cite{Lehmann2001} have also been studied in the game-theoretic context of equilibria in simultaneous single-item auctions.  It has been established that the Bayes-Nash and Correlated price of anarchy of this auction format, with a first-price payment rule, are at most $e/(e-1)$ in the $\XOS$ case~\cite{Syrgkanis2012a} and at most~2 in the subadditive case~\cite{Feldman2013}. For the second-price payment rule, these bounds become 2 for $\XOS$~\cite{Christodoulou2008} and $4$ for subadditive~\cite{Feldman2013}. These results build upon a line of work studying non-truthful item auctions for complement-free valuations \cite{Bikhchandani1999,Bhawalkar2011,Christodoulou2008,Hassidim2011,PaesLeme2012}.

Equilibrium analysis of non-truthful auctions has been applied to several other settings such as position auctions \cite{Caragiannis2012}, bandwidth allocation mechanisms \cite{Syrgkanis2013}, combinatorial auctions with greedy allocation rules \cite{Lucier2010}, and multi-unit auctions \cite{Markakis2012}. Recently, Roughgarden \cite{Roughgarden2012}, Syrgkanis \cite{Syrgkanis2012} and Syrgkanis and Tardos \cite{Syrgkanis2013} proposed general smoothness frameworks for bounding the social welfare obtained in equilibria of (compositions of) mechanisms. 

\section{Summary of Results} \label{sec:results}

We obtain results on the expressiveness power of the $\MOPH$ hierarchy and show applications of it for approximating social welfare in combinatorial auctions. 
%
\paragraph{Expressiveness.}
The first theorem establishes the expressiveness power of $\MOPH$. For some limitations, see Appendix~\ref{sec:limitations}.

\begin{theorem} \label{t:expressiveness} \label{thm:expressiveness}
The $\MOPH$ hierarchy captures many existing hierarchies, as follows:
\begin{enumerate}
\setlength{\itemsep}{2pt}
\setlength{\parskip}{0pt}
\setlength{\parsep}{0pt}
\item By definition, $\MOPH$-$1$ is equivalent to the class $\XOS$ (defined by Lehmann, Lehmann and Nisan \cite{Lehmann2001}) and every function that has a positive hypergraph representation of rank $k$ (defined by Abraham et al. \cite{Abraham2012}, see Section~\ref{sec:prelims}) is in $\MOPH$-$k$.

\item Every monotone graphical valuation (defined by Conitzer et al. \cite{Conitzer2005}) is in $\MOPH$-2. Furthermore, every monotone function with positive rank~2 is $\MOPH$-2 (see Sections~\ref{sec:flow-proof} and \ref{sec:neg-hyperedge-proof}).



\item Every monotone function that has a hypergraph representation with positive rank $k$ and laminar negative hyperedges (with arbitrary rank) is in $\MOPH$-$k$ (See Section~\ref{sec:laminar-proof}).


\item Every monotone function that has supermodular degree $k$ (defined by Feige and Izsak \cite{Feige2013}) is in $\MOPH$-$(k+1)$ (See Section~\ref{sec:supermodular}).

\end{enumerate}
\end{theorem}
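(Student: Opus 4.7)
The unifying plan is: for each item of the theorem and each target subset $S \subseteq M$, construct a PH-$k$ function $v_S$ (with $k$ as claimed) satisfying $v_S(S) = v(S)$ and $v_S(T) \le v(T)$ for every $T \subseteq M$; then $v = \max_S v_S$ is the desired MOPH-$k$ representation. Item~1 is immediate: $\MOPH$-$1$ equals $\XOS$ by definition (since $\PH$-$1$ is the class of additive functions), and $\PH$-$k \subseteq \MOPH$-$k$ is witnessed by a singleton index set.

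For Item~2 (monotone $v$ with positive rank $2$), the witness $v_S$ must be supported on positive hyperedges of size at most $2$, even though $v$ may have negative hyperedges of arbitrary rank. I would construct $v_S$ via a linear program: maximize $\sum_{i \in S} x_i + \sum_{\{i,j\} \subseteq S} x_{ij}$ over non-negative $(x_i, x_{ij})$ subject to the dominance constraints $\sum_{i \in T} x_i + \sum_{\{i,j\} \subseteq T} x_{ij} \le v(T)$ for every $T \subseteq S$. By LP duality the LP value equals $\min_{y \ge 0} \sum_T v(T)\, y_T$ over fractional covers of the singletons and pairs of $S$. To show this minimum equals $v(S)$, I would convert the dual into a transportation-style flow in which positive rank-$2$ hyperedges supply and negative hyperedges demand, with monotonicity of $v$ forcing feasibility; the restriction to positive rank~$2$ is exactly what keeps the supply side rank-$2$, so that $v_S$ remains PH-$2$. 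Graphical valuations are a special case, since they have hyperedges only of rank $\le 2$ on both signs.

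For Item~3, I would induct on the laminar family of negative hyperedges: if $v$ has no negatives it is already PH-$k$; otherwise an innermost negative hyperedge $N$ can be absorbed by redistributing its weight across positive rank-$k$ hyperedges that it touches, using monotonicity of $v$ to guarantee enough positive budget and laminarity to keep the redistribution local. For Item~4, I would use the characterization of supermodular degree: each marginal $v(j \mid T)$ depends only on $T \cap D^+(j)$ where $|D^+(j)| \le k$. Given a target $S$ and an ordering $\pi$ of $S$, the chain decomposition $v(S) = \sum_{j \in S} v(j \mid S_{<j}^{\pi})$ splits $v(S)$ into marginals each involving $j$ together with at most $k$ complementary predecessors; each marginal becomes a single non-negative hyperedge of size $\le k+1$ in $v_S$.

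The hardest step in this plan is verifying \emph{global} dominance $v_S(T) \le v(T)$ for $T$ not contained in $S$. For Item~2 this is where a max-flow / min-cut certificate (rather than only weak LP duality on $2^S$) appears necessary, to rule out the possibility that redistributed positive weights exceed $v$ on a set that overlaps $S$ partially. Analogous care is required for Items~3 and~4 when the laminar absorption or the marginal-based hyperedges are evaluated on subsets different from $S$; in each case I expect the argument to hinge on monotonicity of $v$ combined with the structural hypothesis (positive rank $k$ with laminar negatives, or bounded supermodular degree) to propagate dominance from $S$ to all of $2^M$.
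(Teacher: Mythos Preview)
Your overall plan---build a $\PH$-$k$ ``lower envelope'' $v_S$ for each $S$---is exactly the paper's PLE technique. But the step you flag as hardest is in fact the trivial one: once $v_S$ is supported on hyperedges contained in $S$, for any $T\subseteq M$ you have $v_S(T)=v_S(T\cap S)\le v(T\cap S)\le v(T)$, the last inequality by monotonicity of $v$. So global dominance reduces immediately to dominance on $T\subseteq S$; no extra certificate is needed for $T\not\subseteq S$, and your worry about ``redistributed positive weights exceeding $v$ on a set that overlaps $S$ partially'' cannot arise. This reduction is precisely the content of the paper's PLE characterization of $\MOPH$-$k$.

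The real work is always dominance for $T\subseteq S$, and there your Item~2 sketch is too vague to count as a proof. The paper does not solve a one-shot LP. It orders the elements of $S$ and builds the PLE incrementally: when $u_i$ is added, it sets up a bipartite flow network between the negative hyperedges through $u_i$ and the positive rank-$\le 2$ hyperedges through $u_i$, with an arc from $e^-$ to $e^+$ iff $e^+\subset e^-$; monotonicity of $v$ (applied to the marginal of $u_i$ over the union of any subcollection of negatives) is exactly what forces every $s$--$t$ cut to have value at least the total negative weight, so a saturating flow exists and gives the charging. Your LP is essentially the primal--dual pair from the paper's $k$-fractionally-subadditive characterization, and to finish that route you would still have to show that every $2$-cover $y$ of $S$ satisfies $\sum_T y_T\, v(T)\ge v(S)$---which is the theorem itself, not a step toward it. Items~3 and~4 in your plan do match the paper closely: for~3 the paper removes an innermost negative hyperedge and uses laminarity plus non-negativity (monotonicity is not needed) to argue the residual stays non-negative; for~4 the paper uses exactly your ordering-plus-marginals construction, though you still owe the short argument that $\sum_{j:\,e_j\subseteq T} v(j\mid S_{<j}) \le v(T)$, which goes through because for each such $j$ all of its supermodular dependents lie in $e_j\subseteq T$, so adding the elements of $S_{<j}\setminus T$ can only decrease the marginal of $j$.
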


We further establish that the converse direction does not hold, even in an approximate sense, and conclude that
the $\MOPH$-$k$ hierarchy is \emph{strictly} more expressive than many existing hierarchies.
Specifically, we show that $\MOPH$-$1$ and $\MOPH$-$2$ contain functions that cannot be \emph{approximated} by functions in low levels of other hierarchies.
We first state the notion of approximation and then the proposition, whose proof is deferred to Appendix~\ref{app:incompatible}.

\begin{defn} \label{def:approximate}
\label{def:approximation}
We say that a set function $f$ approximates a set function $g$ within a ratio of $\rho \ge 1$ if there are $\rho_1$ and $\rho_2$ such that for every set $S$
$\rho_1 \le \frac{f(S)}{g(S)} \le \rho_2$, and $\frac{\rho_2}{\rho_1} \le \rho$.
\end{defn}




\begin{prop} \label{prop:incompatible}
There are functions in very low levels of the $\MOPH$ hierarchy that cannot be approximated well even at relatively high levels of other hierarchies, as follows:
\begin{enumerate}
\setlength{\itemsep}{2pt}
\setlength{\parskip}{0pt}
\setlength{\parsep}{0pt}
\item There exists a submodular function (i.e., supermodular degree $0$, $\MOPH$-$1$) such that
\begin{enumerate}
\setlength{\itemsep}{1pt}
\setlength{\parskip}{0pt}
\setlength{\parsep}{0pt}
\item A graphical function cannot approximate it within a ratio better than $\Omega(m)$.
\item A positive hypergraph function cannot approximate it within a ratio better than $m$.
\item A hypergraph function of rank $k$ (both negative and positive) cannot approximate it within a ratio better than $\Omega(\frac{m}{k^2})$, for every $k$.
\end{enumerate}
\item There exists a $\PH$-2 function (i.e., $\MOPH$-2) such that every function of supermodular degree $d$ cannot approximate it within a ratio better than $\Omega(m/d)$.
\end{enumerate}
\end{prop}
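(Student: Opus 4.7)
The plan is to exhibit explicit witness functions at low levels of the $\MOPH$ hierarchy and argue separately for each approximating family, combining symmetrization with structural arguments.

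For part~1, I take $f(S) = \min(|S|,1)$, the unit-demand (``OR'') function, which is submodular and hence in $\MOPH$-$1$ and of supermodular degree~$0$. Since $f$ is symmetric and equals $1$ on every nonempty set, any approximating $g$ can be symmetrized by averaging over $\mathfrak{S}_m$ without worsening the ratio and without leaving the relevant class (graphical, positive hypergraph, or rank-$k$ hypergraph, each closed under convex combinations of hyperedge weights). After normalization the symmetrized $\bar g$ satisfies $\bar g(S) \in [1,\rho]$ for every nonempty $S$, where $\rho=\rho_2/\rho_1$ is the approximation ratio we wish to lower bound. Part~(b) is then immediate: a positive-hypergraph representation $h\ge 0$ gives $g(M) = \sum_{T\subseteq M} h(T) \ge \sum_i h(\{i\}) = \sum_i g(\{i\}) \ge m$, whereas $g(M)\le\rho$, forcing $\rho\ge m$.

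For part~1(a), $\bar g(s) = a s + b\binom{s}{2}$ is a quadratic vanishing at $s=0$ with $\bar g(s)\in[1,\rho]$ for $s=1,\dots,m$; case analysis on the sign of $b$ (and whether the parabola's vertex lies in $[1,m]$) shows the extremal quadratic is proportional to $s(m+1-s)$, peaking at $s=(m+1)/2$ with value $\Theta(m)$, which combined with the upper bound $\rho$ forces $\rho=\Omega(m)$. For part~1(c), $\bar g$ is a degree-$k$ polynomial with $\bar g(0)=0$ and $\bar g(s)\in[1,\rho]$ for $s=1,\dots,m$; I appeal to a Markov/Chebyshev-type extremal inequality: a degree-$k$ polynomial bounded by $\rho$ on $[0,m]$ has derivative bounded by $O(k^2\rho/m)$, while $\bar g$ must rise from $0$ to at least $1$ across $[0,1]$, yielding $\rho=\Omega(m/k^2)$.

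For part~2, take the $\PH$-$2$ function $v(S) = |S| + \binom{|S|}{2}$ corresponding to $K_m$ with unit weights on every vertex and every edge. Suppose $g$ has supermodular degree $d$, let $D(i)$ denote the supermodular dependency set of $i$ (so $|D(i)|\le d$), and suppose $g$ approximates $v$ within ratio $\rho=\rho_2/\rho_1$. Form the \emph{supermodular-dependency graph} $H$ on $M$ by placing an edge between $i$ and $j$ whenever $j\in D(i)$ or $i\in D(j)$; this graph has at most $md$ edges, so by Tur\'an's theorem it contains an independent set $I$ with $|I|\ge m/(2d+1)$. The key structural lemma is that $g$ is \emph{subadditive} on any $H$-independent set: expanding $g(I) = \sum_{i\in I} g(i\mid I_{<i})$ in an arbitrary ordering, each element preceding $i$ in $I$ lies outside $D(i)$, so by the very definition of supermodular dependency adding such an element can only weakly decrease the marginal of $i$; hence $g(i\mid I_{<i})\le g(\{i\})$, giving $g(I)\le\sum_{i\in I}g(\{i\})\le|I|/\rho_1$. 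Combining with $g(I)\ge v(I)/\rho_2 = |I|(|I|+1)/(2\rho_2)$ yields $\rho\ge(|I|+1)/2 = \Omega(m/d)$.

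The technically most delicate step is part~1(c): applying Markov's inequality rigorously requires that $\bar g$ be bounded on the full continuous interval $[0,m]$, whereas the hypothesis gives boundedness only at the integer points $s=1,\dots,m$. The cleanest resolutions are either to exhibit an explicit extremal polynomial (a scaled Chebyshev polynomial of the first kind, adapted to $[1,m]$ together with the vanishing condition $\bar g(0)=0$) and verify that it achieves the tight ratio $\Theta(m/k^2)$, or to invoke a discrete Markov-type inequality for polynomials bounded on a grid, which suffices in the regime $k\ll\sqrt{m}$ where the claim is non-trivial. The remaining steps reduce to one-line inequalities (part~1(b) and the final step of part~2), routine two-variable optimization (part~1(a)), or the counting-plus-Tur\'an argument on the dependency graph (part~2).
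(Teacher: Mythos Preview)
Your proposal is correct and essentially parallels the paper's argument. For part~1 you use the same witness $f_1(S)=\min(|S|,1)$, the same symmetrization, and the same Markov-type polynomial bound for rank-$k$ approximants (the paper handles the integer-versus-continuous issue exactly as you suggest, by bootstrapping Markov's inequality between adjacent integers); the paper simply reads off part~(a) as the $k=2$ case rather than doing your separate quadratic optimization. For part~2 the paper uses $f_2(S)=\binom{|S|}{2}$ (no vertex weights), finds a set $T$ of size $\ge m/(d+1)$ on which the approximant is submodular via a greedy pick-and-discard (slightly sharper than your Tur\'an bound of $m/(2d+1)$), and then compares on a \emph{pair} $T'\subset T$ rather than on singletons---which is why the paper can afford zero vertex weights while you need $v(\{i\})=1$ to make the singleton comparison meaningful. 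Both routes give $\rho\ge\Omega(|T|)=\Omega(m/d)$; the only cosmetic slip in your writeup is that the inequalities $g(\{i\})\le 1/\rho_1$ and $g(I)\ge v(I)/\rho_2$ have $\rho_1,\rho_2$ swapped relative to the paper's Definition~\ref{def:approximate}, though the final bound $\rho\ge(|I|+1)/2$ comes out the same either way.
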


\paragraph{Applications.}
With the new hierarchy at hand, we are in a position to revisit fundamental algorithmic and game-theoretic problems about welfare maximization in combinatorial auctions.
Our results are reassuring: we obtain good approximation ratios for settings with valuations that lie in low levels of the $\MOPH$-$k$ hierarchy.
From the algorithmic perspective, we provide a polynomial time $(k+1)$-approximation algorithm for the welfare maximization problem when valuations are $\MOPH$-$k$ (assuming access to demand oracles).

\begin{theorem} \label{t:algorithmic}
If all agents have $\MOPH$-$k$ valuations, then there exists an algorithm that gives $k+1$ approximation to the optimal social welfare.
This algorithm runs in polynomial time given an access to demand oracles for the valuations.
\end{theorem}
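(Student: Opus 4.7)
The plan is to combine the standard configuration LP for welfare maximization with a rounding scheme that exploits the $k$-bounded hyperedge structure guaranteed by the $\MOPH$-$k$ assumption, losing at most a factor of $k+1$. First, I would formulate the configuration LP with variables $x_{i,S} \geq 0$ maximizing $\sum_{i,S} x_{i,S} v_i(S)$ subject to the item constraints $\sum_i \sum_{S \ni j} x_{i,S} \leq 1$ for every $j \in M$ and the agent constraints $\sum_S x_{i,S} \leq 1$ for every $i \in N$. Denote its optimum by $W^* \geq \mathrm{OPT}$. Since a demand oracle for $v_i$ exactly implements the separation oracle for the LP dual, the LP is solvable in polynomial time via ellipsoid (as in the standard XOS treatment), and the optimal fractional solution can be taken to have support of size polynomial in $n$ and $m$.

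Next, for each $(i, S)$ in the LP support, the $\MOPH$-$k$ assumption guarantees a $\PH$-$k$ representation $h_{i,S} \geq 0$ with $v_i(S) = \sum_{T \subseteq S,\, |T| \leq k} h_{i,S}(T)$; I would fix one such certificate per support point. Substituting, the LP objective decomposes into a weighted sum over fractional $k$-bounded hyperedges,
\[
W^* \;=\; \sum_{i,S} x_{i,S} \sum_{T \subseteq S,\,|T|\leq k} h_{i,S}(T),
\]
which converts the rounding into a weighted packing problem: for each hyperedge $T$ to contribute $h_{i,S}(T)$ to agent $i$, all items of $T$ must be allocated to $i$ in the integral solution.

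For the rounding, I would apply a greedy $k$-set-packing-style scheme on the extracted hyperedges: process triples $(i, S, T)$ with $T \subseteq S$, $|T| \leq k$ in decreasing order of weight $h_{i,S}(T)$, committing $(i, T)$ (i.e., binding the items of $T$ to agent $i$) whenever no item of $T$ has already been committed to a different agent. A charging argument then shows that each committed hyperedge $(i, T)$ blocks at most $k$ potential candidates from the LP support (one per item of $T$ for competing agents), each of weight no larger than the committed one by the greedy ordering; combining this with the LP feasibility (total item mass $\leq 1$) yields an integral welfare of at least $W^*/(k+1) \geq \mathrm{OPT}/(k+1)$.

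The main obstacle lies in correctly executing the charging while respecting both the LP weights (which are spread over many bundles $S$ potentially sharing the same hyperedge $T$) and the agent-indexed disjointness structure. Achieving the tight $(k+1)$ factor, as opposed to the $(k+1)^2$ or $\Theta(k)$ bounds produced by naive independent scaling and sampling, will likely require either a deterministic pipage-style rounding or a randomized contention-resolution scheme that amortizes the per-item blocking cost across the at most $k$ potential conflicts per committed hyperedge, each bounded by the LP-respecting greedy ordering.
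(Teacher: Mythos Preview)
Your first step (the configuration LP, solved via demand queries as separation oracle for the dual) is exactly what the paper does. The gap is in the rounding.

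The greedy you describe is not well-defined as an approximation argument. You sort triples $(i,S,T)$ by $h_{i,S}(T)$ alone, ignoring the fractional weight $x_{i,S}$; a triple with large $h_{i,S}(T)$ but tiny $x_{i,S}$ may be committed first, contributing $h_{i,S}(T)$ to the integral solution while blocking triples whose aggregate LP contribution $\sum x_{i',S'}\,h_{i',S'}(T')$ is far larger. The claim that a committed $T$ ``blocks at most $k$ candidates, one per item'' is simply false: each item of $T$ blocks \emph{every} competing triple that touches it, and the LP item constraint bounds only the total \emph{fractional mass} on each item, a quantity your greedy never consults. You yourself flag this in the final paragraph, so the proposal is really a plan with its key step left open. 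A secondary issue is that your algorithm requires explicit access to a $\PH$-$k$ certificate $h_{i,S}$ for each $(i,S)$ in the support; the $\MOPH$-$k$ hypothesis does not provide such a query.

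The paper's rounding avoids both problems and is considerably simpler. After solving the LP, independently sample a tentative bundle $S'_i$ for each agent from the distribution $(x_{i,S})_S$, then process agents in a uniformly random order, giving each agent the items of $S'_i$ not already taken by earlier agents. This rounding is oblivious: it never touches any $\PH$-$k$ decomposition. The decomposition enters only in the analysis. Fix an agent $i$ with tentative set $S'_i$ and let $h$ be a $\PH$-$k$ lower envelope for $S'_i$. For a hyperedge $e$ of $h$ with $|e|=r\le k$, the LP item constraints give that the expected number of other agents whose tentative sets intersect $e$ is at most $r$. Conditioned on there being $\ell$ such competitors, agent $i$ precedes all of them in the random order with probability exactly $1/(\ell+1)$; by convexity of $x\mapsto 1/(1+x)$ and Jensen, the unconditional probability that $i$ keeps all of $e$ is at least $1/(r+1)\ge 1/(k+1)$. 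Linearity of expectation over hyperedges and agents then gives expected welfare at least $W^*/(k+1)$.
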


Our approximation algorithm first solves the configuration linear program for welfare maximization introduced by \cite{DNS}.
As is well known, solving this LP can be done in polynomial time using demand queries.
We then round the solution to the LP so as to get an integer solution. Our rounding technique is oblivious and does not require access to demand queries.
By analyzing the integrality gap, it is established that our rounding technique is nearly best possible.

The second setting we consider is a simultaneous first-price auction --- where each of the $m$ items is sold via a separate single-item auction.
We quantify the welfare loss in this simple auction when bidders have $\MOPH$-$k$ valuations.
We find that the price of anarchy is at most $2k$.

\begin{theorem} \label{t:poa}
For simultaneous first price auctions, when bidders have $\MOPH$-$k$ valuations, both the correlated price of anarchy and the Bayes-Nash price of anarchy are at most $2k$.
\end{theorem}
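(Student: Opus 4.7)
\begin{proofsk}
The proof will use the smoothness framework of Syrgkanis and Tardos~\cite{Syrgkanis2013}. For simultaneous first-price auctions, it suffices to exhibit, for every valuation profile $\vals$ and every player $i$, a (possibly randomized) deviation $b_i^*$ depending only on $v_i$, such that for any $b_{-i}$
\[
u_i(b_i^*, b_{-i}) \;\ge\; \lambda\, v_i(O_i^*) \;-\; \mu \sum_{j \in O_i^*} p_j(b),
\]
where $O^* = (O_1^*, \ldots, O_n^*)$ is an optimal allocation and $p_j(b)$ is the first-price of item $j$ at the profile $b$. Achieving $\lambda = 1/2$ and $\mu = k$ yields POA $\le 2k$ for both correlated and Bayes–Nash equilibria by the standard chain: summing over $i$, invoking the equilibrium inequality, using $SW(b) = \sum_i u_i(b) + \sum_j p_j(b)$ and $\sum_j p_j(b) \le SW(b)$.

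The deviation is constructed from the hierarchy definition. Since $v_i \in \MOPH$-$k$, pick the $\PH$-$k$ function $\tilde v_i$ that attains the maximum at $O_i^*$; thus $\tilde v_i(O_i^*) = v_i(O_i^*)$ and $\tilde v_i \le v_i$ pointwise. Let $h_i \ge 0$ be its positive hypergraph representation, so $\tilde v_i(S) = \sum_{T \subseteq S,\, |T| \le k} h_i(T)$. Set
\[
b_{ij}^* \;=\; \tfrac{1}{2} \sum_{T \ni j,\ T \subseteq O_i^*} \frac{h_i(T)}{|T|} \quad \text{for } j \in O_i^*,
\]
and $b_{ij}^* = 0$ otherwise. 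The total bid is then $\tfrac12\, \tilde v_i(O_i^*)$, so if player $i$ wins all of $O_i^*$ they capture utility at least $\tfrac12 v_i(O_i^*)$, generalizing the XOS deviation ($k=1$) exactly.

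The work is then to analyze the utility when the deviation wins only $W_i^* \subseteq O_i^*$. Decomposing hyperedge-by-hyperedge, each $T \subseteq W_i^*$ yields a net gain of $h_i(T)/2$ (full value minus half contribution), while each $T \not\subseteq W_i^*$ yields $0$ value but charges only the fraction $|T \cap W_i^*|/|T| \le 1$ of its bid mass. Rearranging around $\tilde v_i(O_i^*) = \sum_T h_i(T)$, this collapses to
\[
u_i(b_i^*, b_{-i}) \;\ge\; \tfrac{1}{2} v_i(O_i^*) \;-\; \sum_{T \subseteq O_i^*,\, T \not\subseteq W_i^*} h_i(T).
\]
The final step, and the main technical hurdle, is to charge this ``hyperedge shortfall'' to the equilibrium item prices. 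Each such $T$ contains some $j_T \in T \setminus W_i^*$, and for such a lost item the deviating bid $b_{ij_T}^*$ was outbid by the opposition, giving $b_{ij_T}^* \le p_{j_T}(b)$. Distributing each $h_i(T)$ evenly across items of $T$ and using $|T| \le k$, one obtains $\sum_{T \not\subseteq W_i^*} h_i(T) \le k \sum_{j \in O_i^*} p_j(b)$, which gives $\mu = k$ (the key place where the rank bound $k$ of the positive hypergraph enters). Together with $\lambda = 1/2$ this establishes $(1/2, k)$-smoothness and hence the POA bound $2k$.

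The main obstacle I anticipate is the sharp charging step: a naive accounting gives only $\mu = 2k$, and getting $\mu = k$ requires exploiting that each hyperedge of size $|T|$ distributes its weight across all $|T|$ of its items in the bid $b_{ij}^*$, not just the lost ones. Handling Bayes–Nash and correlated equilibria on top of the pointwise inequality is then standard: take expectations over types and opponent randomness and apply the Syrgkanis–Tardos extension theorem.
\end{proofsk}
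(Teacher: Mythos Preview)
Your smoothness framing is the right setup, and the reduction from $\MOPH$-$k$ to $\PH$-$k$ via the supporting lower envelope is exactly what the paper does. But the specific deterministic deviation you propose does \emph{not} achieve $(\tfrac12,k)$-smoothness, and the ``sharper charging'' you describe cannot be carried out; this is a genuine gap, not just a loose accounting.

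Here is a concrete counterexample. Take a single hyperedge $T=\{1,\dots,k\}$ with $h_i(T)=1$, so $v_i(O_i^*)=1$ and your deviation bids $b_{ij}^*=\tfrac{1}{2k}$ on every item. Let the opponents in the profile $a$ (with $a_i\equiv 0$) bid $0$ on items $1,\dots,k-1$ and $\tfrac{1}{2k}+\epsilon$ on item $k$. Then under $(b_i^*,a_{-i})$ player $i$ wins $\{1,\dots,k-1\}$, gets value $0$, and pays $\tfrac{k-1}{2k}$, so $u_i=-\tfrac{k-1}{2k}$. Meanwhile $\sum_j p_j(a)=\tfrac{1}{2k}+\epsilon$, so your target lower bound $\tfrac12 v_i(O_i^*)-k\sum_j p_j(a)=-k\epsilon$. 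For $k\ge 2$ and small $\epsilon$ this fails badly. In fact this example forces $\mu\ge 2k-1$ for your bid, so the best you can get this way is a PoA around $4k$, matching the ``na\"{\i}ve'' $4k$ argument in the paper's main text. The problem is structural: with a fixed half-value bid, losing a single item of a size-$k$ hyperedge leaves you paying for $k-1$ items with zero value, and only one item's price is available to absorb that loss.

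The paper obtains $2k$ by a different route (Appendix~\ref{sec:smoothness}): it does not pick one bid level, but uses the \emph{randomized} smoothness deviation of each single-item first-price auction (which is $(\beta(1-e^{-1/\beta}),\beta)$-smooth for every $\beta$), applies a union bound over the at-most-$k$ items of each hyperedge to write $\Pr[\text{win all of }S]\ge 1-\sum_{j\in S}(1-\Pr[\text{win }j])$, and regroups by item to reduce to single-item smoothness with per-item value $\sum_{S\ni j}w_i(S)$. This yields $(1-(1-\lambda)k,\mu)$-smoothness for the composition; optimizing over $\beta$ then gives the $2k$ bound. The free parameter $\beta$ is essential---it is what lets you push the per-item winning probability close enough to $1$ that the union bound over $k$ items costs only a factor $\approx\tfrac12$ rather than the $\tfrac{1}{2k}$ your fixed bid incurs.
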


Our proof technique extends the analysis for complement-free valuations in \cite{Feldman2013} and the smoothness framework introduced in \cite{Syrgkanis2013} to settings with complementarities.
We also establish an almost matching lower bound (See section \ref{sec:poa-lb}).

\begin{theorem} \label{thm:poa-lb}
There exists an instance of a simultaneous first price auction with single minded bidders in $\MOPH$-$k$ in which the price of anarchy is $\Omega(k)$.
\end{theorem}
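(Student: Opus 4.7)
The plan is to construct an explicit instance together with a symmetric mixed Nash equilibrium in which the expected welfare is only an $O(1)$ fraction of the optimum. Take $m=k^{2}$ items arranged as a $k\times k$ grid and $2k$ single-minded bidders: for each $i\in[k]$, row bidder $R_i$ demands the $k$ items of row $i$ at value $1$, and column bidder $C_j$ demands the $k$ items of column $j$ at value $1$. Each valuation is a positive hypergraph function with a single hyperedge of size $k$, so each bidder lies in $\PH$-$k\subseteq \MOPH$-$k$. The optimum welfare is exactly $k$: assigning each row to its row bidder achieves $k$, and no allocation can satisfy more than $k$ of these bidders simultaneously because any $R_i$ and any $C_j$ contend for the common item $(i,j)$.

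The equilibrium I would exhibit is as follows. Let $F(x)=(kx)^{1/(k-1)}$ be a CDF supported on $[0,1/k]$. Each bidder independently draws a scalar $X\sim F$ and bids $X$ on every item of her bundle. On each item $(i,j)$ the only two bidders with positive willingness are $R_i$ and $C_j$, so $R_i$ wins her entire row iff her draw $X_i$ exceeds all $k$ column draws $Y_1,\dots,Y_k$. Because these $k{+}1$ variables are i.i.d.\ from $F$, this event has probability $\mathbb E[F(X_i)^{k}]=\int_{0}^{1}u^{k}\,du=\tfrac{1}{k+1}$. By symmetry the same holds for each $C_j$, so the expected welfare is $2k/(k+1)\le 2$, and the resulting price of anarchy is at least $k\cdot(k+1)/(2k)=(k+1)/2=\Omega(k)$.

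To verify the Nash property for $R_i$ (and symmetrically for each $C_j$), I would rule out three kinds of deviations. A pure bid $x\in[0,1/k]$ yields expected utility $F(x)^{k}-kxF(x)\equiv 0$, which is precisely the indifference equation used to derive $F$. A pure bid $x>1/k$ forces a sure win at overpay, giving utility $1-kx<0$. The interesting case is an asymmetric bid $(x_{1},\dots,x_{k})$ across the $k$ items of the bundle, whose expected utility equals $\prod_{j} F(x_{j})-\sum_{j} x_{j} F(x_{j})$. Substituting $F$ and setting $y_{j}=x_{j}^{1/(k-1)}$ rewrites this as $k^{1/(k-1)}\bigl(k\prod_{j}y_{j}-\sum_{j}y_{j}^{k}\bigr)$, which is nonpositive by the AM--GM inequality $\sum_{j}y_{j}^{k}\ge k\prod_{j}y_{j}$, with equality iff all $y_{j}$ are equal. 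Hence the maximum utility available to $R_i$ is $0$ and is realised by the proposed mixed strategy, so the profile is a mixed NE.

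The main technical obstacle is the asymmetric-deviation case: a bidder might hope that concentrating her bid on only a subset of her bundle's items would shrink her expected payment more than it shrinks her winning probability. The AM--GM step is exactly what is needed to quantify this trade-off and show that the joint winning probability of the full bundle decays at least as fast as the payment savings, so symmetric bidding over the bundle is always optimal. Once this is in place, the deviations inside and outside the support of $F$ are handled by the explicit form of $F$, and the welfare computation above yields the desired $\Omega(k)$ lower bound.
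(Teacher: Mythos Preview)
Your construction is correct and gives a genuinely different (and in several respects cleaner) proof than the one in the paper.

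The paper builds its instance from a projective plane of order $k{-}1$: there are $k(k-1)+1$ single-minded bidders whose $k$-element bundles pairwise intersect, and the symmetric mixed equilibrium uses the CDF $t\mapsto (kt)^{1/(k-1)^2}$. Because all bundles intersect, the optimum in a single plane is only~$1$; to obtain $\Omega(k)$ the paper takes $k$ disjoint copies of the plane and adds $k(k-1)+1$ auxiliary ``transversal'' bidders who want one item from each copy. At the equilibrium the auxiliary bidders strictly prefer to abstain, so the welfare is $k$ while $\opt = k(k-1)+1$, yielding PoA $=k-1+\tfrac{1}{k}$. The verification that asymmetric per-item bids are suboptimal is done there by setting partial derivatives to zero and solving.

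Your grid instance avoids both the projective-plane machinery (hence works for every $k\ge 2$, not only when $k-1$ is a prime power) and the auxiliary layer: the optimum $k$ is attained directly by the rows, and the symmetric equilibrium with $F(x)=(kx)^{1/(k-1)}$ already has welfare $2k/(k+1)$, giving PoA $=(k+1)/2$. Your AM--GM argument for asymmetric deviations is tighter and more transparent than the derivative computation in the paper. The trade-off is that the paper's constant is roughly twice yours ($k-1+\tfrac{1}{k}$ versus $(k+1)/2$), but for the stated $\Omega(k)$ bound this is immaterial. One small point worth stating explicitly in a write-up: deviations with some $x_j>1/k$ are dominated by replacing that coordinate with $1/k$, after which your AM--GM inequality applies on $[0,1/k]^k$.
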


\paragraph{Remarks.}
Most of our expressiveness results showing that a certain function belongs to $\MOPH$-$k$ are established by showing that the function satisfies a certain requirement that we refer to as the Positive Lower Envelope (PLE) condition.
We also observe that, together with monotonicity, this requirement becomes a sufficient and necessary condition for membership in $\MOPH$-$k$.
This observation motivates the definition of a new hierarchy, referred to as $\PLE$.  The class $\PLE$-$k$ contains $\MOPH$-$k$, but also includes non-monotone functions.
While monotonicity is a standard assumption in the context of combinatorial auctions, $\PLE$ can be applicable outside the scope of combinatorial auctions.
We defer to Appendix \ref{sec:non-monotone}, an analysis of the expressiveness of $\PLE$ functions and the observation that our approximation results extend to non-monotone $\PLE$ functions.

\paragraph{Extensions.}
One of the main open problems suggested by this work is the relation between hypergraph valuations of rank $k$ and $\MOPH$-$k$ valuations. We make the following conjecture:
\begin{conj} \label{conj:quadratic}
Every hypergraph function with rank $k$ (positive or negative) is in $\MOPH$-$O(k^2)$.
\end{conj}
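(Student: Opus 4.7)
The plan is to establish the conjecture by verifying the Positive Lower Envelope (PLE) condition at level $O(k^2)$. By the equivalence noted in the Remarks (monotonicity together with PLE-$\ell$ characterizes $\MOPH$-$\ell$), it suffices to produce, for every target set $S \subseteq M$, a PH function $v_S$ of rank $O(k^2)$ with $v_S(S) = v(S)$ and $v_S \le v$ pointwise. Since only hyperedges contained in $S$ contribute to $v_S(S)$, I would restrict $v_S$ to be supported on subsets of $S$; this also means the dominance condition needs checking only on $T \subseteq S$. The building blocks are then the positive and negative hyperedges of $h$ lying inside $S$, each of size at most $k$, which must be re-aggregated into non-negative hyperedges of size at most $O(k^2)$.

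The main strategy I would try is a \emph{merging/absorption} construction: iterate over the negative hyperedges $f \subseteq S$, and for each, form a single positive PH hyperedge supported on $f \cup e_1 \cup \cdots \cup e_r$ whose weight equals $h(f) + \sum_j h(e_j) \ge 0$, where $e_1,\dots,e_r$ are positive hyperedges selected to locally compensate $f$. Monotonicity of $v$ around $f$ guarantees that such a compensating family exists, and since each $|e_j| \le k$ and $|f| \le k$, one expects (in the right regime) to need only $O(k)$ positive hyperedges to compensate, yielding a merged hyperedge of rank $O(k^2)$. By construction the total weight is preserved, so $v_S(S) = v(S)$. A natural starting point is to attempt to generalize the flow argument used for positive rank~$2$ (Section~\ref{sec:flow-proof}), reinterpreting it as a transportation problem between positive and negative hyperedges, with rank-$k$ pieces playing the role of edges.

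The principal obstacle is pointwise dominance on intermediate sets. After merging $f$ with $e_1,\dots,e_r$ into a single positive hyperedge on $U = f \cup \bigcup_j e_j$, the merged edge contributes its whole weight only to $v_S(T)$ for $T \supseteq U$, whereas in $v$ the pieces $e_j$ already contributed to $v(T)$ as soon as $T \supseteq e_j$. Consequently, on subsets $T$ that contain some but not all of $\{e_1,\dots,e_r,f\}$, the merged PH function may exceed or undercount $v(T)$, and ensuring $v_S(T) \le v(T)$ universally is a delicate global constraint that does not follow from local choices alone. I expect the proof will need either (i) an LP-duality argument on the polytope of PH-$O(k^2)$ functions dominated by $v$, showing that its maximum on the indicator of $S$ equals $v(S)$ via a separating-hyperplane / complementary-slackness structure, or (ii) a canonical structural decomposition of rank-$k$ hypergraph functions into $O(k^2)$-local positive pieces, potentially guided by a laminar-like refinement that generalizes the laminar negative-hyperedge result. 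The combinatorial interaction between negative hyperedges of intermediate sizes and positive hyperedges is substantially more intricate than in the rank-2 setting, and I believe bridging that gap is where genuinely new ideas will be required.
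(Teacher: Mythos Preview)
The statement you are attempting to prove is stated in the paper as a \emph{conjecture}, not a theorem; the paper does not contain a proof of it and explicitly presents it as open. The only case the paper settles is the symmetric one (Theorem~\ref{t:extensions}), and even there the argument is entirely different from what you outline: for a symmetric rank-$r$ function the paper takes the canonical PLE that spreads $f(M)$ uniformly over all size-$R$ subsets with $R=3r^2$, reduces (via a minimality argument) to checking the single inequality $f(m-1)\ge \tfrac{m-R}{m}f(m)$, and then proves this by viewing $f$ as a degree-$r$ univariate polynomial and invoking Markov's inequality on polynomial derivatives. No merging of hyperedges, flow, or LP duality is used; the proof is analytic and relies essentially on symmetry to collapse the set function to a one-variable polynomial.

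Your proposal is not a proof but a plan, and you yourself flag the gap: after merging a negative hyperedge $f$ with compensating positive hyperedges $e_1,\dots,e_r$ into a single hyperedge on $U=f\cup\bigcup_j e_j$, the pointwise-dominance condition $v_S(T)\le v(T)$ on intermediate $T$ is not controlled. This is exactly the obstruction that makes the conjecture hard, and nothing in your sketch addresses it; the flow argument from the positive-rank-2 case works precisely because positive edges have size at most~2, so any set containing a negative hyperedge automatically contains all positive edges charged to it, a structural feature that evaporates for general rank. Your claim that ``one expects to need only $O(k)$ positive hyperedges to compensate'' is also unsupported: monotonicity gives a value inequality, not a bound on the number of positive hyperedges needed, and in general exponentially many small positive hyperedges may be required to offset a single negative one. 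In short, you have correctly identified the PLE route and the central difficulty, but have not supplied any mechanism to overcome it; the paper does not either, which is why the statement remains a conjecture.
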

We make partial progress toward the proof of this conjecture,  by confirming it for the case of symmetric functions. For non-symmetric, observe that for the case of laminar negative hyperedges, we show an even stronger statement in item (3) of Theorem \ref{t:expressiveness}.
\begin{theorem} \label{t:extensions}\label{t:PleForAnyMonotoneSymmetricFunction}
Every monotone symmetric hypergraph function with rank $k$ (positive or negative) is in $\MOPH$-$O(k^2)$ (See Sections \ref{sec:symmetric-proof} and \ref{app-symmetric}).
\end{theorem}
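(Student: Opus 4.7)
The plan is to invoke the Positive Lower Envelope (PLE) characterization alluded to in the preceding remarks: a monotone set function $v$ lies in $\MOPH$-$d$ iff for every $S$ there exists a PH-$d$ function $g_S\le v$ pointwise with $g_S(S)=v(S)$. Since $f$ is symmetric, M\"obius inversion gives $f(T)=p(|T|)$ for a polynomial $p(t)=\sum_{j=1}^k c_j\binom{t}{j}$ of degree $\leq k$. For each $S$ with $|S|=s$, I would take the (non-symmetric) ansatz $g_S(T):=h_s(|T\cap S|)$ with $h_s(x)=\sum_{j=1}^{d}\alpha_j^{(s)}\binom{x}{j}$ and $\alpha_j^{(s)}\geq 0$; such a $g_S$ is automatically a PH-$d$ function, whose hyperedges all lie inside $S$. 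Because $h_s$ is nondecreasing in $x$, the pointwise requirement $g_S(T)\leq f(|T|)$ splits into two cases: for $|T|\geq s$, $g_S(T)\leq h_s(s)=p(s)\leq p(|T|)$ is automatic by monotonicity of $p$; for $|T|\leq s$, it reduces to $h_s(a)\leq p(a)$ on $\{0,\ldots,s\}$, together with the matching condition $h_s(s)=p(s)$.

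This reduces the theorem to a univariate claim: for every $s$ and every monotone non-negative polynomial $p$ of degree $\leq k$, find $h_s(x)=\sum_{j=1}^d\alpha_j\binom{x}{j}$ with $\alpha_j\geq 0$ and $d=O(k^2)$ such that $h_s(s)=p(s)$ and $h_s(a)\leq p(a)$ on $\{0,\ldots,s\}$. The small-$s$ case is immediate: if $s\leq O(k^2)$, take $h_s(x)=p(s)\binom{x}{s}$, which has degree $s$, vanishes on $\{0,\ldots,s-1\}$, and matches $p(s)$ at $x=s$. For $s>O(k^2)$, I would phrase the existence question as the LP $\max\sum_j\alpha_j\binom{s}{j}$ subject to $\sum_j\alpha_j\binom{t}{j}\leq p(t)$ for $0\le t\le s$ and $\alpha_j\geq 0$, and show via LP duality that its optimum equals $p(s)$. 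The dual statement is that every non-negative measure $y$ on $\{0,\ldots,s\}$ satisfying $\sum_t y_t\binom{t}{j}\geq\binom{s}{j}$ for $j=1,\ldots,d$ must also satisfy $\sum_t y_t\,p(t)\geq p(s)$.

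The main obstacle is obtaining precisely the quadratic bound $d=O(k^2)$. Splitting $p=p^+-p^-$ into positive and negative binomial parts (each a PH-$k$ polynomial), the moment inequalities immediately yield $\sum_t y_t\,p^+(t)\geq p^+(s)$ but give the wrong direction for $p^-$. Simple examples such as $p(t)=t-\binom{t}{2}+\binom{t}{3}$ at $s=6$ show that the naive choice $d=k$, realized by the single-binomial candidate $h_s(x)=p(s)\binom{x}{k}/\binom{s}{k}$, can fail to lie under $p$, so one really does need higher-order moment constraints. To close the $O(k^2)$ gap I would attempt either (i) an induction on $k$ in which each negative binomial coefficient of $p$ is absorbed using $O(k)$ additional moment constraints, so that $O(k^2)$ constraints accumulate after $k$ inductive steps, or (ii) an explicit construction of $h_s$ as a non-negative combination of $O(k)$ binomials $\binom{x}{d_i}$ with exponents $d_i=O(k)$, tuned using divided differences of $p$ at $s$ so that the positivity and the tight matching condition at $s$ hold simultaneously. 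The technical details, presumably involving polynomial identities and careful use of the monotone non-negative structure of $p$, would be deferred to the companion appendix referenced in the theorem statement.
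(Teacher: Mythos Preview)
Your reduction is correct and matches the paper: via the PLE characterization and symmetry, one must find for each $s$ a polynomial $h_s(x)=\sum_{j\le d}\alpha_j\binom{x}{j}$ with $\alpha_j\ge 0$, $h_s(s)=p(s)$, and $h_s(t)\le p(t)$ for $0\le t\le s$, where $p$ is the degree-$k$ polynomial with $f(S)=p(|S|)$. The small-$s$ case is handled exactly as you say.

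The gap is that you abandon the single-binomial candidate too early. Your example only rules out $R=k$; the paper takes precisely $h_s(x)=p(s)\binom{x}{R}/\binom{s}{R}$ with $R=3k^2$ and proves it suffices. A preliminary reduction (take a minimal counterexample and rescale) shows that if this $h_s$ ever overshoots $p$, one may assume the violation occurs at $t=s-1$, so the goal becomes $p(s-1)\ge\frac{s-R}{s}\,p(s)$. The tool you are missing is Markov's inequality on derivatives of polynomials: for a degree-$k$ real polynomial $F$ on $[0,s]$ one has $\max|F'|\le\frac{2k^2}{s}\max|F|$. Applied to the real interpolant $F$ of $p$, together with a short argument bounding $\max_{[0,s]}|F|\le\tfrac{3}{2}p(s)$ (the factor $\tfrac32$ accounts for possibly non-integer maximizers), this yields $p(s-1)\ge p(s)-\tfrac{3k^2}{s}p(s)$, which is exactly the needed inequality for $R=3k^2$. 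The quadratic $k^2$ in Markov's bound is the source of the $O(k^2)$.

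Your option (ii), building $h_s$ from binomials with exponents $d_i=O(k)$, cannot work: the resulting rank would be $\max_i d_i=O(k)$, but the paper exhibits symmetric rank-$3$ and rank-$4$ functions admitting no PLE of rank $3$ (resp.\ $5$), and conjectures the tight bound $\lceil k/2\rceil\lceil(k+1)/2\rceil$, which is genuinely quadratic. Option (i) is too vague to evaluate. The LP-duality viewpoint is not misguided---the paper in fact uses the dual LP to certify the exact constants for $k=3,4$---but for the generic $O(k^2)$ upper bound the single-binomial construction plus Markov's inequality is both simpler and the argument that actually goes through.
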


For symmetric functions, we conjecture a more precise bound of $\left\lceil \frac{k}{2} \right\rceil \left\lceil \frac{k+1}{2} \right\rceil$, suggested by a computer-aided simulation based on a non-trivial LP formulation.
For the special cases of symmetric functions of ranks $k=3$ and $4$, we show that they are in $\MOPH$-4 and $\MOPH$-6, respectively, and that this is tight.
We use an LP formulation whose optimal solution is the worst symmetric function possible
for a given rank, and its value corresponds to the level of this worst function in the $\MOPH$ hierarchy.
We bound the value of this LP, by using LP duality. (see Section~\ref{app:smallRanks} for proofs)


\section{Proofs} \label{sec:proofs}

In this section we include a part of our proofs. Due to space constraints, we defer the other proofs to the appendix.

\subsection{Positive lower envelope technique}
Proving that a particular set function $f:2^M\to \nonnegR$ can be expressed as $\MOPH$-$k$ requires constructing a set of $\PH$-$k$ valuations that constitutes the index set $\Ell$ over which the maximum is taken.
In what follows we present a canonical way of constructing the set $\Ell$.
The idea is to create a $\PH$-$k$ function for every subset $S$ of the ground set $M$.
The collection of these $\PH$-$k$ functions, one for each subset, constitutes a valid $\MOPH$-$k$ representation if they adhere to the following condition.

\begin{defn} [Positive Lower Envelope (PLE)] \label{def:PLE}
Let $f: 2^M \to \nonnegR$ be a monotone set function.
A positive lower envelope (PLE) of $f$ is any positive hypergraph function $g$ such that:
\begin{compactitem}
\item $g(M) = f(M)$.
\item For any $S \subseteq M$, $g(S) \leq f(S)$. [No overestimate]
\end{compactitem}
\end{defn}

Before presenting the characterization, we need the following definition.
A function $f:2^M \to \nonnegR$ {\em restricted to a subset $S$}, $S \subseteq M$, is a function $f_S: 2^S \subseteq \nonnegR$ with $f_S(S') = f(S')$ for every $S' \subseteq S$.

\begin{prop} [A characterization of $\MOPH$] \label{p:PLEtoMOH} \label{lem:lower-envelope}\label{p:MOHtoPLE}
A function $f$ is in $\MOPH$-$k$ if and only if it is monotone and $f_S$ admits a lower envelope of rank $k$ for every set $S \subseteq M$. 
\end{prop}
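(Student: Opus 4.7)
The plan is to prove the two directions separately, using the hypergraph representation throughout, and exploiting monotonicity of $f$ only in the backward direction.

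For the forward direction, assume $f \in \MOPH$-$k$, so $f(S) = \max_{\ell \in \Ell} v_\ell(S)$ where each $v_\ell$ is $\PH$-$k$. Monotonicity of $f$ is immediate because each $v_\ell$ is monotone (being a non-negative sum of indicator-style hyperedge contributions) and the pointwise maximum of monotone functions is monotone. To produce a lower envelope of $f_S$ for a fixed $S \subseteq M$, I pick any $\ell^\star \in \argmax_{\ell} v_\ell(S)$ and let $g$ be the restriction of $v_{\ell^\star}$ to $S$, obtained by discarding all hyperedges of $v_{\ell^\star}$ not contained in $S$. Then $g$ is a positive hypergraph function of rank at most $k$, $g(S) = v_{\ell^\star}(S) = f(S)$, and for every $S' \subseteq S$ we have $g(S') = v_{\ell^\star}(S') \le \max_\ell v_\ell(S') = f(S')$. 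Thus $g$ is a PLE of $f_S$ of rank $k$.

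For the backward direction, assume $f$ is monotone and, for each $S \subseteq M$, fix a PLE $g_S$ of $f_S$ of rank $k$. I extend $g_S$ to a $\PH$-$k$ function $\tilde g_S$ on the full ground set $2^M$ by viewing its hyperedges (which lie inside $S$) as hyperedges of $M$ with unchanged weights; concretely, $\tilde g_S(T) = g_S(T \cap S)$ for every $T \subseteq M$. Each $\tilde g_S$ is $\PH$-$k$. I claim $f(T) = \max_{S \subseteq M} \tilde g_S(T)$ for every $T \subseteq M$, which expresses $f$ as a max over $\PH$-$k$ functions and hence places it in $\MOPH$-$k$. Taking $S = T$ yields $\tilde g_T(T) = g_T(T) = f_T(T) = f(T)$, giving the $\ge$ direction. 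Conversely, for any $S$, the no-overestimate property of the PLE combined with monotonicity of $f$ gives $\tilde g_S(T) = g_S(T \cap S) \le f_S(T \cap S) = f(T \cap S) \le f(T)$, proving $\le$.

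The two directions together yield the characterization. The only mildly delicate step is the extension $g_S \mapsto \tilde g_S$ in the backward direction: the reason it stays $\PH$-$k$ is that we are literally reusing the same non-negative hyperedge weights on the same hyperedges, just viewed inside the larger ground set, and the reason the maximum equals $f(T)$ rather than overshooting is that monotonicity of $f$ lets us absorb the gap between $f(T \cap S)$ and $f(T)$. I do not foresee this as a serious obstacle, but it is the single place where the monotonicity hypothesis is used, and it is what prevents the same argument from extending verbatim to non-monotone functions (which instead motivates the $\PLE$ hierarchy discussed later in the paper).
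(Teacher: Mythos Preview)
Your proof is correct and follows essentially the same approach as the paper: in both directions you use the same constructions (selecting a maximizing $\PH$-$k$ clause to serve as the PLE of $f_S$, and conversely taking the collection $\{g_S\}_{S\subseteq M}$ as the $\MOPH$-$k$ representation), and your use of monotonicity to bound $g_S(T\cap S)\le f(T\cap S)\le f(T)$ matches the paper's argument exactly. The only cosmetic differences are that you prove monotonicity of $f$ directly (via monotonicity of each $v_\ell$) rather than by contradiction, and you make the restriction/extension between ground sets $S$ and $M$ more explicit than the paper does.
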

%

\subsection{Some proofs of expressiveness}\label{sec:symmetric-proof}\label{sec:flow-proof} 

We provide here a sketch of the proof of the second assertion in Theorem~\ref{thm:expressiveness}, namely that any monotone set function of positive rank~2 is in $\MOPH$-2


\begin{proof}
Let $v:2^M \to \nonnegR$ be a monotone set function of positive rank~2 and let $G_v$ be the hypergraph representation of $v$, where the vertices of $G_v$ are the items of $M$. By Proposition~\ref{lem:lower-envelope} it suffices to show that every $S \subseteq M$ has a positive lower envelope of rank~2 (abbreviated as PLE-2). Consider an arbitrary $S \subseteq M$. We construct a positive lower envelope for $S$ by induction.
Starting with an empty set of vertices, we iteratively add the vertices of $S$, one at a time. Let $u_i \in S$ denote the vertex added at iteration $i$, and $S_i \subseteq S$ denote the resulting subset. The inductive invariant that we maintain is that each $S_i$ has a PLE-2. The base case of the induction is $S_1$, and there the inductive hypothesis holds because $v$ is nonnegative. We now prove the inductive step. Namely, we assume that $S_{i-1}$ has a PLE-2, and prove the same for $S_i$.

Let $N_i$ ($P_i$, respectively) denote the set of negative (positive, respectively) hyperedges in $G_v$ that contain $u_i$ and are contained in $S_i$. (As $v$ has positive rank~2, the hyperedges in $P_i$ have rank at most~2.)
Consider an auxiliary bipartite graph $H$ with members of $N_i$ as one set of vertices, members of $P_i$ as the other set of vertices, and edges between $e \in N_i$ and $e' \in P_i$ iff $e' \subset e$ (namely, the negative hyperedge contains all items of the positive hyperedge). These edges have infinite capacities. Add two auxiliary vertices, $s$ connected to each member of $N_i$ by an edge of capacity equal to the (absolute value of the) weight of the corresponding hyperedge in $G_v$, and $t$ connected to each member of $P_i$ by an edge of capacity equal to the weight of the corresponding hyperedge in $G_v$. We claim that there is a flow $F$ from $s$ to $t$ saturating all edges of $s$. This follows from the max flow min cut theorem, together with the facts that $v$ is monotone and all positive hyperedges have rank at most~2. Given this claim (whose proof appears in Appendix~\ref{sec:neg-hyperedge-proof}), we add to the PLE-2 of $S_{i-1}$ only the the members of $P_i$ (hence positive edges of rank at most~2), but each of them with a weight reduced by the amount of flow that goes from it to $t$ (according to the saturated flow $F$). The flow $F$ gives us a way of charging every negative hyperedge that is discarded against a reduction in weight of positive hyperedges contained in it, and this implies (see details in Appendix~\ref{sec:neg-hyperedge-proof}) that the result is indeed a PLE-2 for $S_i$.
\end{proof}


Next we provide a sketch of the proof of Theorem \ref{t:extensions}, namely that every monotone symmetric hypergraph-$r$ function is in $\MOPH$-$O(r^2)$.

\begin{proof}
Let $f$ be a normalized monotone symmetric set function of rank $r$, and let $h$ be its hypergraph representation. Consider the following normalized monotone symmetric set function $g$ defined by its positive hypergraph representation $p$: $p(S) = f(U)/{n \choose R}$ if $|S| = R$, and $p(S) = 0$ otherwise, for $R=3r^2$. As all four functions $f,h,g,p$ are symmetric, we shall change notation and replace $f(S)$ by $f(|S|)$. As special cases of this notation, $f(U)$ is replaced by $f(n)$, and $f(\phi)$ is replaced by $f(0)$.

We claim that $g$ is a lower envelope for $f$. There are three conditions to check. Two of them trivially hold, namely, $g(0) = f(0) = 0$, and $g(n) = {n\choose R}p(R) = f(n)$. The remaining condition requires that $g(k) \le f(k)$ for every $1 \le k \le n-1$. This trivially holds for $k < R$ because in this case $g(k) = 0$, whereas $f(k) \ge 0$. Hence the main content of our proof is to establish the inequality $g(k) \le f(k)$ for every $R \le k \le n-1$.

The proof proceeds by means of contradiction: suppose there is some $f$ that serves as a negative example, namely, that for this $f$ there is $R \le k \le n-1$ for which $g(k) > f(k)$. We can show that if such an example exists then there exists one where $k=n-1$ (Details appear in Appendix~\ref{app-symmetric}).  Thus it suffices to show that $g(n-1) = {n-1 \choose R}f(n)/{n \choose R} = \frac{n-R}{n}f(n)\leq f(n-1)$ for any $f$ that is hypergraph-$r$.

We will consider the (not necessarily monotone) degree $r$ polynomial $F(x)= \sum_{i=1}^r {x \choose i}h(i)$, that matches $f(x)$ at integral points $\{0,\ldots,n\}$.  Let $M = \max_{0 \le x \le n} |F(x)|$ and let $0 \le y \le n$ be such that $|F(y)| = M$. By Markov's inequality regarding bounds on derivatives of polynomials~\cite{markov}, we can show that $\max_{0 \le x \le n} |F'(x)| \le \frac{2r^2}{n}M$.  If $y$ is an integer then monotonicity of $f$ (and hence of $F$ on integer points) implies that $M = f(n)$. However, $y$ need not be integer. In that case $i < y < i+1$ for some $0 \le i \le n-1$. Let $m = \max[|F(i)|,|F(i+1)|]$. Then
$M \le m + \frac{1}{2}\max_{i \le x \le i+1}[|F'(x)|] 
\le f(n) + \frac{r^2}{n}M$. As $n \ge R \ge 3r^2$ we obtain that $M \le 3f(n)/2$. On the other hand, $f(n-1)=F(n-1)\ge f(n) - \max_{0 \le x \le n}F'(x) \ge f(n) - \frac{2r^2}{n}M \ge f(n) - \frac{3r^2}{n}f(n)$. Since $R = 3r^2$ we have that $f(n - 1) \ge (1 - \frac{R}{n})f(n) = g(n-1)$, as desired.
\end{proof} 

\Omit{In this section we demonstrate the usefulness of our hierarchy for solving welfare maximization problems, with or without incentive constraints.
Section~\ref{sec:algorithmic} focuses on the purely algorithmic welfare maximization problem, while Section~\ref{sec:poa} considers welfare approximation by strategic agents using the price of anarchy framework.
In both settings, our approach bears similarities to previous techniques, but requires an innovative modification to account for complementarities as expressed by our $\MOPH$-$k$ hierarchy.}


\subsection{Algorithmic Welfare Maximization (Proof of Theorem~\ref{t:algorithmic})}
\label{sec:algorithmic}

In this section we consider the purely algorithmic problem, ignoring incentive constraints.
While constant factor approximations exist for welfare maximization in the absence of complementarities (see~\cite{DNS,Feige2006}), it is not hard to see that complementarities can make the welfare problem as hard as independent set and hence inapproximable to within an almost linear factor. Our hierarchy offers a linear degradation of the approximation as a function of the degree of complementarity.
At a high level, our algorithm works as follows: define the {\it configuration linear program} (LP) (introduced in \cite{DNS}) by introducing a variable $x_{i,S}$ for every agent $i$ and subset of items $S$. Given the valuation function $v_i$ of each agent $i$, the {\it configuration LP} is:
\begin{align}\label{LP:configuration}
\text{maximize}\quad \sum\limits_{i,S} x_{i,S} \cdot v_i(S)& \\\nonumber
\text{s.t.} \quad\quad\sum\limits_{S} x_{i,S} \leq 1 \quad&\forall i\in N \\\nonumber
\sum\limits_{i,S \mid j \in S} x_{i,S} \leq 1\quad&\forall j\in M \text{ and } 
x_{i,S} \geq 0 \quad\forall i\in N, S\subseteq M \nonumber
\end{align}
The first set of constraints guarantees that no agent is allocated more than one set and the second set of constraints guarantees that no item belongs to more than one set.  This LP provides an upper bound on the optimal welfare.  To find a solution that approximates the optimal welfare, we first solve this LP (through duality using {\it demand queries}\footnote{For definition of 
demand queries see Appendix~\ref{app:queries} and for discussion on representation of set functions Appendix~\ref{app:poly-representation}.})
and then round it (see below).
\Omit{
\paragraph{Solving the LP.} The configuration LP has exponentially many variables but only polynomially many constraints. As shown in~\cite{NS2006}, it can be solved through duality using {\it demand queries}\footnote{We recall the definitions of value and demand queries in Appendix~\ref{app:queries}.}.
For further discussion regarding representation of set functions, see Appendix~\ref{app:poly-representation}.
}
\paragraph{Rounding the LP.} The rounding proceeds in two steps.  First each agent $i$ is assigned a tentative set $S'_i$ according to the probability distribution induced by the variables $x_{i,S}$.  Note that this tentative allocation has the same expected welfare as the LP.  However, it may be infeasible as agents' sets might overlap.  We must resolve these contentions. Several approaches for doing this when there are no complementarities were proposed and analyzed in~\cite{DNS,Feige2006}.
However, these approaches will fail badly in our setting, due to the existence of complementarities.
Instead, we resolve contention 
using the following technique:
We generate a uniformly random permutation $\pi$ over the agents and then at each step $t$ for $1\leq t\leq n$, assign agent $i=\pi(t)$ items $S_i=S'_i\setminus\{\cup_{i'=\pi(1)}^{\pi(t-1)}S_{i'}\}$, i.e., those items in his tentative set that have not already been allocated.

The following proposition shows that this way of contention resolution guarantees a loss of at most a factor of $k+1$, when all agents have $\MOPH$-$k$ valuations.

\begin{prop}
\label{p:rounding}
If all agents have $\MOPH$-$k$ valuations, then
given a solution to the configuration LP, the above random permutation rounding algorithm
produces (in expectation) an allocation that approximates the maximum welfare within a ratio no worse than $k+1$.
\end{prop}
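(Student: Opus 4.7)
The plan is to express each agent's realized value through a $\PH$-$k$ witness from its $\MOPH$-$k$ representation, and then to analyze the survival of each hyperedge under the random permutation. I will first pick, for each agent $i$ and each realized tentative set $S'_i$, a $\PH$-$k$ function $v_{i,\ell^\star}$ in the $\MOPH$-$k$ representation of $v_i$ that attains $v_{i,\ell^\star}(S'_i) = v_i(S'_i)$. Since $v_i = \max_\ell v_{i,\ell}$, this gives the pointwise lower bound $v_i(S_i) \geq v_{i,\ell^\star}(S_i)$. Writing $v_{i,\ell^\star}(S_i) = \sum_{T \subseteq S_i} h_{i,\ell^\star}(T)$ with $h_{i,\ell^\star}(T) \geq 0$ and supported on sets of size at most $k$, the task reduces to lower-bounding the survival probability $\Pr[T \subseteq S_i \mid T \subseteq S'_i]$ for each candidate hyperedge $T$ of size at most $k$.

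For the survival bound I will condition on the other agents' tentative sets (drawn independently) and on the random permutation $\pi$. Defining the conflict set $A = \{i' \neq i : S'_{i'} \cap T \neq \emptyset\}$, the hyperedge $T$ survives exactly when $i$ precedes every agent in $A$ in $\pi$, an event of conditional probability $1/(|A|+1)$. A union bound over $j \in T$, together with the item constraint $\sum_{i',\, S \ni j} x_{i',S} \leq 1$ of the configuration LP, yields
\begin{equation*}
\E[|A|] \;\leq\; \sum_{j \in T}\sum_{i' \neq i}\sum_{S \ni j} x_{i',S} \;\leq\; |T| \;\leq\; k.
\end{equation*}
Since $x \mapsto 1/(x+1)$ is convex on $[0,\infty)$, Jensen's inequality delivers the key inequality
\begin{equation*}
\Pr[T \subseteq S_i \mid T \subseteq S'_i] \;=\; \E\!\left[\frac{1}{|A|+1}\right] \;\geq\; \frac{1}{\E[|A|]+1} \;\geq\; \frac{1}{k+1}.
\end{equation*}

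Putting the pieces together by linearity of expectation and using non-negativity of the hyperedge weights,
\begin{equation*}
\E[v_i(S_i)] \;\geq\; \E[v_{i,\ell^\star}(S_i)] \;\geq\; \frac{1}{k+1}\,\E[v_{i,\ell^\star}(S'_i)] \;=\; \frac{1}{k+1}\,\E[v_i(S'_i)] \;=\; \frac{1}{k+1}\sum_S x_{i,S}\,v_i(S),
\end{equation*}
and summing over agents gives a $(k+1)$-approximation against the LP value, and hence against the optimum. The main obstacle is that the hyperedge-survival argument crucially requires non-negative hyperedge weights of bounded size: for a signed hypergraph representation, losing a large negative hyperedge while keeping a smaller positive sub-hyperedge could inflate the estimated value of $S_i$ and break the pointwise lower bound. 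This is precisely why I pass to the $\MOPH$-$k$ witness $v_{i,\ell^\star}$ (which is $\PH$-$k$, hence has only non-negative hyperedges of rank at most $k$) rather than working directly with the native hypergraph representation of $v_i$; it is also what pins the approximation to $k+1$ rather than to the overall rank of $v_i$.
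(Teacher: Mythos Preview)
Your proof is correct and follows essentially the same approach as the paper's: both pick a $\PH$-$k$ witness for the tentative set, argue that each hyperedge of size $r\le k$ survives with probability $1/(|A|+1)$ where $|A|$ is the number of competing agents, bound $\E[|A|]\le r$ via the item constraints of the LP, and apply convexity (Jensen) to get a $1/(k+1)$ survival bound. Your write-up is somewhat more explicit about the conditioning on $S'_i$ and the dependence of $\ell^\star$ on it, but the argument is the same.
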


\begin{proof} 
First, note that the solution is feasible, since every item is allocated at most once.
We upper bound the approximation guarantee.
The sum of values of tentative sets preserve, in expectation, the value of the optimal welfare
returned by the  configuration LP. Consider an arbitrary agent and his tentative set $T$. This set attained its value according to some positive hypergraph $H$ that has no edges of rank larger than $k$. Consider an arbitrary edge of $H$ contained in $T$, and let $r\leq k$ be its rank. We claim that its expected contribution (expectation taken over the random choices of the other agents and the random permutation) towards the final welfare is at least $1/(r+1)$ of its value. The expected number of other agents who compete on items from this edge is at most $r$ (by summing up the fractional values of sets that contain items from this edge). Given that there are $\ell$ other competing agents, the agent gets all items from the edge with probability exactly $1/(\ell + 1)$. As the expectation of $\ell$ is at most $r$, the expectation of $1/(\ell+1)$ is at least $1/(r+1)$ (by convexity) and hence at least $1/(k+1)$ as the valuation function is $\MOPH$-$k$. The proof follows from linearity of expectation.
\end{proof}

It is known that there is an integrality gap of $k-1+\frac{1}{k}$ for hypergraph matching in $k$-uniform hypergraphs (see Chan and Lau~\cite{CL12} and references therein). These instances are special cases of welfare maximization with $\MOPH$-$k$ valuations.
Hence, our rounding technique in Proposition~\ref{p:rounding} is nearly best possible.
For completeness, we show this integrality gap for our setting in Appendix~\ref{app:integralityGap}.
Recall also that even for the case of single-minded bidders with sets of size up to $k$, it is $\NP$-hard to approximate the welfare maximization problem to a better factor than $\Omega(\frac{\ln k}{k})$.\footnote{This hardness is obtained by an approximation preserving reduction from $k$-set packing given in~\cite{LCS99}, together with a hardness result of \cite{HSS}.}

\subsection{Welfare Maximization at Equilibrium (Proof of Theorem~\ref{t:poa})}
\label{sec:poa}

In this section we study welfare guarantees at equilibrium of the simultaneous item auction, when all agents have $\MOPH$-$k$ valuations.
In a simultaneous item (first-price) auction, every bidder $i \in [n]$ simultaneously submits a bid $b_{ij} \geq 0$ for every item $j \in [m]$.
We write $b_i=(b_{i1},\ldots,b_{im})$ for the vector of bids of bidder $i$, and $b=(b_1, \ldots, b_n)$ for the bid profile of all bidders.
Every item is allocated to the bidder who submits the highest bid on it (breaking ties arbitrarily), and the winning bidder pays his bid.
We let $X_i(b)$ denote the bundle allocated to bidder $i$ under bid profile $b$, and we write $X(b)=(X_1(b),\ldots,X_n(b))$ for the allocation vector under bids $b$. When clear in the context, we omit $b$ and write $X$ for the allocation.
A bidders's utility is assumed to be quasi-linear; i.e., $u_i(b;v_i)  = v_i(X_i(b))-\sum_{j\in X_i(b)}b_{ij}$.
Given a valuation profile $\vals$, the welfare of an allocation $X$ is the sum of the agents' valuations: $SW(X;\vals)=\sum_{i=1}^{n}\val_i(X_i)$.
We also denote the welfare-maximizing allocation for values $\vals$ by $X^*(\vals)$ (or $X^*$ in short), and its social welfare by $\opt(\vals)$.


In this part we assume that the valuations of the players are common knowledge; see end of the section for incomplete information extensions.
A Nash equilibrium is a profile of (possibly random) bids $B=(B_1,\ldots,B_n)$, such that no player's utility can increase by deviating to some other bid.
To quantify the inefficiency that can arise in a simultaneous item auction, we will use the \emph{price of anarchy} (PoA) measure, which is the maximum ratio (over all valuation profiles) of the optimal welfare over the welfare obtained at any Nash equilibrium.
\begin{equation}
POA = \max_{\vals\ ; \ B \colon B \text{~is mixed NE}} \frac{\opt(\vals)}{\E_{\rbid\sim B}[SW(X(\rbid))]}.
\end{equation}

\paragraph{Bounding the PoA.} We provide a proof that the PoA of the auction is at most $4k$, when bidders have $\MOPH$-$k$ valuations. Let $B$ be a randomized bid profile that constitutes a Nash equilibrium under valuations $\vals$. For each item $j\in [m]$, let $P_j=\max_{j} B_{ij}$ be the price of item $j$; $P_j$ is a random variable induced by the bid profile.
Consider what would happen if bidder $i$ deviated from $B$ and instead bid
$b_{ij}^*=2k\cdot \E[P_j]$ on all the items $j\in X_i^*$ and $0$ on the other items. By Markov's inequality bidder $i$ wins each item $j\in X_i^*$ with probability at least $1-\frac{1}{2k}$.
Let $v_i^*$ be the $\PH$-$k$ lower envelope with respect to set $X_i^*$ (recall bidders have $\MOPH$-$k$ valuations).
Then, $v_i(X_i^*)=v_i^*(X_i^*)$ and, for any $X_i \subseteq X_i^*$, $v_i(X_i)\geq v_i^*(X_i)$.
Since $v_i^*$ is a $\PH$-$k$ valuation, each hyperedge of $v_i^*$ has size at most $k$; it then follows byu the union bound that bidder $i$ wins all items in any such hyperedge with probability at least $\frac{1}{2}$. Therefore, the value that the player derives from this deviation is at least $\frac{1}{2}v_i^*(X_i^*)=\frac{1}{2}v_i(X_i^*)$. Hence, his utility from the deviation is at least $\frac{1}{2}v_i(X_i^*) - 2k \cdot\sum_{j\in X_i^*} \E[P_j]$. By the Nash condition his utility at equilibrium is at least this high.

By summing the above bound over all bidders $i$, the sum of bidders' utilities at equilibrium is at least $\frac{1}{2}\opt(v) - 2k \cdot\sum_{j\in [m]} \E[P_j]$. 
Adding the expression for the total utility, we get:
\begin{equation*}
\E[SW(B;\vals)]-\sum_{j\in [m]} \E[P_j] = \sum_i \E[u_i(B; v_i)] \geq \frac{1}{2}\opt(v) - 2k \cdot\sum_{j\in [m]} \E[P_j].
\end{equation*}
Since every player has the option to drop out of the auction, his expected utility must be non-negative.
Therefore, the expected total payment at equilibrium is bounded above by the welfare at equilibrium.
Substituting this in the above inequality gives that $2k\cdot \E[SW(B;\vals)]\geq \frac{1}{2} \opt(v)$, which establishes an upper bound of $4k$ on the PoA, as desired.

\paragraph{Extensions.} 
In Appendix \ref{sec:smoothness} we provide a tighter bound of $2k$ on the PoA via the smoothness framework \cite{Syrgkanis2013}.
By using the smoothness framework, this bound immediately extends to Bayesian settings, where the valuations of the players are unknown but are drawn from commonly known independent distributions. The bound also extends to outcomes derived from no-regret (learning) behavior in repeated games.

\Omit{
\begin{theorem}\label{thm:non-smooth-poa}
For simultaneous first price auctions when bidders have $\MOPH$-$k$ valuations, both the correlated price of anarchy and the Bayes-Nash price of anarchy are at most $2k$.
\end{theorem}
}

\bigskip\noindent

{\bf Acknowledgements.}
We are grateful to Noam Nisan for valuable discussions. We used computer-assisted simulations in order to find feasible solutions for the dual LPs. Specifically, we used Microsoft .NET together with Gurobi (\cite{gurobi}) and IBM CPLEX and also Wolfram Mathematica. We thank Ofer Bartal and Matan Karklinsky for their help in this matter.
The work of Uriel Feige and Rani Izsak is supported in part by the Israel Science Foundation
(grant No. 621/12) and by the I-CORE Program of the Planning and Budgeting Committee and The Israel Science Foundation (grant No. 4/11).
The work of Michal Feldman is supported in part by the European Research Council under the European Union's Seventh Framework Programme (FP7/2007-2013) / ERC grant agreement number 337122.
The work of Vasilis Syrgkanis is partly supported by a Simons Graduate Fellowship in Theoretical Computer Science and part was done while an intern at Microsoft Research.

\bibliographystyle{plain}
\bibliography{complements-bib}

\appendix



\section{Fractional Covers and $\MOPH$ Valuations}\label{sec:fractionally-subadditive}
In this section we show an equivalence between Maximum over Hypergraph Valuations and a generalization of fractionally subadditive valuations,
where fractional covers are defined as covering every possible subset of size at most $k$ of nodes.

More formally, consider a set $S$ of items and let $\ESk{S}$ be all the subsets of $S$ of size at most $k$. 
We say that a collection of sets $\mathcal{T}\subseteq 2^S$ together with a weight $a_T$ for each $T\in \T$ is a fractional cover of all the subsets of size at most $k$ ($k$-fractional cover) of $S$ if $\forall s\in \ESk{S}:\sum_{T\in \T:T\supseteq s}a_T\geq 1$.
A valuation $v:2^M \to \nonnegR$ is $k$-fractionally subadditive if for every $S \subseteq M$ and every $k$-fractional cover $(a_{\mathcal{T}},\mathcal{T})$ of $S$, we have $v(S) \leq \sum_{T\in \T}a_T\cdot v(T)$.

\begin{theorem}
The class of monotone $k$-fractionally subadditive valuations is equivalent to the class of $\MOPH$-$k$ valuations.
\end{theorem}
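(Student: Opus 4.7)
The proof splits naturally into two directions, and both reduce to manipulations of the hypergraph representation. Throughout I will use the characterization of $\MOPH$-$k$ via positive lower envelopes (Proposition~\ref{p:PLEtoMOH}): a monotone $v$ is $\MOPH$-$k$ iff, for every $S \subseteq M$, the restricted function $v_S$ admits a rank-$k$ PLE.

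\textbf{From $\MOPH$-$k$ to $k$-fractionally subadditive (easy direction).} Fix $v = \max_{\ell \in \Ell} v_\ell$ with each $v_\ell$ a $\PH$-$k$ function with non-negative hypergraph weights $h_\ell$. Pick any $S$ and any $k$-fractional cover $(a_T, \mathcal{T})$ of $S$, and let $\ell^*$ attain $v(S) = v_{\ell^*}(S)$. Since each hyperedge $s \subseteq S$ with $|s|\le k$ satisfies $\sum_{T \in \mathcal{T}: T \supseteq s} a_T \ge 1$ and $h_{\ell^*}(s)\ge 0$, I would write
\[
v(S) \;=\; \sum_{s\subseteq S,\,|s|\le k} h_{\ell^*}(s) \;\le\; \sum_{s\subseteq S,\,|s|\le k} h_{\ell^*}(s)\sum_{T\in\mathcal{T}:T\supseteq s} a_T \;=\; \sum_{T\in\mathcal{T}} a_T\, v_{\ell^*}(T) \;\le\; \sum_{T\in\mathcal{T}} a_T\, v(T),
\]
which gives the $k$-subadditive inequality. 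Monotonicity is inherited from the $\MOPH$-$k$ definition.

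\textbf{From $k$-fractionally subadditive to $\MOPH$-$k$ (main direction).} Here I would appeal to Proposition~\ref{p:PLEtoMOH} and show that for every $S \subseteq M$ the restricted function $v_S$ admits a rank-$k$ PLE. Consider the LP with variables $h(e)\ge 0$ indexed by $e \subseteq S$ with $|e|\le k$:
\begin{align*}
\text{maximize }\; & \textstyle\sum_{e\subseteq S,\,|e|\le k} h(e) \\
\text{subject to }\; & \textstyle\sum_{e\subseteq T,\,|e|\le k} h(e) \;\le\; v(T) \quad \forall\, T\subseteq S.
\end{align*}
Any feasible solution is automatically a positive hypergraph function of rank $k$ with $g(T)\le v(T)$ for every $T\subseteq S$; it is a PLE precisely when the objective equals $v(S)$ (the upper bound forced by the constraint $T=S$). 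The dual minimizes $\sum_{T\subseteq S} a_T v(T)$ over $a_T\ge 0$ subject to $\sum_{T\supseteq e} a_T \ge 1$ for all $e \subseteq S$ with $|e|\le k$; that is exactly the set of $k$-fractional covers of $S$. The hypothesis that $v$ is $k$-fractionally subadditive says every such cover has cost at least $v(S)$, so the dual optimum is at least $v(S)$, and LP duality yields a primal feasible $h$ with $\sum_e h(e) \ge v(S)$. Combined with the constraint at $T=S$, this gives equality, producing the desired rank-$k$ PLE for $v_S$.

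\textbf{Expected obstacle.} The only non-trivial ingredient is pairing the right LP with its dual so that feasible dual solutions are literally $k$-fractional covers; once set up, strong LP duality does the work. One point to watch is that the PLE characterization requires monotonicity, which is part of the hypothesis in the statement, so no additional argument is needed there. Assembling the per-$S$ PLEs via Proposition~\ref{p:PLEtoMOH} then yields the $\MOPH$-$k$ representation and completes the equivalence.
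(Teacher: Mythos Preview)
Your proposal is correct and follows essentially the same approach as the paper: both directions hinge on the same LP duality argument, with the primal producing a rank-$k$ PLE for each $S$ and the dual feasible region being exactly the set of $k$-fractional covers. The only cosmetic difference is that you package the per-$S$ envelopes into a $\MOPH$-$k$ representation by invoking Proposition~\ref{p:PLEtoMOH}, whereas the paper spells out that assembly step directly (and also observes that $a_S=1$ witnesses the dual optimum equals $v(S)$, giving strong duality immediately).
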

\begin{proof}
First it is easy to observe that any $\MOPH$-$k$ valuation is $k$-fractionally subadditive:
\begin{align*}
\sum_{T\in\T}a_T\cdot v(T) =~& \sum_{T\in \T}a_T\cdot \max_{\ell \in \Ell}\sum_{s\in \ESk{T}}w_s^{\ell}\geq \max_{\ell\in \Ell} \sum_{T\in \T}a_T \sum_{s\in \ESk{T}}w_s^{\ell}=\max_{\ell\in \Ell} \sum_{s\in \ESk{S}}w_s^{\ell} \sum_{T\in \T:T\supseteq s}a_T\\
\geq~& \max_{\ell\in \Ell} \sum_{s\in \ESk{S}}w_s^{\ell}=v(S)
\end{align*}

To show that any monotone $k$-fractionally subadditive valuation is an $\MOPH$-$k$ valuation, we follow a similar analysis to that carried by Feige \cite{Feige2006}, as follows. 
For every set $S$, we construct a hypergraph-$k$ valuation associated with the set $S$, and denote it by $\ell(S)$.
The set of valuations is then $\Ell=\cup_{S\subseteq [m]}\ell(S)$. 
The hypergraph valuation $\ell(S)$ is constructed such that: (i)
$v(S) = \sum_{s\in \ESk{S}}w_s^{\ell(S)}$, and (ii) for any subset $T\subseteq S: v(T)\geq \sum_{s\in \ESk{T}}w_s^{\ell(S)}$. 
Monotonicity then implies that for any set $S$, $v(S)=\max_{\ell \in \Ell} \sum_{s\in \ESk{S}}w_s^{\ell}$, as desired.

It remains to construct the valuation $\ell(S)$.
To this end, we consider the following linear program and its dual:
\begin{align*}
V(S) =~& \min_{(a_T)_{T\subseteq S}}\sum_{T\subseteq S}a_T\cdot v(T) & C(S) =~& \max_{(w_s)_{s\in \ESk{S}}}\sum_{s\in \ESk{S}}w_s\\
\forall s\in \ESk{S}:~& \sum_{T\supseteq s}a_T\geq 1 & \forall T\subseteq S:~& \sum_{s\in \ESk{T}}w_s\leq v(T)\\
\forall T\subseteq S:~& a_T\geq 0 & \forall s\in \ESk{S}:~& w_s \geq 0
\end{align*}
By definition, every feasible solution to the primal program constitutes a fractional cover of every subset of size at most $k$ of $S$.
Therefore, it follows by $k$-fractional subadditivity that $V(S)\geq v(S)$. Since $v(S)$ can be obtained by setting $a_S=1$ and $a_T=0$ for any $T\subset S$, we get that $V(S)=v(S)$. Duality then implies that $C(S)=v(S)$. Thus if we set $(w_s^{\ell(S)})_{s\in \ESk{S}}$ to be the solution to the dual, then the conditions that need to be hold for $\ell(S)$ are satisfied by the constraints of the dual and the duality.
\end{proof}

\section{$\MOPH$ as a useful representation}
\label{app:poly-representation}
An explicit description (as a truth table) of a set function over $m$ items contains $2^m - 1$ entries. However, one is typically more interested in those set functions that have implicit descriptions of size polynomial in $m$ (which by a simple counting argument constitute only a small fraction of all possible set functions). A set function may have several different polynomial size descriptions, that differ wildly in their usefulness. For example, a useful description of a linear function simply lists the values of all items, whereas a rather useless description is to list $m$ $\NP$-hard optimization problems, with the intended interpretation that the value of the $i$th item is equal to the value of the optimal solution of the $i^{th}$ problem.

Given a description of a set function $f$, a ``minimal" notion of usefulness is that of having the ability to answer {\em value queries}, namely, to have a fast (polynomial time) algorithm that given a set $S$ as input, returns $f(S)$. Such an algorithm is referred to as a {\em value oracle} for $f$. Ideally, the value oracle has a polynomial size implementation, say, as a circuit of size polynomial in $m$. However, in the context of combinatorial auctions, one may also imagine using exponential size implementations of value oracles: an agent may hold a representation of his own valuation function $f$ as a truth table. This representation may be too long to be communicated in its entirety to the welfare maximization algorithm, but the agent may answer value queries quickly upon request, by having random access (rather than sequential access) to this table.

Another notion of usefulness that comes up naturally in the context of combinatorial auctions is that of having the ability to answer {\em demand queries}, namely, to have a fast (polynomial time) algorithm that given a list of item prices $p_i$ as input, returns the set $S$ that maximizes $f(S) - \sum_{i\in S} p_i$. Such an algorithm is referred to as a {\em demand oracle} for $f$. A demand oracle is at least as difficult to implement as a value oracle, because value queries can be implemented using demand queries (see \cite{BN09}).

It is shown in~\cite{FeigeJozeph2014}
that even some submodular functions that have a polynomial representation do not have any polynomial size implementation of demand oracles (not even oracles that answer demand queries approximately), unless $\NP$ has polynomial size circuits. In fact, there are such functions that do not have any polynomial size implementation of value oracles (unless $\NP$ has polynomial size circuits). Hence not all functions in $\MOPH$-1 have useful representations, not even those functions that have a polynomial representation.

A function $f$ is said to have a polynomial size $\MOPH$ representation if it can be represented as a maximum over polynomially many positive hypergraphs, where each hypergraph is of polynomial size. Interestingly, a polynomial size $\MOPH$ representation is always a useful representation.

\begin{prop}
Given a polynomial size $\MOPH$ representation of a set function $f$, one can answer demand queries in polynomial time.
\end{prop}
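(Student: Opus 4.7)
The plan is to exploit the fact that maximization distributes through maximization. Let $f = \max_{\ell \in \Ell} v_\ell$ be a polynomial-size $\MOPH$ representation, where each $v_\ell$ is a positive hypergraph function given by polynomially many non-negative hyperedge weights $h_\ell(T)$. For any price vector $p$,
\[
\max_{S \subseteq M}\Bigl(f(S) - \sum_{i \in S} p_i\Bigr) \;=\; \max_{\ell \in \Ell}\,\max_{S \subseteq M}\Bigl(v_\ell(S) - \sum_{i \in S} p_i\Bigr),
\]
so a demand oracle for $f$ can be implemented by answering a demand query on each of the $v_\ell$'s and returning the best outcome. Since $|\Ell|$ is polynomial, it suffices to solve the demand problem for a single positive hypergraph function with polynomially many hyperedges in polynomial time.

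First I would fix such a $v(S) = \sum_{T \subseteq S} h(T)$ with $h \ge 0$ and reduce its demand query to the Maximum Weight Closure problem (also known as Project Selection). Construct a directed graph with one node per hyperedge $T$ in the support of $h$, carrying node-weight $+h(T)$; one node per item $i$, carrying node-weight $-p_i$; and an arc of infinite capacity from each hyperedge node $T$ to each item node $i \in T$. A closed set (a set of nodes with no outgoing arcs) then corresponds exactly to a pair $(\mathcal{T},S)$ with every $T \in \mathcal{T}$ contained in $S$; since $h \ge 0$, the optimal closure always contains every hyperedge whose vertices lie in $S$, so the maximum closure weight is precisely $\max_S\bigl(v(S) - \sum_{i \in S} p_i\bigr)$, and the optimal $S$ can be read off from the items appearing in the closure.

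Finally, Maximum Weight Closure reduces in the standard way (Picard's construction) to a minimum $s$-$t$ cut in an auxiliary graph: attach a source to each positive-weight node with capacity equal to that weight, attach each negative-weight node to a sink with capacity equal to the absolute value, and keep the original arcs at infinite capacity. Any polynomial-time max-flow algorithm then recovers the optimal closure from the source side of the min cut. The auxiliary graph has size polynomial in the description of each $v_\ell$, and iterating over the polynomially many $\ell \in \Ell$ yields a polynomial-time demand oracle for $f$. The main conceptual step is the translation of the demand query into Max Closure; the rest is standard flow-based bookkeeping, and I do not foresee a genuine technical obstacle beyond the routine care needed to recover the optimal set $S$ (rather than merely its value) from the min cut.
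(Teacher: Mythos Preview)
Your argument is correct. The outer decomposition---pushing the demand query through the $\max_{\ell}$ and solving it separately on each positive hypergraph---is exactly what the paper does. Where you diverge is in the inner step: the paper observes that a positive hypergraph function is supermodular, so after subtracting the linear price term one is left with a supermodular maximization (equivalently, submodular minimization) problem, and invokes the black-box polynomial-time algorithm for that. You instead give an explicit combinatorial reduction to Maximum Weight Closure and solve it by min $s$--$t$ cut. Both are valid; your route is more concrete and yields faster running times in practice (a single max-flow per hypergraph rather than general submodular minimization), while the paper's route is a one-line appeal to a known hammer. The two are closely related under the hood---project selection is a classical special case of supermodular maximization---so this is a difference in presentation and algorithmic efficiency rather than in underlying mathematics.
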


\begin{proof}
It was observed in \cite{Abraham2012} that one can solve a demand query on a positive hypergraph valuation by the following algorithm: 
subtract the item prices from the corresponding vertices and then observe that the resulting graph (which potentially has negative vertex weights) defines a supermodular function. Supermodular function maximization is equivalent to submodular function minimization, and is well-known to be solvable in strongly polynomial time.
To answer a demand query on a function with a polynomial size $\MOPH$ representation, simply solve the demand query on each hypergraph separately, as proposed above, and then return the maximum.
\end{proof} 

\section{Proofs on Limitations of previous hierarchies}\label{app:incompatible}
In this section we prove Proposition~\ref{prop:incompatible} by constructing functions that lie in $\MOPH$-$k$ for small values of $k$, but that are not approximable by functions that are low in alternative hierarchies.
Recall (from Definition~\ref{def:approximate}) that a set function $f$ approximates a set function $g$ within a ratio of $\rho \ge 1$ if there are $\rho_1$ and $\rho_2$ such that for every set $S$, $\rho_1 \le \frac{f(S)}{g(S)} \le \rho_2$, and $\frac{\rho_2}{\rho_1} \le \rho$.

We begin by proving the first half of Proposition~\ref{prop:incompatible}.  Let $f_1$ be the function such that $f_1(S) = 1$ for every nonempty set $S \subseteq M$.  Note that $f_1$ is submodular.  Recall that the hypergraph rank of a function $f$ is the cardinality of the largest (nonzero) hyperedge in the hypergraph representation of $f$.

\begin{prop}
\label{pro:f_1}
The hypergraph rank of $f_1$ is $|M|$. Moreover, for every $k$ (the following result is nontrivial for $k \le o(\sqrt{m})$), every function of hypergraph rank $k$ approximates $f_1$ within a ratio no better than $\Omega\left(\frac{m}{k^2}\right)$.
\end{prop}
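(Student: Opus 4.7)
The plan has two parts: (i) pinning down the exact hypergraph rank of $f_1$, and (ii) establishing the approximation lower bound.

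For (i), I would compute the hypergraph representation $h$ of $f_1$ by Möbius inversion: $h(S) = \sum_{T \subseteq S}(-1)^{|S|-|T|} f_1(T)$. Using $f_1(\emptyset) = 0$ and the identity $\sum_{T \subseteq S}(-1)^{|S|-|T|}=0$ for $|S| \geq 1$, one obtains $h(\emptyset)=0$ and $h(S) = (-1)^{|S|+1}$ for every nonempty $S$. In particular $h(M) \neq 0$, so the hypergraph rank of $f_1$ is exactly $|M|=m$.

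For (ii), suppose $g$ has hypergraph rank at most $k$ and approximates $f_1$, so that $g(S) \in [\rho_1,\rho_2]$ for every nonempty $S$ (after replacing $g(\emptyset)$ by $0$, which can only lower the rank and does not affect the approximation ratio on nonempty sets). The first move is symmetrization: define $\bar g(S) = \E_\sigma[g(\sigma(S))]$ by averaging over all permutations of $M$. This function depends only on $|S|$, still lies in $[\rho_1,\rho_2]$ on every nonempty $S$, and still has hypergraph rank at most $k$, because each hyperedge of size $i$ symmetrizes to a uniform weight on all $i$-subsets, while edges of size $>k$ remain zero. Hence we may write $\bar g(S) = G(|S|)$ where $G$ is a polynomial of degree at most $k$ with $G(0)=0$ and $G(j)\in[\rho_1,\rho_2]$ for $j=1,\ldots,m$.

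The heart of the argument is a Markov inequality for polynomials. Let $\mathcal{M} := \max_{x\in[0,m]}|G(x)|$. Markov's inequality on $[0,m]$ yields $\max_{[0,m]}|G'|\leq \tfrac{2k^2}{m}\mathcal{M}$. Since $G$ jumps from $G(0)=0$ to $G(1)\geq\rho_1$, the mean value theorem gives some $\xi\in(0,1)$ with $|G'(\xi)|\geq \rho_1$, whence $\mathcal{M}\geq \tfrac{m\rho_1}{2k^2}$. Conversely, if $x^*\in[0,m]$ achieves $|G(x^*)|=\mathcal{M}$ and $j$ denotes the nearest integer, the derivative bound gives $|G(x^*)-G(j)|\leq \tfrac{2k^2}{m}\mathcal{M}$. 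For $j\geq 1$ this rearranges to $\mathcal{M}\leq 2\rho_2$ whenever $m\geq 4k^2$. For $j=0$ the same inequality becomes $\mathcal{M}(1-2k^2/m)\leq 0$, which contradicts the lower bound on $\mathcal{M}$ as soon as $m>2k^2$. Combining the two estimates yields $\rho_2/\rho_1 \geq m/(4k^2) = \Omega(m/k^2)$, which is nontrivial exactly in the regime $k=o(\sqrt m)$ quoted in the proposition.

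The step I expect to require the most care is the boundary analysis in the Markov/MVT argument, distinguishing whether the extremum of $|G|$ on $[0,m]$ lies near $0$ (where $G$ vanishes) or not, since only in the latter case does the $\rho_2$ bound apply directly; this is precisely where the constraint $G(0)=0$ is used. The symmetrization step is conceptually routine but must be spelled out to confirm that hypergraph rank is preserved under averaging over permutations.
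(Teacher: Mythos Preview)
Your proposal is correct and follows essentially the same approach as the paper: compute the M\"obius coefficients explicitly for part~(i), then for part~(ii) symmetrize to reduce to a degree-$k$ polynomial $G$ on $[0,m]$ and apply Markov's inequality to compare the derivative bound against the gap $G(1)-G(0)\ge\rho_1$. The paper's sketch normalizes so that $\max_S f(S)=1$ and defers the step bounding $\max_{[0,m]}|G|$ in terms of the integer-point values to the proof of Theorem~\ref{t:PleForAnyMonotoneSymmetricFunction}; your two-sided estimate on $\mathcal{M}$ (with the boundary case $j=0$ handled separately) is exactly that deferred detail, carried out in place.
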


\begin{proof}
The unique hypergraph representation of $f_1$ sets $h(S)=1$ for every odd cardinality set $S$, and $h(S) = -1$ for every nonempty even cardinality set $S$. Hence its hypergraph rank is $|M|$.

For the {\em moreover} part of the proposition, we only sketch the proof; the missing details can be found in the proof of Theorem~\ref{t:PleForAnyMonotoneSymmetricFunction}. Let $f$ be an arbitrary function of hypergraph rank $k$ that approximates $f_1$ within a ratio of $\rho$. We may assume without loss of generality that $f$ is a symmetric function (otherwise we can symmetrize it by averaging all $m!$ functions that are obtained from $f$ by permuting the items of $M$). By scaling we may assume that $f(S) \le 1$ for every set $S$, with equality for some nonempty $S$.  Then the approximation guarantee is that $f(S) \ge 1/\rho$ for every nonempty set $S$. Expressing the symmetric function $f$ as a degree $k$ polynomial in one variable, the derivative of this polynomial is bounded by $O(\frac{k^2}{m})$ (see details in the proof of Theorem~\ref{t:PleForAnyMonotoneSymmetricFunction}). As $f(0) = 0$ this implies that $f(1) \le O(\frac{k^2}{m})$. Combining with the the approximation requirement of $f(1) \ge 1/\rho$ we obtain that $\rho \ge \Omega(\frac{m}{k^2})$, as desired.
\end{proof}

For $k=2$, Proposition~\ref{pro:f_1} immediately implies that graphical valuations cannot approximate $f_1$ within a ratio better than $\Omega(m)$. Likewise, it is similarly seen that no function in the $\PH$ hierarchy can approximate $f_1$ within a ratio better than $m$.  

We next turn to the second half of Proposition~\ref{prop:incompatible}.
Let $f_2$ denote the function with the following graphical representation: the complete graph in which each vertex has value~0 and each edge has value~1. Hence $f(S) = 0$ for $|S| \le 1$ and $f(S) = {|S| \choose 2}$ otherwise. Observe that $f_2$ is a graphical valuation function (by definition) and that $f_2$ is in $\PH$-2 (again, by definition).

\begin{prop}
\label{pro:PHsupermodular}
Every function of supermodular degree $d$ approximates $f_2$ within a ratio no better than $\frac{m}{d+1} - 1$. In particular, $f_2$ has supermodular degree $\Omega(m)$.
\end{prop}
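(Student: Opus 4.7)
My plan is to extract a direct lower bound on the supermodular degree of any $f$ approximating $f_2$ within a finite ratio, which then makes the claimed inequality $\rho \geq \frac{m}{d+1} - 1$ follow by arithmetic.

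First I would extract two elementary consequences of the approximation requirement. Because $f_2(\{i\}) = 0$ and the ratio is finite, we must have $f(\{i\}) = 0$ for every singleton (otherwise $f(\{i\})/f_2(\{i\})$ is infinite and violates the upper bound $\rho_2$). Because $f_2(\{i,j\}) = 1 > 0$, the lower ratio bound gives $f(\{i,j\}) \geq \rho_1 > 0$ for every pair $\{i,j\}$.

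The key step is that these two facts together already force every element to lie in every other element's supermodular dependency set. Concretely, take the $S = \emptyset$ instance of the dependency definition: $f(i \mid \{j\}) = f(\{i,j\}) - f(\{j\}) = f(\{i,j\}) > 0 = f(\{i\}) - f(\emptyset) = f(i \mid \emptyset)$, so adding $j$ to the empty set strictly increases the marginal of $i$, hence $j \in \mathrm{Sup}(i)$. Applying this symmetrically over all $i \neq j$ yields $|\mathrm{Sup}(i)| \geq m - 1$ for every $i$, and therefore the supermodular degree of $f$ satisfies $d \geq m - 1$.

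From here the bound is immediate: $\frac{m}{d+1} - 1 \leq \frac{m}{m} - 1 = 0 \leq \rho$, which gives $\rho \geq \frac{m}{d+1} - 1$ as claimed. For the ``in particular'' conclusion, apply the bound to $f = f_2$ itself (which approximates itself with $\rho = 1$): we obtain $1 \geq \frac{m}{d+1} - 1$, and therefore $d \geq m/2 - 1 = \Omega(m)$.

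I do not anticipate any serious obstacle: the bound turns out to be rather loose, and the only real insight is that the $S = \emptyset$ case of the supermodular-dependency definition is already triggered by the mere positivity of $f(\{i,j\})$ that the approximation requirement forces on pairs; everything else is straightforward bookkeeping.
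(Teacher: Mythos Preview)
Your argument rests entirely on the deduction $f(\{i\}) = 0$ from the approximation hypothesis, obtained by reading Definition~\ref{def:approximate} literally at sets where $f_2$ vanishes. That reading is not the intended one---ratios $f(S)/f_2(S)$ are meant to be compared only where $f_2(S) > 0$---and under it the proposition trivializes: once you force $f(\{i\}) = 0$ and $f(\{i,j\}) > 0$, every finite-ratio approximator already has supermodular degree at least $m-1$, so the bound $\frac{m}{d+1}-1$ is either vacuous (no $f$ of degree $d < m-1$ exists) or nonpositive. Under the intended reading your first step simply fails: for instance the additive function $f(S) = |S|$ has supermodular degree $0$, satisfies $f(\{i\}) = 1$, and approximates $f_2$ on all sets of size $\geq 2$ with ratio exactly $m-1$, which is precisely the claimed bound at $d=0$.

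The paper's proof never touches singletons and delivers the substantive bound. The key structural fact is that a function of supermodular degree $d$ is submodular when restricted to some set $T$ with $|T| \geq \frac{m}{d+1}$ (greedily pick an item, discard its at most $d$ supermodular dependents, repeat). Normalize so that $f(T) = f_2(T) = \binom{|T|}{2}$; submodularity on $T$ then forces some pair $T' \subset T$ to satisfy $f(T') \geq \frac{2}{|T|}f(T) = |T|-1$, whereas $f_2(T') = 1$. Comparing the ratios at $T$ and at $T'$ gives $\rho \geq |T| - 1 \geq \frac{m}{d+1} - 1$.
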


\begin{proof}
Let $f$ be a function of supermodular degree $d$. Then, there must exist a set $T$ with $|T| \geq \frac{m}{d+1}$ such that $f$ restricted to $T$ is submodular (such a set can be obtained by picking an item, discarding the at most $d$ items that have supermodular dependence with it, and repeating). Without loss of generality, normalize $f$ such that $f(T) = f_2(T) = {|T| \choose 2}$.  Then, by submodularity of $f$ on $T$,
there is a set $T' \subset T$ with $|T| = 2$ such that $f(T') \ge \frac{2}{|T|}f(T) = |T|-1$. But $f_2(T') = 1$, implying that the approximation ratio $\rho$ is at least $|T|-1 \ge \frac{m}{d+1} - 1$.
\end{proof}

\section{Positive lower envelope technique}
\label{app:ple-technique}
\begin{proofof} {Proposition~\ref{p:PLEtoMOH}}
\begin{description}
\item[First direction: Monotone, PLE of rank {\boldmath $k$ $\Rightarrow$ $\MOPH$-$k$}:]
Let $f:2^M \to \nonnegR$ be a monotone set function.
For any $T \subseteq M$, let $g_T$ be a positive lower envelope of rank $k$ of $f$ restricted to $T$.
We argue that $\{g_T\}_{T \subseteq M}$ is an $\MOPH$-$k$ representation of $f$.
Specifically, we show that for every $S \subseteq M$, it holds that $\max_{T \subseteq M} g_T(S) = f(S)$.
Let $S \subseteq M$.
By the first property of Definition~\ref{def:PLE}, it holds that $g_S(S) = f(S)$.
Therefore,
\begin{equation} \label{appPle:eq:nounderestimate}
\max_{T \subseteq M} g_T(S) \geq g_S(S) = f(S)
\end{equation}
Additionally, for any $T \subseteq M$, $g_T(S) = g_T(S \cap T) \leq f(S \cap T) \leq f(S)$,
where the equality follows from the fact that $g_T$ is restricted to $T$; the first inequality follows from the no-overestimate property of Definition~\ref{def:PLE} and the last inequality follows from monotonicity of $f$.
Therefore,
\begin{equation} \label{appPle:eq:nooverestimate}
\max_{T \subseteq M} g_T(S) \leq f(S)
\end{equation}
The first direction follows by Equations~\eqref{appPle:eq:nounderestimate} and~\eqref{appPle:eq:nooverestimate}.

\item[Second direction: {\boldmath $\MOPH$-$k$ $\Rightarrow$} Monotone, PLE of rank {\boldmath $k$}:]
Let $f: 2^M \to \nonnegR$ be a function in $\MOPH$-$k$ and let $\Ell$ be an $\MOPH$ representation of it.
We first prove that $f$ is monotone.
Assume towards contradiction that $f$ is not monotone.
Then, there exist $S' \subset S \subseteq M$ such that $f(S') > f(S)$.
Since $\Ell$ is an $\MOPH$ representation of $f$, there exists a positive hypergraph function $f_h \in \Ell$ such that $f_h(S')=f(S')$.
This means that $f_h(S) \geq f(S') > f(S)$, which implies that $\max_{g \in \Ell} g(S) > f(S)$, deriving a contradiction.
The monotonicity of $f$ follows.
We next show that for every set $S \subseteq M$, $f_S$ admits a positive lower envelope of rank $k$.
Let $S \subseteq M$. 
There exists a positive hypergraph function $f_h \in \Ell$ such that $f_h(S)=f(S)$.
Moreover, no set $S' \subseteq S$ can have value strictly greater than $f(S')$ according to $f_h$, since if it does, this will be a lower bound on the value of $S'$ according to $\Ell$.
Therefore $f_h$ is a positive lower envelope of $f_S$, as desired.
The second direction follows. 
\end{description}
We conclude the proof of Proposition~\ref{p:MOHtoPLE}.
\end{proofof}

\Omit{
\begin{prop}
Every monotone set function is in $\MOPH$-$k$ if and only if every set $S \subseteq M$ admits a positive lower envelope of rank $k$.
\end{prop}

Before stating the formal theorem, we need the following preparation.
A set function $f:2^M\to \nonnegR$ restricted to a set $S\subset M$ is defined as $f_S:2^S\to \nonnegR$ with $f_S(T)=f(T)$ for any $T\subseteq S$. We show that if for every set $S \subseteq M$, the function restricted to $S$ admits a positive lower envelope of rank $r$, then the function admits an $\MOPH$-$r$ representation.

\begin{prop} \label{p:PLEtoMOH}\label{lem:lower-envelope}
Any monotone set function that has a positive lower envelope of rank $r$ for any restriction of it to a subset of its ground set is $\MOPH$-$r$.
\end{prop}

\begin{proofof} {Proposition~\ref{p:PLEtoMOH}}
Let $f:2^M \to \nonnegR$ be a monotone set function.
For any $S \subseteq M$, let $g_S$ be a positive lower envelope for $f$ restricted to items in $S$.
We argue that the collection of hypergraph representations $\{g_S\}_{S \subseteq M}$ is an $\MOPH$-$r$ representation of $f$ (that is $f(T)=\max_{S\subseteq M}g_S(T)$). Firstly, it is clear that all such hypergraph representations have rank up to $r$ and only positive hyperedges. Thus, this is a legal $\MOPH$-$r$ representation.
For the latter statement, it suffices to show that for any $S,T \subseteq M$, $f(T)\geq g_S(T)$ and $f(T)=g_T(T)$. The latter follows immediately by the second property of a positive lower envelope.
The former follows by monotonicity of $f$ and by the first property of a positive lower envelope.
Specifically, by monotonicity, $f(T)\geq f(T\cap S)$. By the positive lower envelope definition and the fact that $g_S$ spans only items in $S$, $f(T\cap S)\geq g_S(T\cap S)=g_S(T)$.
\end{proofof}
}

\Omit{
Proposition~\ref{p:PLEtoMOH}, implies that we can use positive lower envelopes in order to prove a given set function is in $\MOPH$-$k$.
Moreover, the classes of set functions we consider are closed under restriction: i.e. if a set function $f: 2^M \to \nonnegR$ is in $\mathcal{C}$, then for any subset $S\subseteq M$, the restriction of $f$ to $S$: $f_S:2^S\to\nonnegR$, is also in $\mathcal{C}$.
This means that in order to prove membership of a class of functions in $\MOPH$-$k$ for some $k \in \mathbb{N}$, it is sufficient to prove existence of positive lower envelope of rank $k$ for each set function in the class without restricting it.
}

\section{Proof of Expressiveness Theorem \ref{t:expressiveness}} \label{app:omitted}

\subsection{Graphical Valuations with Negative Hyperedges}
\label{sec:neg-hyperedge-proof}
\Omit{
\begin{proofof}{Theorem~\ref{thm:hypergraph-vs-MOPH}}
We firstly show that we can remove all positive hyperedges of rank~2, while preserving monotonicity.
Then, we use Theorem~\ref{t:rank1nonneg} to get a positive lower envelope of rank~1 for the rest of the function and add to it all the positive edges of rank~2 of the original function.
Let $f:2^m \to \nonnegR$ be a monotone function of positive rank~2 and let $H=(V,E,w_f)$ be its hypergraph representation.
We show that $H$ without its positive edges of rank~2 is still monotone.
Let $e=\{j_1,j_2\}$ be an edge with positive weight.
Assume towards contradiction that $H$ with~0 weight for $e$ represents a non-monotone set function.
Let $H'=(V,E,w_{f'})$ and let $f'$ be the function represented by $H'$.
Then, without loss of generality, there exists $S \subseteq M \setminus \{j_1\}$ such that $j_2 \in S$
and $f'(j_1 \mid S) < 0$.\footnote{For simplicity of presentation, we assume the names of the items and their representing vertices are the same.}
But,
$$f'(j_1 \mid S) = \sum_{e' \in E \mid j_1 \in e'} e' =
\sum_{e' \in E \setminus \{e\} \mid j_1 \in e'} e' =
f'(j_1 \mid S \setminus \{j_2\}) = f(j_1 \mid S \setminus \{j_2\}) \ ,$$
where the first equality follows by definitions;
the first inequality follows by $w_{f'}(e) = 0$ and by the fact $f'$ is of positive rank~2;
the third is true since without $e$, removing $j_2$ removes no edges
and the last follows by $w_f(j_1)=w_{f'}(j_1)$.
This contradicts the monotonicity of $f$. Therefore, $H$ without all the positive edges of rank~2 is monotone.
Finally, we invoke Theorem~\ref{t:rank1nonneg} to get a positive lower envelope for the function represented by $H$ without the positive edges of rank~2 and add to it the removed edges.
It is straightforward this is a positive lower envelope of $f$ of rank~2, as desired.
This proves Theorem~\ref{thm:hypergraph-vs-MOPH}.
\end{proofof}
} 

We prove the part of Theorem~\ref{thm:expressiveness} dealing with monotone set functions with positive rank $2$.

\begin{theorem}
Any monotone set function of positive rank~2 is in $\MOPH$-2
\end{theorem}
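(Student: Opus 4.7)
The plan is to invoke the positive-lower-envelope characterization (Proposition~\ref{p:PLEtoMOH}): a monotone function lies in $\MOPH$-$2$ iff every restriction $f_S$ admits a PLE of rank $2$. Since restricting a monotone positive-rank-$2$ function to a subset $S$ yields another monotone positive-rank-$2$ function, it suffices to build a PLE-$2$ for $f$ on the full ground set $M$ itself.

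To do so, I would induct on $|M|$, inserting the items one at a time. Write $u_i$ for the item inserted at step $i$, $S_i = \{u_1,\dots,u_i\}$, and assume as inductive hypothesis that $f_{S_{i-1}}$ has a PLE-$2$ $g_{i-1}$. Let $P_i$ (resp.\ $N_i$) be the positive (resp.\ negative) hyperedges of $f$ contained in $S_i$ and containing $u_i$; by hypothesis every $e'\in P_i$ has size at most~$2$. I plan to obtain $g_i$ by augmenting $g_{i-1}$ with a non-negative combination of the hyperedges of $P_i$, possibly with weights reduced from the values $h(e')$, chosen so that the total reduction exactly offsets the omitted negative contributions from $N_i$.

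The cancellation is carried out by a max-flow argument on the auxiliary bipartite graph $H$ described in the sketch: $s$ connects to each $e\in N_i$ by an arc of capacity $|h(e)|$, each $e'\in P_i$ connects to $t$ with capacity $h(e')$, and there is an infinite-capacity arc $e\to e'$ whenever $e'\subset e$. The central claim is that $H$ admits an $s$-$t$ flow $F$ saturating every arc out of $s$. Granting this, I would define $g_i := g_{i-1} + \sum_{e'\in P_i}\bigl(h(e')-F(e',t)\bigr)\mathbf{1}_{e'}$; the coefficients are non-negative by feasibility of $F$, equality $g_i(S_i)=f(S_i)$ holds because the total flow into $t$ equals the total flow out of $s$ (so the combined reduction on $P_i$ exactly cancels $\sum_{e\in N_i}h(e)$), and the PLE inequality $g_i(T)\le f(T)$ for $T\subseteq S_i$ follows by a hyperedge-by-hyperedge bookkeeping: for each unit of flow from $e$ to $e'$, if $e\subseteq T$ then $e'\subseteq T$ too (since $e'\subset e$), so the omitted negative contribution $-1$ is already compensated by the positive reduction $-1$, while if $e\not\subseteq T$ the omitted negative is $0$ and the positive reduction only helps.

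The main obstacle is establishing the existence of the saturating flow. By max-flow/min-cut this reduces to showing that for every $A\subseteq N_i$, the total capacity of positive hyperedges reachable from $A$ in $H$ is at least $\sum_{e\in A}|h(e)|$, i.e.\ $\sum_{e'\in P_i,\,\exists e\in A:\,e'\subset e}h(e') \ge \sum_{e\in A}|h(e)|$. I would derive this from monotonicity of $f$ by evaluating the marginal $f(u_i\mid T)\ge 0$ on a carefully chosen $T\subseteq S_{i-1}$: $T$ is designed so that the hyperedges touching $u_i$ that become active on $T\cup\{u_i\}$ are exactly the members of $A$ together with a specific subset of $P_i$. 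The hypothesis that positive hyperedges have rank at most~$2$ is essential here, because a rank-$2$ positive hyperedge containing $u_i$ is determined by a single companion vertex in $S_{i-1}$, giving enough freedom to toggle each $e'\in P_i$ in or out of $T$ independently. Translating the combinatorial cut into the correct $T$ and verifying the resulting inequality is the technical core of the argument.
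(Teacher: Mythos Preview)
Your proposal is correct and follows essentially the same approach as the paper: invoke the PLE characterization, build the envelope item by item, and at each step use a bipartite max-flow to redistribute the weight of the newly appearing negative hyperedges onto the positive rank-$\le 2$ hyperedges they contain, with existence of the saturating flow coming from monotonicity via min-cut. The paper's choice of the witness set in the Hall/min-cut step is precisely $T=\bigcup_{e\in A}(e\setminus\{u_i\})$, which is the ``carefully chosen $T$'' your sketch alludes to.
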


\begin{proof}
Let $f$ be a monotone hypergraph set function with positive rank~2, and let $h$ be its hypergraph representation.
In order to prove the theorem, we invoke the lower envelope technique, given in Proposition~\ref{p:PLEtoMOH}.
Specifically, for every set $S$, we construct a monotone positive lower envelope $w^S$ of rank~2 with a hypergraph representation $g^S$.

Fix a set $S$ of cardinality $r$.
For every hyperedge $e$ that is not contained in $S$ or such that $h(e)=0$, set $g^S(e)=0$.
The rest of the proof deals with hyperedges $e$ that are contained in $S$ and such that $h(e)\neq 0$.

Let $s_1, \ldots, s_r$ be an arbitrary order of the items in $S$.
We construct $g^S$ incrementally, considering one item at a time.
For every $j \leq r$, let $S_{<j}$ be the set $\{s_1, \ldots, s_{j-1}\}$.
For an item $s_j$ and a set of items $T \subseteq S_{<j}$, let $E_j^+(T)$ denote the set of positive hyperedges that are created when adding item $s_j$ to the set $T$; i.e., $E_j^+(T) = \{e \in S : h(e)>0, e \subseteq T \cup \{s_j\}, e \not\subseteq T\}$.
Similarly, let $E_j^-(T)$ denote the set of negative hyperedges that are created when adding item $s_j$ to $T$; i.e.,
$E_j^-(T) = \{e \in S : h(e)<0, e \subseteq T \cup \{s_j\}, e \not\subseteq T\}$.
Specifically, $E_j^+(S_{<j})$ and $E_j^-(S_{<j})$ are the respective new positive and negative hyperedges that are created when $s_j$ is added to $S_{<j}$.

The idea is to charge the negative values of the edges in $E_j^-(S_{<j})$ to positive edges in $E_j^+(S_{<j})$ in such a way that the absolute value of every negative hyperedge is charged completely to positive edges that are subsets of the negative hyperedge.
In particular, for every $e\in E_j^+(S_{<j})$, we compute a charging $c_e \leq h(e)$, satisfying the following two conditions:
\begin{align}
\sum_{e\in E_j^+(S_{<j})} c_e =~& \sum_{e\in E_j^-(S_{<j})} |h(e)| \label{eqn:sat} \\
\forall T\subseteq S_{<j}: \sum_{e\in E_j^+(T)} c_e \geq~& \sum_{e\in E_j^-(T)} |h(e)| \label{eqn:non-vio}
\end{align}
Then, for every $e \subseteq S$ such that $h(e)<0$, we set $g^S(e)=0$, and for every $e \subseteq S$ such that $h(e)>0$, we set $g^S(e)=h(e)-c_e$,

In order to prove the theorem, we need to establish two arguments:
First, we need to show that $g^S$, calculated by a charging scheme that satisfies Equations \eqref{eqn:sat} and \eqref{eqn:non-vio}, is a hypergraph representation of a lower envelope of $v$ with respect to $S$.
Second, we need to argue that there exists a charging scheme satisfying Equations \eqref{eqn:sat} and \eqref{eqn:non-vio}.

We start with the first argument.
Namely, we show that $w^S$, with a hypergraph representation $g^S$, satisfies the two conditions of Lemma~\ref{lem:lower-envelope}.
We first establish the first property.
\begin{align*}
v(S) =~& \sum_{j=1}^{r}v(s_j~|~S_{<_j}) = \sum_{j=1}^{r} \left(\sum_{e\in E^+_j(S_{<j})}h(e)-\sum_{e\in E^-_j(S_{<j})}|h(e)|\right)\\
=~&\sum_{j=1}^{r} \left(\sum_{e\in E_j^+(S_{<j})}h(e)-\sum_{e\in E_j^+(S_{<j})}c_e\right)
=\sum_{j=1}^{r} \left(\sum_{e\in E_j^+(S_{<j})}g^S(e)\right)\\
=~&
\sum_{j=1}^{r}w^S(s_j~|~S_{<j}) = w^S(S) \ ,
\end{align*}
where the first and second equalities are by definitions (marginal set function and hypergraph representation, respectively);
the third is by~\eqref{eqn:sat}; the fourth is by definition of $g^S(e)$;
the fifth is by definitions of $g^S$ and $w^S$ and the last is by definition of marginal set function.
We next establish the no-overestimate property; i.e., for every set $T$, $v(T)\geq w^S(T)$.
\begin{align*}
v(T) =~& \sum_{s_j \in T} v(s_j~|~S_{<j}\cap T) = \sum_{s_j\in T}\left(\sum_{e\in E_j^+(S_{<j}\cap T)}h(e)-\sum_{e\in E_j^-(S_{<j}\cap T)}|h(e)|\right)\\
\geq~&\sum_{s_j\in T} \left(\sum_{e\in E_j^+(S_{<j}\cap T)}h(e)-\sum_{e\in E_j^+(S_{<j}\cap T)}c_e\right)
=\sum_{s_j\in T} \left(\sum_{e\in E_j^+(S_{<j}\cap T)}g^S(e)\right)\\
=~&
\sum_{s_j\in T}w^S(s_j~|~S_{<j}\cap T) = w^S(T) \ ,
\end{align*}
where the inequality is by~\eqref{eqn:non-vio}.

We now prove the latter argument; i.e., that a charging scheme satisfying Equations \eqref{eqn:sat} and \eqref{eqn:non-vio} exists.
We reduce the existence problem to a max-flow min-cut computation on an appropriately defined graph.
For every item $s_j \in S$, we associate a flow graph $G$, with source $s$, target $t$, and two sets of nodes $L$ and $R$ as described next.
To avoid confusion between the hypergraph edges and the edges of the flow graph $G$, we refer to the edges of $G$ as arcs.
For every hyperedge $e\in E^-_j(S_{<j})$ associate a node $e \in L$, and draw an arc from the source $s$ to node $e$ with capacity $|h(e)|$.
Similarly, for every hyperedge $e\in E^+_j(S_{<j})$ associate a node $e \in R$, and draw an arc from node $e$ to the target node $t$ with capacity $h(e)$.
In addition, for every node $v \in L$, we draw an arc with capacity $+\infty$ to every node $u \in R$ such that $e_u \subseteq e_v$, where $e_u$ and $e_v$ are the hyperedges associated with nodes $u$ and $v$, respectively.

We will soon show that the maximum flow of this graph is equal to $\sum_{e\in E^-_j(S_{<j})}|h(e)|$ (i.e., all edges from $s$ to nodes in $L$ are fully saturated).
Then, for every node $v \in R$, we let the charging of the edge $e$ associated with $v$ to be equal to the flow on the edge from $v$ to $t$ in the maximum flow.
We argue that such a charging satisfies Equations \eqref{eqn:sat} and \eqref{eqn:non-vio}.
Equation \eqref{eqn:sat} is clearly satisfied, as the total charging on the positive edges is equal to the total flow, which is the total absolute value of the negative hyperedges.
To establish Equation \eqref{eqn:non-vio}, consider a subset $T\subseteq S_{<j}$.
If a hyperedge $e\in E^-_j(S_{<j})$ is also contained in $E^-_j(T)$, then all of the positive edges that are subsets of $e$ are contained in $E^+_j(T)$.
In addition, by the design of the flow graph, all the negative value of $e$ was charged to positive edges that are contained in it.
Therefore, the total charging associated with the positive edges $E^+_j(T)$ is at least as much as the total absolute value of the negative edges in $E^-_j(T)$, establishing Equation \eqref{eqn:non-vio}.

It now remains to establish a flow with value $\sum_{e\in E^-_j(S_{<j})}|h(e)|$.
We equivalently establish that the minimum $(s,t)$-cut in $G$ equals this value.
Clearly, there exists a cut with this value; namely, the set of the outgoing edges from the source $s$.
It remains to show that every cut has at least this value.
Let $(A,B)$ be an arbitrary $(s,t)$-cut of $G$, such that $s \in A$ and $t \in B$, and let $c(A,B)$ denote its value.
We show that
\begin{equation}
\label{eq:cut-lb}
c(A,B) \geq \sum_{e\in E^-_j(S_{<j})}|h(e)|.
\end{equation}
If there exist two nodes $v,u$ such that $v \in L \cap A$ and $u \in R \cap B$, then the cut $(A,B)$ contains an arc with infinite capacity, so Equation \eqref{eq:cut-lb} holds.
Otherwise, for every node in $L \cap A$, every node $u \in R$ connected to $v$ must be in $A$ too.
It follows that the value of the cut can be expressed as follows:
\begin{equation*}
C(A,B) = \sum_{e\in E^-_j(S_{<j})\cap B}|h(e)| + \sum_{e\in E_j^+(S_{<j})\cap A}h(e),
\end{equation*}
and so we need to show that
\begin{equation*}
\sum_{e\in E^-_j(S_{<j})\cap B}|h(e)| + \sum_{e\in E_j^+(S_{<j})\cap A}h(e) \geq \sum_{e\in E^-_j(S_{<j})}|h(e)|.
\end{equation*}
Substituting $\sum_{e\in E^-_j(S_{<j})}|h(e)| = \sum_{e\in E^-_j(S_{<j}) \cap A}|h(e)| + \sum_{e\in E^-_j(S_{<j}) \cap B}|h(e)|$, it remains to show that
\begin{equation*}
\sum_{e\in E_j^+(S_{<j})\cap A}h(e) \geq \sum_{e\in E^-_j(S_{<j})\cap A}|h(e)|.
\end{equation*}

Let $A^-_j = \cup_{e\in E^-_j(S_{<j})\cap A}\left(e \setminus \{s_j\}\right)$ be the union of items in the negative hyperedges in $A$, excluding $s_j$.
Consider the hypothetical case of adding item $s_j$ to $A^-_j$.
Observe that when adding $s_j$ to $A^-_j$ the absolute value of the negative edges created is at least the sum of all the absolute values of negative hyperedges in $A$.
By the fact that the positive edges are of cardinality at most $2$, the total value of the positive edges created is exactly equal to the union of the positive edges that are subsets of some of the negative hyperedges in $A$.
Note that the fact that all positive edges are of cardinality at most $2$ is crucial here:
if there were positive hyperedges of cardinality greater than $2$, then additional hyperedges could have been created and the argument would break.
The latter union is exactly the set of positive edges in $A$. Thus we get:
\begin{equation}
v(s_j~|~A^-_j) \leq \sum_{e\in E_j^+(S_{<j})\cap A}h(e)-\sum_{e\in E^-_j(S_{<j})\cap A}|h(e)|.
\end{equation}
The monotonicity of $v$ implies that $v(s_j~|~S^-(A))\geq 0$, implying that
$$\sum_{e\in E_j^+(S_{<j})\cap A}h(e)-\sum_{e\in E^-_j(S_{<j})\cap A}|h(e)| \geq 0,$$ as required.
\end{proof}

\Omit{\subsection{Graphical Valuations with Arbitrary Edge Weights}
\label{app:decomposition}

For the special case of graphical valuations with positive and negative edge weights \cite{Acemoglu2012}, we can prove a stronger version of Theorem \ref{thm:hypergraph-vs-MOPH}: such a valuation can be expressed as a monotone submodular function plus a graphical valuation with positive weights.

\begin{theorem}\label{thm:graphical-decomp}
Any monotone graphical valuation with positive and negative edges can be decomposed as the sum of a monotone submodular set function
and a graphical valuation with only positive weights.
\end{theorem}
\begin{proof}
Let $v: 2^M \to \nonnegR$ be a monotone graphical valuation (with positive and negative edges).
Let $v_1: 2^M \to \nonnegR$ be the set function induced by the weights of the nodes and the negative edges; i.e., $v_1(S) = \sum_{j\in S}w_j+\sum_{e\subseteq S: w_e<0}w_e$, and let $v_2: 2^M \to \nonnegR$ be the set function induced by the positive edges; i.e., $v_2(S) = \sum_{e\subseteq S:w_e> 0}w_e$. Obviously, for every $S \subseteq M$, it holds that $v(S) = v_1(S)+v_2(S)$. Additionally, $v_2(S)$ is, by definition, a graphical valuation with only positive edges. Thus it remains to show that $v_1(S)$ is a monotone submodular set function.

The submodularity of $v_1(\cdot)$ can be easily verified: for any sets $S,T$ such that $S\subseteq T$ and an item $j\notin T$, since the weight of every edge in $v_1(\cdot)$ is negative, it follows that
\begin{align*}
v(j~|~S) = w_j + \sum_{k\in S}w_{(j,k)}\geq w_j +\sum_{k\in T}w_{(j,k)}=v(j~|~T).
\end{align*}
To show that $v_1(\cdot)$ is monotone, it suffices to show that the marginal contribution of any item $j$ to any given set $S$ is non-negative.
We simply show that $v_1(j~|~S)=w_j+\sum_{k\in S}w_{(j,k)} = w_j +\sum_{k\in S:w_{(j,k)}<0}w_{(j,k)}\geq 0$.
Let $T=\{k\in S:w_{(j,k)}<0\}$.
Now observe that when adding $j$ to $T$ the marginal contribution under $v_1(\cdot)$ coincides with the marginal contribution
with respect to $v(\cdot)$. Thus, by monotonicity of $v(\cdot)$, we get: $v_1(j~|~S)=v_1(j~|~T)=v(j~|~T)\geq 0$.
\end{proof}}

\subsection{Laminar Negative Hyperedges}
\label{sec:laminar-proof}

We prove the part of Theorem \ref{t:extensions} dealing with monotone hypergraph functions with laminar negative hyperedges.

\begin{theorem} \label{t:laminar_nonneg}
Every monotone hypergraph function with positive rank $r$ and laminar negative hyperedges (with arbitrary rank) is in $\MOPH$-$r$
\end{theorem}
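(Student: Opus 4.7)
My plan is to follow the same incremental positive-lower-envelope construction used in the positive-rank-$2$ case (Theorem~\ref{thm:expressiveness}, item 2), but exploit laminarity to push the argument through for positive hyperedges of arbitrary rank up to $r$. By Proposition~\ref{p:PLEtoMOH}, it suffices to show that every restriction $f_S$ admits a PLE of rank $r$. Fix $S\subseteq M$, order its elements $u_1,\dots,u_s$ arbitrarily, and build a PLE-$r$ for $S_i=\{u_1,\dots,u_i\}$ inductively. At step $i$, let $P_i$ be the set of positive hyperedges containing $u_i$ and contained in $S_i$ (each of rank at most $r$), and let $N_i$ be the set of negative hyperedges containing $u_i$ and contained in $S_i$. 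The key structural observation is that, because negative hyperedges are laminar, any two members of $N_i$ both contain $u_i$ and hence are comparable; so $N_i$ is a \emph{chain} $E_1\supset E_2\supset\cdots\supset E_\ell$.

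As in the rank-$2$ proof, I build an auxiliary bipartite flow network $H$ with $N_i$ on the left, $P_i$ on the right, infinite-capacity arcs from $e\in N_i$ to $e'\in P_i$ whenever $e'\subseteq e$, a source $s$ connected to each $e\in N_i$ with capacity $|h(e)|$, and a sink $t$ connected to each $e'\in P_i$ with capacity $h(e')$. The central step is to show, via max-flow/min-cut, that there is a flow $F$ saturating every $s$-arc. Given any finite $s$-$t$ cut $(A,B)$, no $\infty$-arc crosses, so for every $e\in A\cap N_i$ every positive $e'\subseteq e$ lies in $A$; let $E_{j^*}$ be the largest negative in $A\cap N_i$ (which exists since $N_i$ is a chain). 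Then $A$ contains \emph{every} positive hyperedge contained in $E_{j^*}$ that touches $u_i$, so
\[
\sum_{e'\in A\cap P_i} h(e')\;\ge\;\sum_{\substack{e'\in P_i\\ e'\subseteq E_{j^*}}} h(e')\;=\;v(u_i\mid E_{j^*}\setminus\{u_i\})\;+\sum_{\substack{e\in N_i\\ e\subseteq E_{j^*}}}|h(e)|.
\]
Monotonicity of $v$ gives $v(u_i\mid E_{j^*}\setminus\{u_i\})\ge 0$, and $A\cap N_i$ is a sub-chain of $N_i$ contained in $E_{j^*}$, so the right-hand side is at least $\sum_{e\in A\cap N_i}|h(e)|$. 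Adding $\sum_{e\in B\cap N_i}|h(e)|$ to both sides shows the cut has value at least $\sum_{e\in N_i}|h(e)|$, which matches the obvious cut across all $s$-arcs.

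With the saturating flow $F$ in hand, I extend the PLE of $S_{i-1}$ by adding, for each $e'\in P_i$, a positive hyperedge of the same support but with weight reduced by the flow that $F$ routes from $e'$ to $t$ (so the weight is nonnegative and bounded by $h(e')$); the negative hyperedges in $N_i$ are discarded. The rank remains at most $r$ because only elements of $P_i$ are added. Saturation of the $s$-arcs gives $v(u_i\mid S_{i-1}) = \sum_{e'\in P_i}h(e')-\sum_{e\in N_i}|h(e)| = \sum_{e'\in P_i}(h(e')-c_{e'})$, which establishes the ``no underestimate at $S$'' condition when iterated. For the ``no overestimate'' condition on a subset $T\subseteq S_i$ with $u_i\in T$, I compare marginals at $T\setminus\{u_i\}$: the induction handles everything not involving $u_i$, so the only requirement is
\[
\sum_{\substack{e'\in P_i\\ e'\subseteq T}} c_{e'}\;\ge\;\sum_{\substack{e\in N_i\\ e\subseteq T}}|h(e)|,
\]
which holds because every unit of flow leaving a negative $e\subseteq T$ can only travel along an $\infty$-arc to some positive $e'\subseteq e\subseteq T$, and those $s$-arcs are saturated.

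The main obstacle I anticipate is precisely the min-cut inequality above: unlike the positive-rank-$2$ setting, adding the vertex $u_i$ to $E_{j^*}\setminus\{u_i\}$ can create positive hyperedges of rank up to $r$, and a priori one must worry whether these extra positives ``leak'' outside the cut. Laminarity resolves this by collapsing $N_i$ to a chain so that a single dominating hyperedge $E_{j^*}$ governs both sides of the accounting, and the $\infty$-arcs force all positives inside $E_{j^*}$ into $A$. Once this step is in place, the rest of the proof is a direct adaptation of the rank-$2$ argument.
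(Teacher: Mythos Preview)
Your argument is correct: the chain structure of $N_i$ (any two negative hyperedges in $N_i$ share $u_i$, so laminarity forces comparability) is exactly what makes the min-cut inequality go through, and your charging/no-overestimate computations are sound. This is a genuinely different route from the paper's, however. The paper does not add vertices one at a time nor build a flow network; instead it removes negative hyperedges globally, smallest-rank first. For the minimal negative hyperedge $e$, non-negativity of $f$ on $S_e$ guarantees enough positive mass inside $S_e$ to absorb $|h(e)|$; after subtracting $|h(e)|$ arbitrarily from those positives and deleting $e$, laminarity (specifically, disjointness of the remaining minimal negative hyperedges) is used to show the resulting function is still non-negative, and one recurses.

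What each approach buys: the paper's argument is lighter (no max-flow/min-cut, no bipartite construction) and strictly more general, since it uses only non-negativity of $f$ rather than monotonicity; it therefore yields a $\PLE$-$r$ result for non-monotone non-negative functions with laminar negatives as well (cf.\ Section~\ref{sec:non-monotone}). Your approach, by contrast, is pleasingly uniform with the positive-rank-$2$ proof and makes the role of laminarity very transparent --- it collapses $N_i$ to a chain so that a single dominating $E_{j^*}$ controls the cut --- but your use of the marginal $v(u_i\mid E_{j^*}\setminus\{u_i\})\ge 0$ genuinely requires monotonicity, not just non-negativity.
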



\begin{proof}
Let $f: 2^M \to \nonnegR$ be a non-negative set function with positive rank $r$ and hypergraph representation $w:2^M\to \R$. For this proof we actually only assume that $f$ is non-negative and not the stronger assumption that it is monotone. Under solely the non-negativity assumption we construct a valid positive lower envelope of $f$ of rank $k$. Then this would directly imply that when the function is also monotone, then it is in $\MOPH$-$k$.
Note that we construct a positive lower envelope only for $f$ itself, but the proof applies to any restriction of $f$, as well.

Let $E$ denote the set of hyperedges of $w$ (i.e. $S\subset M$ with $w(S)\neq 0$). We will denote with $F=(M,E,w)$ the actual hypergraph on vertices $M$, hyperedge-set $E$ and hyperedge weights $w$. By our assumption the negative hyperedges of $F$ are laminar. We show how to remove the negative hyperedges of $F$ one by one, without either changing its value for $M$, increasing its value for any $S \subseteq M$ or disobeying its non-negativity. Thus at the end of this construction, the positive hypergraph that will remain will be a valid positive lower envelope of $f$.

Let $e$ be a negative hyperedge of minimum rank and let $S_e \subseteq M$ be the set of items represented by vertices contained in $e$.
From non-negativity $f(S_e) \geq 0$. Therefore, there exist positive hyperedges $E^+$ containing only vertices representing items of $S_e$ with sum of values of at least $|w(e)|$.
We remove $e$ and total value of $|w(e)|$ from the edges in $E^+$ (arbitrarily, without introducing any new negative edges).
Let $F'$ be $F$ after this change and let $f'$ be the function $F'$ represents.
We show $F'$ still represents a non-negative function $f'$.
Assume towards contradiction this is not the case.
Let $S^-$ be a minimal set of vertices representing set of items with negative value by $f'$.
From minimality of $S^-$ it must be that removing any vertex of $S^-$ results in removing at least one negative hyperegde from the induced subgraph.
That is, $S^-$ is a collection of negative edges. Let $e^{-}_1, \ldots, e^{-}_l$ be these negative edges. From laminarity, these negative edges must be disjoint.
Therefore, $f'(S^-) = \sum_1^l f'(\items(e^-_i))$ (where $\items(e^-_i)$ are the items represented by the vertices contained in $e^-_i$). Since this sum is strictly negative, it must be that the value of at least one of the addends is negative.
So, from minimality of $S^-$ it must be that $l=1$. That is, $S^-$ must contain the items of exactly one negative edge (which may contain, of course, other negative or positive edges).
But, since $f(S^-) \geq 0$, it must be that $S^-$ contains at least one of the positive edges of $E^+$ and does not contain $e$. Therefore, $e^{-}_1$ and $e$ must not be disjoint.
This contradicts laminarity of negative edges.
Therefore, it must be that $f'$ is a non-negative function.
Moreover, since $f$ is laminar and since no negative edges have been added, $f'$ is laminar as well.
Theorem~\ref{t:laminar_nonneg} follows by an inductive argument by observing that $F'$ contains less negative edges than $F$.
\end{proof}

\subsection{Supermodular Degree}
\label{sec:supermodular-proof}\label{sec:supermodular}
We prove the part of Theorem~\ref{thm:expressiveness} dealing with the supermodular degree hierarchy.

\begin{theorem}\label{thm:supermodular-degree}
Every function with supermodular degree $k$ is in $\MOPH$-$(k+1)$.
\end{theorem}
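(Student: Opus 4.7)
The plan is to invoke the PLE characterization of Proposition~\ref{p:PLEtoMOH}, which reduces the problem to exhibiting, for every $S \subseteq M$, a positive hypergraph function of rank at most $k+1$ that agrees with $f$ on $S$ itself and underestimates $f$ on every subset of $S$. Since the supermodular-degree-$k$ property is preserved under restriction (one may verify that the supermodular dependency set of any item in $f_S$ is contained in its dependency set in $f$ intersected with $S$), it suffices to describe the construction on the full ground set $M$ and then reapply it verbatim to each restriction $f_S$.

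The construction I would use is the following. Fix an arbitrary ordering $j_1, \ldots, j_m$ of the items. For each $i$, let $D_i$ denote the supermodular dependency set of $j_i$ intersected with $\{j_1, \ldots, j_{i-1}\}$, and introduce a single positive hyperedge $e_i := \{j_i\} \cup D_i$ carrying weight $w_i := f(j_i \mid \{j_1, \ldots, j_{i-1}\})$. Then $|e_i| \le k+1$ since $|D_i| \le k$, and $w_i \ge 0$ by monotonicity of $f$. The telescoping identity $\sum_i w_i = f(M)$ takes care of the full-value condition of the PLE for free.

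The main obstacle is the no-overestimate condition, namely $g(T) \le f(T)$ for every $T \subseteq M$. Here I would lean on the defining property of supermodular degree: if an item $j'$ lies outside $D(j_i) \cup \{j_i\}$, then adjoining $j'$ to any conditioning set weakly decreases the marginal value of $j_i$. Let $I_T := \{i : e_i \subseteq T\}$ and list the items of $T$ in their induced order as $j_{i_1}, \ldots, j_{i_\ell}$. For any $i = i_t \in I_T$, the items in $\{j_1, \ldots, j_{i-1}\}$ that are absent from $T$ must lie outside $D(j_i)$, because $D_i \subseteq T$ by the definition of $I_T$. Iterating the supermodular-degree inequality then yields $w_i \le f(j_{i_t} \mid \{j_{i_1}, \ldots, j_{i_{t-1}}\})$. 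Summing over $i \in I_T$ and dropping the remaining nonnegative marginal terms for indices $t$ with $i_t \notin I_T$ (using monotonicity) gives $g(T) \le \sum_{t=1}^{\ell} f(j_{i_t} \mid \{j_{i_1}, \ldots, j_{i_{t-1}}\}) = f(T)$, as required. Applying this construction to each restriction $f_S$ and invoking Proposition~\ref{p:PLEtoMOH} finishes the proof.
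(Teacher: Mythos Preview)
Your proof is correct and follows essentially the same approach as the paper's: build a PLE by ordering the items, attaching to each item $j_i$ a hyperedge containing $j_i$ and its supermodular dependencies with weight $f(j_i\mid S_{<i})$, and verify the no-overestimate condition by removing non-dependent predecessors from the conditioning set. The only cosmetic difference is that you put only the \emph{preceding} dependency items into $e_i$ whereas the paper includes the entire dependency set of $j_i$; both choices yield rank at most $k+1$ and the rest of the argument is identical.
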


\begin{proof}
Let $f:2^M\to \nonnegR$ be in $\SMD$-$k$. Invoking Proposition~\ref{p:PLEtoMOH}, we construct a positive lower envelope $g:2^M\to \nonnegR$ of $f$ that has rank $k+1$. We denote by $w:2^M\to \nonnegR$ the (positive) hypergraph representation of $g$.
Note that we construct a positive lower envelope only for $f$ itself, but the proof applies to any restriction of $f$, as well.

Consider an arbitrary ordering of the items: $M=\{1,\ldots,m\}$. For every item $j\in M$, let $e_j$ be the set of items that contains $j$ and all the items $j' \in M$ such that $j'$ is supermodularly dependent with $j$. By the assumption of the supermodular degree of $k$: $|e_j|\leq k+1$. Let $S_{<j}=\{1,\ldots,j-1\}$. For every $j$, we associate a weight of $f(j~|~S_{<j})$ with hyperedge $e_j$ (i.e., $w(e_j)=f(j~|~S_{<j})$), and every other hyperedge has a weight of zero.
For simplicity of presentation, we will allow for multiple identical hyperedges.
If $e_j$ is identical to some other $e_{j'}$, then we implicitly assume that the weight of the hyperedge is the addition of the two weights.

By construction we have that $f(M) = \sum_{j=1}^{m}f(j~|~S_{<j})=\sum_{j=1}^{m} w(e_j)=\sum_{e\in E_w} w(e)=g(M)$,
where $E_w$ is the set of hyperedges of $w$.
Thus, the first property of Definition~\ref{def:PLE} is satisfied. It remains to show that for every subset $T \subseteq M$, it holds that $f(T)\geq g(T)=\sum_{e\in E_w:e\subseteq T}w(e)$. Observe that it suffices to consider only sets $T$ that are unions of hyperedges with positive weights.
This is because any vertex that is not contained in some hyperedge $e_j$ contributes no value to the summand on the right hand side. Thus we assume that $T=\cup_{j\in J}e_{j}$ for some index set $J\subseteq M$. We need to show that $f(T) \geq \sum_{j\in J} w(e_j)$.  Note that:

\begin{align*}
f(T) =\sum_{j\in T} f(j~|~T\cap S_{<j}) = \sum_{j\in T-J} f(j~|~T\cap S_{<j}) + \sum_{j\in J} f(j~|~T\cap S_{<j})\geq  \sum_{j\in J} f(j~|~T\cap S_{<j}),
\end{align*}
where the last inequality follows by monotonicity (which implies that every summand of the first sum is non-negative).

Now we argue that for every $j \in J$, $f(j~|~T\cap S_{<j})\geq f(j~|~S_{<j})$.
To see this observe that for every $j\in J$, $e_j\subseteq T$, which implies that every item $j'\in M$ that supermodularly depends on $j$ is in $T$. Therefore, $S_{<j}-T$ contains only items that do not supermodularly depend on $j$.
Consequently, adding any element in $S_{<j}-T$ to $T\cap S_{<j}$ can only decrease the marginal contribution of $j$; that is:
$$\forall j'\in S_{<j}-T, ~~~ f(j~|~T\cap S_{<j})\geq  f(j~|~T\cap S_{<j}+\{j'\}).$$
Repeating the above analysis for every item in $S_{<j}-T$ establishes the desired claim. Combining the above, we get:
\begin{align*}
f(T) \geq  \sum_{j\in J} f(j~|~T\cap S_{<j})\geq \sum_{j\in J} f(j~|~S_{<j}) = \sum_{j\in J}w(e_j)=g(T),
\end{align*}
as desired.
\end{proof}

\section{Proof of Symmetric Functions Theorem \ref{t:extensions}}\label{app-symmetric}
\begin{defn} [Symmetric set function]
We say that a set function $f: 2^M \to \nonnegR$ is {\em symmetric}, if there exists a function $f': \{0, \ldots, |M|\} \to \nonnegR$ such that for any $S \subseteq M$, $f'(|S|) = f(S)$.
For simplicity, we sometimes refer to $f$ itself as a function getting only the cardinality of a subset and not a subset. 
\end{defn}

\begin{observation}
The hypergraph representation of any symmetric set function is symmetric, in the sense it has the same value for any hyperedge of a given rank.
\end{observation}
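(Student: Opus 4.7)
The plan is to exploit the uniqueness of the hypergraph representation mentioned in Section~\ref{sec:prelims} (``It is easy to verify that any set function $v$ admits a unique hypergraph representation and vice versa''), together with the permutation-invariance characterization of symmetry. Concretely, a set function $v$ is symmetric if and only if $v(\pi(S)) = v(S)$ for every permutation $\pi$ of the ground set $M$ and every $S \subseteq M$, so it suffices to show that the hypergraph representation $h$ inherits this permutation-invariance.

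The key step is as follows. Given any permutation $\pi$ of $M$, define $h_\pi(T) := h(\pi^{-1}(T))$. Then, using symmetry of $v$,
\begin{equation*}
\sum_{T \subseteq S} h_\pi(T) = \sum_{T \subseteq S} h(\pi^{-1}(T)) = \sum_{T' \subseteq \pi^{-1}(S)} h(T') = v(\pi^{-1}(S)) = v(S).
\end{equation*}
Thus $h_\pi$ is also a hypergraph representation of $v$, so by uniqueness $h_\pi = h$, i.e., $h(\pi^{-1}(T)) = h(T)$ for every $\pi$ and $T$. This means $h(T)$ depends only on $|T|$, as claimed.

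Alternatively (and equivalently), I would invoke Möbius inversion, which writes $h$ explicitly as $h(T) = \sum_{S \subseteq T} (-1)^{|T \setminus S|} v(S)$. Substituting $v(S) = \tilde{v}(|S|)$ (the one-variable form guaranteed by symmetry) yields
\begin{equation*}
h(T) = \sum_{k=0}^{|T|} \binom{|T|}{k} (-1)^{|T|-k} \tilde{v}(k),
\end{equation*}
which depends only on $|T|$. There is no real obstacle here: the observation is an immediate corollary of the uniqueness fact already stated in the preliminaries, and either argument fits in a couple of lines. I would likely present the Möbius-inversion form since it is completely explicit and gives a closed-form symmetric ``coefficient'' function $\tilde{h}(r) := \sum_{k=0}^{r} \binom{r}{k}(-1)^{r-k}\tilde{v}(k)$ that will be convenient to refer back to in the ensuing symmetric-function analysis in Section~\ref{app-symmetric}.
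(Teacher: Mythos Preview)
Your proposal is correct; both the permutation-invariance-plus-uniqueness argument and the explicit M\"obius-inversion argument are valid and standard. The paper itself does not supply a proof of this observation at all---it is simply stated and left to the reader---so your write-up goes beyond what the paper provides rather than diverging from it.
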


\begin{lemma} [Canonical PLE for symmetric set functions] \label{l:canonicalPLE_Symmetric}
Let $f: \{0, \ldots, m\} \to \nonnegR$ be a monotone symmetric set function of rank $r$.
Then, if $f$ has a positive lower envelope of rank $R$, it must be that the symmetric set function $g_f: \{0, \ldots, m\} \to \nonnegR$
that has hypergraph representation with positive hyperedges of rank $R$ of value $\binom{m}{R}f(m)$ and no other (non-zero) hyperedges is a positive lower envelope of $f$.
We refer to $g_f$ as \emph{the canonical positive lower envelope of rank $R$ of $f$}.
\end{lemma}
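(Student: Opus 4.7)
The plan is a two-step reduction: first, pass from any positive lower envelope of rank $R$ to a symmetric one via averaging over permutations, and then use a simple binomial inequality to dominate the canonical $g_f$ by it pointwise. Suppose $h$ is a positive lower envelope of $f$ of rank $R$. For every permutation $\pi$ of $M$, the function $h^\pi$ defined by $h^\pi(S) := h(\pi(S))$ is again a positive lower envelope of $f$ of rank $R$: its weights stay non-negative, its rank is preserved, and since $f$ is symmetric we have $h^\pi(M)=h(M)=f(M)$ and $h^\pi(S) = h(\pi(S)) \le f(\pi(S)) = f(S)$. Averaging over all permutations yields a \emph{symmetric} positive lower envelope $h^*$ of $f$ of rank at most $R$. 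Let $h^*_j \ge 0$ denote the common weight that $h^*$ puts on each hyperedge of rank $j$, for $j = 1, \dots, R$.

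It remains to verify the two defining conditions of a PLE for $g_f$. The equality $g_f(M) = f(M)$ is immediate from the definition (each of the $\binom{m}{R}$ positive rank-$R$ hyperedges contributes $f(m)/\binom{m}{R}$). For the no-overestimate condition, write $k = |S|$; the strategy is to prove $g_f(S) \le h^*(S)$, from which $h^*(S) \le f(S)$ immediately gives the desired bound. By symmetry, $h^*(S) = \sum_{j=1}^{\min(k,R)} \binom{k}{j} h^*_j$ and $h^*(M) = \sum_{j=1}^R \binom{m}{j} h^*_j = f(m)$. If $k < R$ then $g_f(S) = 0$ and the inequality is trivial, so assume $k \ge R$, in which case
\[
g_f(S) \;=\; \binom{k}{R}\,\frac{f(m)}{\binom{m}{R}} \;=\; \frac{\binom{k}{R}}{\binom{m}{R}}\sum_{j=1}^{R}\binom{m}{j}\, h^*_j.
\]

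Because each $h^*_j$ is non-negative, the inequality $g_f(S) \le h^*(S)$ reduces term by term to the binomial comparison
\[
\binom{k}{R}\binom{m}{j} \;\le\; \binom{m}{R}\binom{k}{j}, \qquad 1 \le j \le R \le k \le m,
\]
which is the main technical ingredient. A short manipulation shows the ratio of the two sides equals $\prod_{i=1}^{R-j}\tfrac{k-R+i}{m-R+i}$, and every factor is at most $1$ since $k \le m$. Summing the established per-$j$ bounds and chaining with $h^*(S)\le f(S)$ gives $g_f(S) \le f(S)$, completing the proof. The only real obstacle is isolating this monotone binomial inequality; the symmetrization step and the remaining bookkeeping are routine.
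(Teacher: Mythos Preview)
Your proof is correct. Both your argument and the paper's use symmetrization over permutations of $M$, but the order of operations differs. The paper first \emph{lifts} every hyperedge of $g$ of rank $j<R$ to an arbitrary rank-$R$ superset with the same weight (this can only decrease $g(S)$ for any $S$ while preserving $g(M)$, so the PLE property is maintained), and only then symmetrizes; after lifting, the symmetrized function automatically has uniform weight $f(m)/\binom{m}{R}$ on rank-$R$ sets and is therefore exactly $g_f$, with no further inequality needed. You instead symmetrize first and then compare $g_f$ to the resulting symmetric PLE $h^*$ via the binomial inequality $\binom{k}{R}\binom{m}{j}\le\binom{m}{R}\binom{k}{j}$. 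Your route is a bit more computational but fully explicit; the paper's lifting trick sidesteps the binomial comparison at the cost of a small monotonicity observation that it states rather tersely.
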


\begin{proof} 
Let $M$ be a ground set and let $g: 2^M \to \nonnegR$ be a positive lower envelope of $f$ of rank $R$.
We consider $G=(V,E,w)$, the hypergraph representation of $g$.
Note that $G$ must have only positive hyperedges, from the definition of positive lower envelope.
We cancel the hyperedges of $G$ of size less than $R$ as follows.
Let $e$ be an hyperedge of $G$ of size less than $R$.
We arbitrarily introduce a new hyperedge of size $R$ that contains all the vertices of $e$ and has value $w(e)$.
When having only hyperedges of rank $R$, we just take their average value and assign it to each possible subset of rank $R$ as its new value.
The proof of Lemma~\ref{l:canonicalPLE_Symmetric} follows by symmetry of $f$ and by uniqueness of $g$ after being modified.
\end{proof}

\begin{lemma} \label{l:negExampleNminus1}
Let $r, R \in \mathbb{N}$.
If there exists a monotone symmetric set function $f: \{0, \ldots, m\} \to \nonnegR$ of rank $r$ with no positive lower envelope of rank $R$,
then there exists a monotone symmetric set function $f': \{0, \ldots m'\} \to \nonnegR$ of rank $r$, such that any positive hypergraph set function of rank $R$
that obeys $g(m')=f'(m')$, must have $g(m'-1) > f'(m'-1)$.
\end{lemma}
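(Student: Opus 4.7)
The plan is to reduce to the canonical positive lower envelope via Lemma~\ref{l:canonicalPLE_Symmetric} and then induct on the gap between $m$ and the violation index. Concretely, if $f$ has no PLE of rank $R$, then its canonical PLE $g_f$, given by $g_f(j) = \binom{j}{R} f(m)/\binom{m}{R}$ for $R \leq j \leq m$ and $g_f(j) = 0$ otherwise, must overestimate $f$ at some index. Since $g_f(m) = f(m)$ and $g_f(j) = 0 \leq f(j)$ for $j < R$, there is a largest index $k^*$ with $R \leq k^* \leq m-1$ and $g_f(k^*) > f(k^*)$. The induction will be on the quantity $m - k^*$.

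In the base case $m - k^* = 1$, one simply takes $f' := f$ and $m' := m$; the violation at $k^* = m-1$ is precisely what the conclusion requires, and Lemma~\ref{l:canonicalPLE_Symmetric} transfers it to every candidate positive hypergraph $g$ of rank $R$ matching $f'$ at $m'$ (any such $g$ may be symmetrized without increasing its value at subsets of size $m'-1$, and the symmetrized version is dominated by the canonical one).

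For the inductive step, suppose $k^* \leq m-2$. By the maximality of $k^*$ we have $g_f(m-1) \leq f(m-1)$, which rearranges to $f(m-1)/f(m) \geq (m-R)/m$. I then restrict $f$ to any $(m-1)$-element subset of its universe to obtain $f''$; symmetry, monotonicity, and rank at most $r$ are inherited, since the hypergraph representation of $f''$ is simply the restriction of the hypergraph representation of $f$. The canonical PLE of $f''$ is $g_{f''}(j) = \binom{j}{R} f(m-1)/\binom{m-1}{R}$, and a direct computation gives
\[
\frac{g_{f''}(k^*)}{g_f(k^*)} \;=\; \frac{f(m-1)}{f(m)} \cdot \frac{m}{m-R} \;\geq\; \frac{m-R}{m} \cdot \frac{m}{m-R} \;=\; 1,
\]
so $g_{f''}(k^*) \geq g_f(k^*) > f(k^*) = f''(k^*)$. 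Thus $f''$ is a counterexample on a ground set of size $m-1$ with violation at the same index $k^*$, and $(m-1) - k^* < m - k^*$. The induction hypothesis then produces the desired $f'$.

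The main conceptual point is the dichotomy: if the canonical envelope's violation is not already at the top level $m-1$, then the very absence of a violation at $m-1$ forces $f(m-1)/f(m)$ to be large enough for restriction to preserve (indeed, amplify) the violation at $k^*$. Everything else — preservation of rank, symmetry, and monotonicity under restriction, and the reduction from arbitrary positive hypergraph candidates $g$ to the canonical symmetric one via Lemma~\ref{l:canonicalPLE_Symmetric} — follows routinely from the definitions, so the only real obstacle is spotting the correct invariant to drive the induction.
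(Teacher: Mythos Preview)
Your argument is correct and is essentially the paper's proof: both restrict to a ground set of size $m-1$ and show the canonical-envelope violation persists, the paper via a minimality argument on $m$ and you via explicit induction on $m-k^*$, which are equivalent descents. One minor slip: in your base-case parenthetical you write that the symmetrized $g$ is ``dominated by the canonical one,'' but the inequality goes the other way---for any symmetric positive hypergraph $\bar g$ of rank $\le R$ with $\bar g(m')=f'(m')$ one has $\bar g(m'-1)\ge \tfrac{m'-R}{m'}f'(m')$ (since $\binom{m'-1}{i}/\binom{m'}{i}=(m'-i)/m'$ is decreasing in $i$), which is exactly what is needed to derive the contradiction; the conclusion is right, only the stated direction is reversed.
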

\begin{proof}
Let $f: \{0, \ldots, m\} \to \nonnegR$ be a monotone symmetric set function of rank $r$ with no positive lower envelope of rank $R$, such that $m$ is the smallest possible.
Let $g$ be the canonical positive lower envelope of rank $R$ of $f$. Then, there exists $R \le k \le m-1$ for which $g(k) > f(k)$. 
We claim that since $m$ is the smallest possible, we can just take $k = m-1$ for $f$ itself. Suppose otherwise that $g(m-1) \le f(m-1)$, and for some other $k < m-1$ it holds that $g(k) > f(k)$. Scale $g$ by multiplying it by $f(m-1)/g(m-1)$ and denote it by $g'$. Then, it holds that $g'(m-1) = f(m-1)$ and $g'(k) > f(k)$. Additionally, $g'$ is the canonical positive lower envelope of rank $R$ of $f$ defined on $\{0, \ldots, m-1\}$. Therefore, by Lemma~\ref{l:canonicalPLE_Symmetric}, $f$ defined on $\{0, \ldots, m-1\}$ is also a function of rank $r$ with no positive lower envelope of rank $R$. 
Contradiction.
The proof of Lemma~\ref{l:negExampleNminus1} follows.
\end{proof}

\begin{corollary} \label{c:PleSufficient}
Let $r \geq 1$.
In order to show that any monotone symmetric set function of rank $r$ has a positive lower envelope of rank $R$,
it is sufficient to show that any monotone symmetric set function $f: \{0, \ldots, m\} \to \nonnegR$ of rank $r$ obeys
$f(m-1) \geq \frac{m-R}{m}f(m)$.
\end{corollary}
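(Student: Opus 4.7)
The plan is to combine the two preceding lemmas directly, with only a small ratio computation added. First, I would make explicit the numerical value of the canonical positive lower envelope at cardinality $m-1$. By Lemma~\ref{l:canonicalPLE_Symmetric}, the canonical PLE of rank $R$ assigns weight $f(m)/\binom{m}{R}$ to every $R$-subset of the ground set and weight $0$ elsewhere. Hence for a set of size $m-1$,
\[
g_f(m-1) \;=\; \binom{m-1}{R}\cdot \frac{f(m)}{\binom{m}{R}} \;=\; \frac{m-R}{m}\, f(m).
\]
So the hypothesized inequality $f(m-1) \geq \frac{m-R}{m}f(m)$ is precisely the ``no overestimate'' condition of Definition~\ref{def:PLE} for $g_f$ at cardinality $m-1$.

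Next I would argue by contradiction. Suppose some monotone symmetric function $f:\{0,\ldots,m\}\to\nonnegR$ of rank $r$ admits no PLE of rank $R$. Apply Lemma~\ref{l:negExampleNminus1} to obtain a monotone symmetric $f':\{0,\ldots,m'\}\to\nonnegR$ of rank $r$ with the property that every positive hypergraph function $g$ of rank $R$ with $g(m')=f'(m')$ satisfies $g(m'-1) > f'(m'-1)$. Instantiating this property with $g=g_{f'}$, the canonical PLE from Lemma~\ref{l:canonicalPLE_Symmetric} (which by construction satisfies $g_{f'}(m')=f'(m')$), and substituting the computation above yields
\[
f'(m'-1) \;<\; g_{f'}(m'-1) \;=\; \frac{m'-R}{m'}\, f'(m'),
\]
in direct contradiction with the hypothesis of the corollary applied to $f'$ (which is itself a monotone symmetric function of rank $r$ on a ground set of size $m'$).

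The argument therefore reduces to purely assembling Lemmas~\ref{l:canonicalPLE_Symmetric} and~\ref{l:negExampleNminus1} plus one line of binomial arithmetic, so there is no real obstacle once those lemmas are in hand; the conceptual work has already been done in establishing that (i) among all PLEs of a symmetric function one may restrict attention to the canonical one, and (ii) among all potential counterexamples to existence of a rank-$R$ PLE one may restrict attention to those that fail only at cardinality $m-1$. The only point that warrants a brief sanity check in the write-up is that the canonical PLE is automatically non-overestimating at cardinalities strictly less than $m-1$: for $|S|<R$ it vanishes while $f(S)\geq 0$, and for $R\leq |S|\leq m-2$ the reduction via Lemma~\ref{l:negExampleNminus1} is exactly what removes the need for a separate check, since otherwise one could restrict to the minimal counterexample on a smaller ground set.
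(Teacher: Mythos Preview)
Your proposal is correct and follows essentially the same approach as the paper: compute that the canonical rank-$R$ envelope satisfies $g(m-1)=\binom{m-1}{R}f(m)/\binom{m}{R}=\frac{m-R}{m}f(m)$, and then invoke Lemma~\ref{l:negExampleNminus1} to reduce any potential counterexample to a violation at cardinality $m'-1$. The paper's write-up is terser (it simply observes the ratio and cites Lemma~\ref{l:negExampleNminus1}), but your explicit contradiction argument via instantiating $g=g_{f'}$ is exactly the intended unpacking.
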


\begin{proof}
Let $f: \{0, \ldots, m\} \to \nonnegR$ be a monotone symmetric set function of rank $r$.
Let $g$ be the canonical positive lower envelope of rank $R$ of $f$.
Observe that $g(m-1) = \binom{m-1}{R} \frac{f(m)}{\binom{m}{R}} = \frac{m-R}{m}f(m)$.
The proof of Corollary~\ref{c:PleSufficient} follows by Lemma~\ref{l:negExampleNminus1}.
\end{proof}

From Corollary~\ref{c:PleSufficient} together with Lemma~\ref{l:canonicalPLE_Symmetric}, we have: 
\begin{corollary} \label{c:SymmetricPleClassificationByN-1}
Let $r \geq 1$.
Any monotone symmetric set function of rank $r$ has a positive lower envelope of rank $R$,
if and only if any monotone symmetric set function $f: \{0, \ldots, m\} \to \nonnegR$ of rank $r$ obeys
$f(m-1) \geq \frac{m-R}{m}f(m)$.
Moreover, if a \emph{specific} function $f: \{0, \ldots, m\} \to \nonnegR$ does not obey $f(m-1) \geq \frac{m-R}{m}f(m)$, it has no positive lower envelope of rank $R$.
\end{corollary}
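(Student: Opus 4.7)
The plan is to read the corollary as a clean packaging of the two preceding results: Corollary~\ref{c:PleSufficient} handles the ``if'' direction, while Lemma~\ref{l:canonicalPLE_Symmetric} handles the ``only if'' direction and, by the contrapositive, the ``moreover'' clause. No genuinely new argument is needed; the work is just to check that the canonical value matches the threshold $\frac{m-R}{m} f(m)$ appearing in the statement, and to apply each ingredient in the correct direction.

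For the ``if'' direction, I would simply invoke Corollary~\ref{c:PleSufficient} verbatim: it already asserts that whenever every monotone symmetric rank-$r$ function $f$ (on any ground set) satisfies $f(m-1) \geq \frac{m-R}{m} f(m)$, each such function admits a positive lower envelope of rank $R$. This requires no further computation.

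For the ``only if'' direction together with the ``moreover'' clause, I would argue by contradiction. Fix a specific monotone symmetric $f: \{0,\ldots,m\} \to \nonnegR$ of rank $r$ with $f(m-1) < \frac{m-R}{m} f(m)$, and suppose it admits some positive lower envelope of rank $R$. By Lemma~\ref{l:canonicalPLE_Symmetric}, the canonical positive lower envelope $g_f$ of rank $R$ must then itself be a valid positive lower envelope of $f$. By construction $g_f$ places equal weight $f(m)/\binom{m}{R}$ on every $R$-subset and zero weight elsewhere, so $g_f(m-1) = \binom{m-1}{R}\cdot f(m)/\binom{m}{R} = \frac{m-R}{m} f(m) > f(m-1)$, which violates the no-overestimate condition of Definition~\ref{def:PLE}. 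This contradiction establishes the ``moreover'' clause (a specific $f$ failing the inequality has no PLE of rank $R$) and, by universally quantifying over all such $f$, the ``only if'' direction of the equivalence.

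I do not anticipate any real obstacle here, since both ingredients are already proved; the most subtle point is recognizing that Lemma~\ref{l:canonicalPLE_Symmetric} is being used in its conditional form (\emph{if} some PLE of rank $R$ exists, \emph{then} the canonical one does), so that a failure of the canonical PLE at cardinality $m-1$ rules out all PLEs of rank $R$ simultaneously.
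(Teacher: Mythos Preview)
Your proposal is correct and matches the paper's approach exactly: the paper's proof is simply the one-line statement ``From Corollary~\ref{c:PleSufficient} together with Lemma~\ref{l:canonicalPLE_Symmetric}, we have,'' and you have spelled out precisely how each ingredient is used, including the key computation $g_f(m-1)=\binom{m-1}{R}f(m)/\binom{m}{R}=\frac{m-R}{m}f(m)$. If anything, your write-up is more detailed than the paper's own justification.
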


\subsection{Upper bound for symmetric hypergraph functions with rank $k$}
\label{sec:symmetric-upper}

In this section we prove Theorem~\ref{t:PleForAnyMonotoneSymmetricFunction}, which asserts that every monotone symmetric set function of rank $r$ (positive and negative) is $\MOPH$-$(3r^2)$.

\begin{proofof} {Theorem~\ref{t:PleForAnyMonotoneSymmetricFunction}}
The proof is by the positive lower envelope technique (i.e. Proposition~\ref{lem:lower-envelope}).
Let $f$ be a monotone symmetric set function of rank $r$, and let $H$ be its hypergraph representation. 
Consider the following monotone symmetric set function $g$ defined by its positive hypergraph representation $G=(V_G,E_G,w_G)$ that has hyperedges of rank $R=3r^2$ of value $f(U) / \binom{m}{R}$ and no other (non-zero) hyperedges. 

We claim that $g$ is a lower envelope for $f$.\footnote{Proof for any restriction of $f$ to a subset is essentially the same.} 
There are three conditions to check. Two of them trivially hold, namely, $g(0) = f(0) = 0$, and $g(m) = \binom{m}{R} w_G(R) = f(m)$. 
The remaining condition requires that $g(k) \le f(k)$ for every $1 \le k \le m-1$. 
This trivially holds for $k < R$ because in this case $g(k) = 0$, whereas $f(k) \ge 0$. 
Hence the main content of our proof is to establish the inequality $g(k) \le f(k)$ for every $R \le k \le m-1$.

Suppose for the sake of contradiction that there is some $f$ that serves as a negative example, namely, that for this $f$ there is $R \le k \le m-1$ for which $g(m) > f(m)$. 
By Corollary~\ref{c:PleSufficient} it is sufficient to show that $f(m-1) \geq \frac{m-R}{m} f(m)$, in order to derive a contradiction.
We now develop some machinery for this aim. 

\begin{prop}
\label{p:polynomial}
There is some polynomial $F$ of degree at most $r$ such that $F(X) = f(x)$ whenever $x \in \{0,1, \ldots, m\}$.
\end{prop}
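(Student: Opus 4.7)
The plan is to exploit two simple facts: the symmetry of $f$ forces its hypergraph representation to be symmetric (so hyperedges of equal cardinality carry equal weight), and the number of hyperedges of size $i$ contained in a set of size $x$ is $\binom{x}{i}$, which is itself a polynomial of degree $i$ in $x$. Combining these will immediately yield the required polynomial $F$.

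More concretely, I would first record the observation that the unique hypergraph representation $h$ of a symmetric set function is symmetric: if $\sigma$ is any permutation of $M$, then $\sigma$ induces a corresponding representation $h \circ \sigma$ that also represents $f$, and by uniqueness $h = h\circ \sigma$. Hence there are constants $h_1, h_2, \ldots, h_r$ (and $h_i = 0$ for $i > r$ by the rank hypothesis) such that $h(T) = h_{|T|}$ for every nonempty $T \subseteq M$. Then for any $S \subseteq M$ with $|S| = x$,
\begin{equation*}
f(S) \;=\; \sum_{T \subseteq S} h(T) \;=\; \sum_{i=1}^{r} \binom{x}{i}\, h_i.
\end{equation*}

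The key step is now to define
\begin{equation*}
F(x) \;\eqdef\; \sum_{i=1}^{r} \binom{x}{i}\, h_i \;=\; \sum_{i=1}^{r} \frac{x(x-1)\cdots(x-i+1)}{i!}\, h_i.
\end{equation*}
Each term $\binom{x}{i}$ is a polynomial in the variable $x$ of degree exactly $i$, so $F$ is a polynomial of degree at most $r$. By the display above, $F(x) = f(x)$ for every integer $x \in \{0, 1, \ldots, m\}$ (with $F(0) = 0 = f(\emptyset)$ handled by the empty sum convention), which is exactly what the proposition asserts.

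There is no real obstacle here: once symmetry of the hypergraph representation is noted, the rest is a one-line calculation using the fact that $\binom{x}{i}$ is polynomial in $x$. The proof fits in a few lines and requires no additional tools beyond the hypergraph representation already introduced in Section~\ref{sec:prelims}.
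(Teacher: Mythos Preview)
Your proof is correct and follows essentially the same approach as the paper: define $F(x)=\sum_{i=1}^{r}\binom{x}{i}h(i)$ using the (symmetric) hypergraph representation, and note that each $\binom{x}{i}$ is a degree-$i$ polynomial in $x$. You simply add a short justification for why $h$ is constant on sets of equal size, which the paper records as a separate observation.
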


\begin{proof}
The polynomial is $F(x) = \sum_{i=1}^r \binom{x}{i}h(i)$, where $h(i)$ are constants derived from the hypergraph representation $H$, 
and $\binom{x}{i}$ is the polynomial $\frac{1}{i!}x(x-1) \ldots (x - i + 1)$. This concludes the proof of Proposition~\ref{p:polynomial}.
\end{proof}

Given that $F(x)$ is a polynomial of degree at most $r$, we shall use Markov's inequality regarding derivatives of polynomials~\cite{markov}.

\begin{theorem} [Markov~\cite{markov}] \label{t:markov}
Let $p$ be a polynomial of degree $d$ and let $p'$ denote its derivative. Then
$$\max_{-1 \le x \le 1} |p'(x)| \le d^2\max_{-1 \le x \le 1} |p(x)|$$
\end{theorem}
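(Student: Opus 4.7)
The plan is to use the classical extremal property of the Chebyshev polynomials $T_d(x) = \cos(d\arccos x)$, which already show that the constant $d^2$ is tight: $|T_d(x)| \le 1$ on $[-1,1]$ yet $T_d'(\pm 1) = \pm d^2$. Normalizing so that $\max_{[-1,1]} |p(x)| = 1$, the task splits sharply into an easy interior estimate and a much harder endpoint estimate, each handled by different tools.

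The interior is handled via the trigonometric substitution $x = \cos\theta$. The function $\phi(\theta) := p(\cos\theta)$ is a trigonometric polynomial of degree $d$ with $\|\phi\|_\infty \le 1$; Bernstein's inequality for trigonometric polynomials (a standard prerequisite, proved via the Riesz interpolation formula or an extremal-kernel argument) gives $\|\phi'\|_\infty \le d$, and the chain rule $\phi'(\theta) = -\sin\theta \cdot p'(\cos\theta)$ then yields the pointwise bound $|p'(x)| \le d/\sqrt{1-x^2}$. This is already at most $d^2$ on the bulk region $|x| \le \sqrt{1 - 1/d^2}$, leaving only a narrow boundary layer near $x = \pm 1$ to analyze.

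The main obstacle is the endpoint estimate at $x = \pm 1$, where the Bernstein bound degenerates. My approach would be interpolation at the $d+1$ Chebyshev extrema $\xi_k := \cos(k\pi/d)$, $k = 0, \ldots, d$. Since $p$ has degree at most $d$, the linear functional $p \mapsto p'(1)$ can be written as $p'(1) = \sum_{k=0}^d \mu_k\, p(\xi_k)$, with coefficients $\mu_k$ determined by Lagrange interpolation at these nodes; thus $|p'(1)| \le \bigl(\sum_k |\mu_k|\bigr)\|p\|_\infty$. The key claim is that $\sum_k |\mu_k| = d^2$: the weights $\mu_k$ alternate in sign, so the extremum of $p'(1)$ over $\|p\|_\infty \le 1$ is attained when $p(\xi_k) = (-1)^k$, and the unique polynomial of degree $\le d$ with these prescribed values at the $\xi_k$ is precisely $T_d$, giving $\sum_k |\mu_k| = T_d'(1) = d^2$.

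The technically delicate step is verifying the sign-alternation of the $\mu_k$. I would do this either by direct evaluation of the Lagrange weights using the explicit formula for $T_d$ and the discrete orthogonality of cosines at the Chebyshev nodes, or by a cleaner equioscillation argument: any extremal $p$ for the functional $p \mapsto p'(1)$ on the unit ball of $C[-1,1]$ must equioscillate between $\pm 1$ at $d+1$ points, and the Chebyshev equioscillation theorem then forces $p = \pm T_d$. Once the endpoint bound is secured, combining with the interior Bernstein estimate yields the uniform Markov bound $\|p'\|_\infty \le d^2 \|p\|_\infty$ on $[-1,1]$, completing the proof.
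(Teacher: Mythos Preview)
The paper does not supply a proof of this statement: Markov's inequality is quoted as a classical theorem from the literature (with a citation) and used as a black box inside the proof of Theorem~\ref{t:PleForAnyMonotoneSymmetricFunction}. So there is no ``paper's own proof'' to compare against.

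Your sketch follows one of the standard routes to Markov's inequality (Bernstein's trigonometric bound for the bulk, a Chebyshev-extrema interpolation argument near the edge), and the ingredients you name are the right ones. There is, however, a genuine gap in the outline as written. Bernstein's inequality controls $|p'(x)|$ on the region $|x|\le\sqrt{1-1/d^{2}}$, and your interpolation argument at the Chebyshev extrema bounds $|p'(\pm 1)|$. But the maximum of $|p'|$ over the boundary layer $\sqrt{1-1/d^{2}}<|x|<1$ need not be attained at the endpoint (recall $p'$ has degree $d-1\ge 2$ once $d\ge 3$), so simply ``combining'' the interior and endpoint bounds does not cover the whole interval. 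The usual fix is Schur's inequality: since $q:=p'$ has degree $d-1$ and, by Bernstein, satisfies $|q(x)|\sqrt{1-x^{2}}\le d$ on $(-1,1)$, Schur's bound yields $\|q\|_{\infty}\le d\cdot d = d^{2}$ on all of $[-1,1]$. Alternatively one can run the extremal/equioscillation argument for the functional $p\mapsto p'(x_{0})$ at an arbitrary $x_{0}$ in the boundary layer (not just $x_{0}=\pm 1$), show the extremizer is still $\pm T_{d}$, and then use $|T_{d}'(x_{0})|\le T_{d}'(1)=d^{2}$; this works but needs to be argued. Either way, the missing step is precisely this bridging of the boundary layer.
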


By a simple transformation of the range on which the polynomial is defined, Theorem~\ref{t:markov} implies that for our polynomial $F(x)$ the following holds:

\begin{equation}\label{eq:markov}
\max_{0 \le x \le m} |F'(x)| \le \frac{2r^2}{m}\max_{0 \le x \le m} |F(x)|
\end{equation}

Let $MAX = \max_{0 \le x \le m} |F(x)|$ and let $0 \le y \le m$ be such that $|F(y)| = MAX$. If $y$ is an integer then monotonicity of $f$ (and hence of $F$ on integer points) implies that $MAX = f(m)$. 
However, $y$ need not be integer. In that case $i < y < i+1$ for some $0 \le i \le m-1$. Let $max = \max \{ |F(i)|,|F(i+1)| \}$. 
Then,
$$MAX \le max + \frac{1}{2}\max_{i \le x \le i+1}|F'(x)| \le f(m) + \frac{1}{2}\max_{0 \le x \le m}|F'(x)| \le f(m) + \frac{r^2}{m} MAX$$
where the last inequality was derived from inequality~(\ref{eq:markov}). As $m \ge R = 3r^2$, we obtain that $MAX \le 3f(m)/2$.

Now we use inequality~(\ref{eq:markov}) to bound $f(m-1) = F(m-1)$ from below.

$$F(m-1) \ge f(m) - \max_{0 \le x \le m}F'(x) \ge f(m) - \frac{2r^2}{m} MAX \ge f(m) - \frac{3r^2}{m}f(m)$$

Using $R = 3r^2$ we have that $f(m - 1) \ge (1 - \frac{R}{m})f(m) = g(m-1)$, as desired.
This concludes the proof of Theorem~\ref{t:PleForAnyMonotoneSymmetricFunction}.
\end{proofof}

\section{Proof of PoA Lower Bound Theorem \ref{thm:poa-lb}}
\label{sec:poa-lb}
\begin{proof}
Consider a projective plane of order $k$.
It has $k(k-1) + 1$ items, $k(k-1) + 1$ bundles, each bundle contains $k$ items, each item is contained in $k$ bundles, every two bundles intersect in exactly one item (and for every two items there is exactly one bundle that contains them, a fact not needed in the proof).

Suppose there are $k(k-1) + 1$ players, each desiring a distinct bundle and valuing it at~1.
We argue that the following is a mixed Nash equilibrium for simultaneous first price auction.
Each player selects at random a value $x \in [0,\frac{1}{k}]$ distributed as $Pr[x \le t] = (kt)^{\frac{1}{(k-1)^2}}$, and bids this value $x$ on each of the items in his bundle.
Fix a player $i$. We show that, given that other players follow the Nash strategy, every value of $x$ gives player $i$ expected utility exactly~0.
For every item in player $i$'s bundle, player $i$ competes with $k-1$ additional players who play according to the Nash strategy, and therefore wins this item with probability $(kx)^{\frac{1}{(k-1)}}$.
Thus, player $i$ wins his entire desired bundle (of $k$ items) with probability $(kx)^{\frac{k}{(k-1)}}$; this is exactly his expected value.
The expected number of items player $i$ wins is $k \cdot (kx)^{\frac{1}{(k-1)}}$, amounting to an expected payment of $(kx)^{\frac{k}{(k-1)}}$.
It follows that the expected utility --- the probability of getting the bundle minus the expected payments --- is~0, for every value $x \in [0,\frac{1}{k}]$, as claimed.

It remains to show that whenever a player does not bid exactly the same value on all his items, the expected utility is negative.
To see this, fix a player $i$, two arbitrary items in his bundle, and his bids on the other $(k-2)$ items in his bundle (not necessarily equal bids).
Suppose that all other players bid according to the Nash strategy, and let $x$ and $y$ denote player $i$'s bids on the two designated items.
It suffices to show that $x=y$ in every best response of player $i$.
By the best response condition, the derivative of player $i$'s expected utility with respect to $x$ equals~0, and the same holds with respect to $y$.
The expected utility can be expressed as follows:
$$
\alpha x^{\frac{1}{k-1}} y^{\frac{1}{k-1}} - \beta - x(kx)^{\frac{1}{k-1}} - y(ky)^{\frac{1}{k-1}},
$$
where $\alpha$ and $\beta$ are constants that do not depend on $x$ and $y$.
The derivative of the expected utility with respect to $x$ is
$$
\frac{\alpha}{k-1} y^{\frac{1}{k-1}} x^{\frac{2-k}{k-1}} - \frac{k}{k-1} (kx)^{\frac{1}{k-1}}.
$$
The derivative with respect to $y$ is obtained by swapping $x$ and $y$ in the last expression.
Equating both derivatives to~0 and solving the obtained system of equations gives us $x=y$ for every $k\neq 0$, as desired.


Given the above, consider $k$ such independent projective planes (hence there are $k^2(k-1) + k$ players), each with such a Nash equilibrium. Now add $k(k-1) + 1$ auxiliary players, where auxiliary player $i$ wants a bundle composed of the $i$th item of each projective plane. The optimal solution is to give each auxiliary player his desired bundle, giving value $k(k-1) + 1$. However, given the equilibrium of the original players, the unique optimal strategy for the auxiliary players is not to bid at all.
Indeed, they are faced with strictly more competition than the Nash players, and the Nash players have expected utility of~0.
(For concreteness, following the same reasoning as above, a player's best response is to bid equally on all items in his bundle.
For any bid $x$, he derives an expected value of $(kx)^{\frac{k^2}{(k-1)^2}}$, which is strictly smaller than his expected payment, being $(kx)^{\frac{k}{(k-1)^2}+1}$.)
Hence in the Nash solution in each projective plane only one player gets value, giving a total value of $k$, and the price of anarchy for this example is $\frac{k(k-1) + 1}{k} = k - 1 + \frac{1}{k}$.
\end{proof}

\section{Smoothness of the Simultaneous Auction}
\label{sec:smoothness}\label{SEC:SMOOTHNESS}

\Omit{Before proving Theorem~\ref{t:poa} in its full generality, we will first provide a direct proof of a weaker version of it.  This is to provide some intuition for our techniques, before developing the tools required for a smoothness argument.

We begin with some notation that is used throughout this section.
Let $X_i \subseteq M$ denote the set of items allocated to agent $i$.
A feasible allocation is given by a vector $X=(X_1, \ldots, X_n)$, where $X_i \cap X_j = \emptyset$ for every $i \neq j$.
Given a profile of valuations $\vals$, the social welfare (or efficiency) of an allocation $X$ is the sum of the agents' valuations: $SW(X;\vals)=\sum_{i\in [n]}\val_i(X_i)$.
Given a valuation profile $\vals$, we denote by $X^*(\vals)$ the optimal allocation for $\vals$ and by $\opt(\vals)$ its value.
When clear in the context we drop $\vals$, and use $X^*$ or $\opt$ as appropriate.


\begin{theorem}\label{thm:mixed-poa}
For simultaneous first price auctions when bidders have $\MOPH$-$k$ valuations, the price of anarchy with respect to mixed Nash equilibria is at most $4k$.
\end{theorem}
\begin{proof}
%
Let $B\in \Delta(\R^{n\times m}_+)$ be a randomized bid profile that constitutes a mixed Nash equilibrium.  We will denote with $\rbid$ a random sample from distribution $B$.  A bid profile $\rbid$ defines an \emph{effective price} for each item $j$, given by $p_j = \max_i \rbid_{ij}$.  That is, the price of an item is the maximum bid placed on that item.  Distribution $B$ therefore induces a distribution over price profiles; let $P$ be this distribution of prices.  We will denote with $\rprice$ a random sample from $P$. Let $P_j$ be the marginal distribution of prices induced on item $j$.

We assume that each player's valuation over items is a $\MOPH$-$k$ valuation. Fix a player $i$, and let $v_i^*:2^M\to \nonnegR$ be the positive lower envelope of  $v_i$ of rank $k$, when the universe of items is restricted to $X_i^*$.  That is, $v_i(X_i^*) = v_i^{*}(X_i^*)$ and, for any set $T\subseteq [m]$, $v_i(T)\geq v_i^{*}(T)$.  We will denote with $w^*:2^M\to \nonnegR$ the positive hypergraph representation of $v_i^*$.

We will consider the following randomized strategy $B_i'$ for player $i$. For each item $j\in \opti$, select $2k$ samples from the distribution $P_j$ and set $B_{ij}'$ to the maximum of these samples,
plus some negligible amount $\epsilon$ which goes to $0$. Denote with $\rbid_i'$ a random sample of $B_i'$.


Now consider the expected utility of player $i$ if he were to deviate to strategy $B_i'$.  This is
\begin{align*}
\E[u_i(\rbid_i',\rbid_{-i})]
=~& \E\left[ v_i(X_i(\rbid_i',\rbid_{-i}))  - \sum_{j\in X_i(\rbid_i',\rbid_{-i})} \rbid_{ij}'\right]
\geq \E\left[ v_i^*(X_i(\rbid_i',\rbid_{-i}))  - \sum_{j\in X_i(\rbid_i',\rbid_{-i})} \rbid_{ij}'\right]\\
\geq~& \E\left[ \sum_{S\subseteq X_i(\rbid_i',\rbid_{-i})}w^*(S) - \sum_{j\in X_i(\rbid_i',\rbid_{-i})} \rbid_{ij}'\right]
\geq \sum_{S\subseteq \opti}w^*(S) \Pr\left[\wedge_{j\in S} (\rbid_{ij}'> \rprice_j)\right] - \sum_{j\in \opti} \E[\rbid_{ij}']
\end{align*}
where the last inequality follows from linearity of expectation.
Now observe that as $\epsilon$ goes to $0$, we have $\E[\rbid_{ij}'] \leq 2k\cdot\E_{\rprice_j \sim P_j}[\rprice_j]$ (since the expectation of the maximum of $2k$ samples is at most
the expectation of the sum).  Moreover, since $v_i^*$ has rank $k$, we have:
\begin{align*}
\E[u_i(\rbid_i',\rbid_{-i})]
& \geq \sum_{S\subseteq \opti: |S|\leq k}w^*(S)\Pr\left[\wedge_{j\in S} (\rbid_{ij}'> \rprice_j)\right] - \sum_{j\in \opti} \E[\rbid_{ij}']
\end{align*}
%
%
%
%
Moreover, the union bound implies that the probability of winning a set $S$ of items satisfies
\[ \Pr\left[\wedge_{j\in S}(\rbid_{ij}'> \rprice_j)\right]=1-\Pr\left[\vee_{j\in S}(\rbid_{ij}'\leq \rprice_j)\right] \geq 1-\sum_{j\in S}\Pr\left[\rbid_{ij}'\leq \rprice_j\right].\]
Since $\Pr[\rbid_{ij}' \leq \rprice_j]\leq \frac{1}{2k+1}$ (since this occurs only if $\rprice_j \sim P_j$ is larger than the $2k$ other independent draws from distribution $P_j$ that define $\rbid_{ij}'$), we get
\begin{align*}
\E[u_i(\rbid_i',\rbid_{-i})]\geq~&  \sum_{S\subseteq \opti: |S|\leq k}w^*(S)\cdot\left(1-\frac{k}{2k+1}\right)- 2k\cdot \sum_{j\in \opti} \E[\rprice_j]\geq~ \frac{1}{2} v_i(\opti)- 2k \cdot\sum_{j\in \opti} \E[\rprice_j].
\end{align*}
By invoking the equilibrium condition for player $i$ and then taking a sum over all players, we obtain
\begin{align*}
\sum_i \E[u_i(\rbid)]\geq \sum_i \E[u_i(\rbid_i',\rbid_{-i})]\geq \frac{1}{2}\sum_i v_i(\opti)-2k\cdot \sum_{j\in M}\E[\rprice_j]
\end{align*}
which implies that $\E[SW(X(\rbid))]+(2k-1)\E[\sum_{j\in[m]}\rprice_j]\geq \frac{1}{2}\opt$. Using the fact that expected revenue is at most the expected social welfare yields the desired bound.
\end{proof}
}

We now prove Theorem~\ref{t:poa} in its full generality by showing that the simultaneous first price auction is actually a Smooth Mechanism as defined in \cite{Syrgkanis2013}. This implies
an efficiency guarantee that extends to Bayes-Nash equilibria as well as to no-regret learning outcomes. For completeness we first present the definition and the main implication of smooth mechanisms.

\begin{defn}[\cite{Syrgkanis2013}]
A mechanism $M$ is $(\lambda,\mu)$-smooth if for any valuation profile $v$ there exists an action profile $a_i^*(v)$ such that for all $a\in \A$:
\begin{equation}
\sum_i u_i(a_i^*(v),a_{-i};v_i)\geq \lambda \opt(v) - \mu \sum_i P_i(a)
\end{equation}
\end{defn}

\begin{theorem}[\cite{Syrgkanis2013}] If a mechanism is $(\lambda,\mu)$-smooth then the Bayes-Nash and the correlated
price of anarchy is at most $\mu/\lambda$.
\end{theorem}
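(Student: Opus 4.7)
The plan is to follow the standard ``extension theorem'' argument for smooth mechanisms, first handling the correlated price of anarchy and then extending to Bayes--Nash via resampling. Let $A$ be a (possibly correlated) distribution over action profiles that constitutes a correlated equilibrium at a fixed valuation profile $v$. For each player $i$, the equilibrium condition gives $\E_{a \sim A}[u_i(a;v_i)] \geq \E_{a \sim A}[u_i(a_i^*(v), a_{-i}; v_i)]$, where $a_i^*(v)$ is the deviation guaranteed by $(\lambda,\mu)$-smoothness. Summing these inequalities over $i$ and applying the smoothness inequality pointwise in $a$ yields
\begin{equation*}
\sum_i \E_{a \sim A}[u_i(a;v_i)] \;\geq\; \lambda\,\opt(v) \;-\; \mu\, \E_{a \sim A}\!\left[\sum_i P_i(a)\right].
\end{equation*}
By quasi-linearity, $\sum_i u_i(a;v_i) = SW(a;v) - \sum_i P_i(a)$, and by individual rationality ($u_i \geq 0$) the total revenue is at most the total welfare. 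Substituting and rearranging gives $\mu\,\E[SW(A;v)] \geq \lambda\,\opt(v)$, i.e.\ the correlated PoA is at most $\mu/\lambda$.

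For Bayes--Nash equilibria, I would use the standard resampling trick. Types $v_i$ are drawn independently from distributions $F_i$ and each player uses a Bayesian strategy $\sigma_i(v_i)$. For each player $i$ of realized type $v_i$, consider the deviation in which $i$ first draws a fresh profile $w \sim F$ conditioned on $w_i = v_i$, and then plays $a_i^*(w)$. Because types are independent, the joint distribution of $(w_{-i}, \sigma_{-i}(w_{-i}))$ coincides with that of $(v_{-i}, \sigma_{-i}(v_{-i}))$, so from $i$'s interim viewpoint the opponents in the counterfactual look statistically identical to those faced in equilibrium. Applying the Bayes--Nash condition for every $i$ and every realized $v_i$, summing over $i$, taking expectation over all randomness, and invoking the $(\lambda,\mu)$-smoothness inequality pointwise at the resampled profile $w$, reproduces the same welfare bound in expectation over types.

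The main obstacle is the Bayes--Nash coupling: one must define the deviation so that, after taking all expectations, the opponents' actions in the counterfactual match the true equilibrium distribution, and so that the smoothness inequality (which is stated for a single fixed $v$) can be applied to the fresh sample $w$ rather than the actual equilibrium profile. Once the coupling is in place the remainder is bookkeeping, and the revenue-at-most-welfare step carries over verbatim from the correlated case since it only relies on quasi-linearity and non-negative equilibrium utilities, both standard properties of simultaneous first price auctions.
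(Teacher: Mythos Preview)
The paper does not give its own proof of this theorem; it is quoted verbatim as a known result from \cite{Syrgkanis2013} and is used as a black box in Appendix~\ref{sec:smoothness}. Your proposal is precisely the standard extension-theorem argument from that reference: the coarse-correlated / correlated case via the unconditional deviation $a_i^*(v)$, and the Bayes--Nash case via the independent-resampling coupling. Both parts are correct as sketched, so there is nothing substantive to compare.

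One small remark: your last rearrangement, from
\[
\E[SW] - \E\Bigl[\textstyle\sum_i P_i\Bigr] \;\ge\; \lambda\,\opt - \mu\,\E\Bigl[\textstyle\sum_i P_i\Bigr]
\]
to $\mu\,\E[SW] \ge \lambda\,\opt$, uses $\E[\sum_i P_i] \le \E[SW]$ together with the implicit assumption $\mu \ge 1$ (otherwise the coefficient $\mu-1$ on revenue is negative and you would instead use $\sum_i P_i \ge 0$, obtaining the bound $\max\{1,\mu\}/\lambda$). In every application in this paper $\mu \ge 1$, so this is harmless, but it is worth stating.
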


We will analyze a generic simultaneous single-item auction where each item $j$ is sold via an auction with some allocation and payment rule (i.e., not necessarily the first-price auction).
We will show that if each individual single-item auction is a $(\lambda,\mu)$-smooth mechanism
and bidders have $\MOPH$-$k$ valuations then the simultaneous single-item auction
is $\left(1-k+\min\left\{\lambda,1\right\}\cdot k,\mu\right)$-smooth.
We can then invoke known results about the smoothness of the first-price auction to complete the proof of Theorem~\ref{t:poa}.

%

\begin{theorem}\label{thm:smooth-poa}
The game defined by running $m$ simultaneous $(\lambda,\mu)$-smooth single-item auctions is $\left(1-k+\min\left\{\lambda,1\right\}\cdot k,\mu\right)$-smooth, when players have $\MOPH$-$k$ valuations.
\end{theorem}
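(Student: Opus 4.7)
The plan is to construct, for each player $i$, a (possibly randomized) deviation $b_i^*$ obtained by concatenating per-item smoothness deviations from each of the $m$ single-item auctions, and then to relate the sum of per-item utilities back to the actual simultaneous-auction utilities via the $\PH$-$k$ lower envelope. The complementarities in the $\MOPH$-$k$ valuations will enter the analysis at exactly one point, through a union bound whose loss is captured by the factor $k$. First I would invoke Proposition~\ref{p:PLEtoMOH} on the restriction of $v_i$ to $X_i^*$ to obtain a $\PH$-$k$ lower envelope $v_i^*$ with positive hypergraph representation $w_i^*$, so $v_i^*(X_i^*)=v_i(X_i^*)$, $v_i^* \le v_i$ pointwise on subsets of $X_i^*$, and every hyperedge of $w_i^*$ has size at most $k$. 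For $j \in X_i^*$ define the local value $V_{ij}^* = \sum_{T \subseteq X_i^*,\, j \in T} w_i^*(T)$, and set $V_{ij}^* = 0$ otherwise. Double-counting by hyperedge cardinality gives the two-sided bound
\[
v_i(X_i^*) \;\le\; \sum_{j \in X_i^*} V_{ij}^* \;=\; \sum_{T \subseteq X_i^*} |T|\, w_i^*(T) \;\le\; k\, v_i(X_i^*),
\]
which is the only place the rank parameter $k$ enters.

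Next, since each single-item auction is $(\lambda,\mu)$-smooth, for each $j$ there is a deviation $b_{ij}^*$ for each player $i$ (as a function of the declared values $V_{\cdot j}^*$) such that, writing $u_i^{(j)}(\cdot)$ for the single-item utility under the declared value $V_{ij}^*$, summing the per-auction smoothness inequalities yields
\[
\sum_j \sum_i u_i^{(j)}(b_i^*, b_{-i}) \;\ge\; \lambda \sum_j V_{i^*(j),j}^* \;-\; \mu \sum_j P_j(b) \;=\; \lambda \sum_i \sum_{j \in X_i^*} V_{ij}^* \;-\; \mu \sum_j P_j(b),
\]
where $i^*(j)$ is the unique player with $j \in X_{i^*(j)}^*$. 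The crucial step is to establish the per-player inequality
\[
u_i(b_i^*, b_{-i}) \;\ge\; v_i(X_i^*) - \sum_{j \in X_i^*} V_{ij}^* + \sum_j u_i^{(j)}(b_i^*, b_{-i}).
\]
To prove this, let $W_i = X_i(b_i^*, b_{-i})$. Monotonicity of $v_i$ and the lower-envelope property give $v_i(W_i) \ge v_i^*(W_i \cap X_i^*)$, and a union bound over those hyperedges of $w_i^*$ that intersect $X_i^* \setminus W_i$ yields $v_i^*(W_i \cap X_i^*) \ge v_i^*(X_i^*) - \sum_{j \in X_i^* \setminus W_i} V_{ij}^*$. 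Rearranging, subtracting the payments $\sum_{j \in W_i} p_{ij}$, and observing that $b_{ij}^* = 0$ for $j \notin X_i^*$ so those items contribute no payment, produces the desired per-player bound.

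Summing the per-player inequality over $i$ and substituting the single-item smoothness bound gives
\[
\sum_i u_i(b_i^*, b_{-i}) \;\ge\; \opt(v) + (\lambda - 1)\sum_i \sum_{j \in X_i^*} V_{ij}^* - \mu \sum_j P_j(b).
\]
A short case analysis on the sign of $\lambda - 1$, using the two-sided bound on $\sum V_{ij}^*$, completes the proof: if $\lambda \le 1$, the coefficient of $\opt(v)$ becomes $1 + (\lambda-1)k = 1 - k + \lambda k$; if $\lambda \ge 1$, the coefficient becomes $\lambda$, which is at least $1 = 1 - k + \min\{\lambda,1\}\cdot k$. Either way we obtain $(1 - k + \min\{\lambda,1\}\cdot k,\, \mu)$-smoothness, as required. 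The main obstacle is the union-bound step: each hyperedge of $w_i^*$ that fails to be entirely contained in $W_i$ must be charged to some missed item $j \in X_i^* \setminus W_i$ with $j \in T$, and this charging inflates the loss by at most a factor of $|T| \le k$. It is precisely the bounded rank of the $\PH$-$k$ representation, rather than any complement-free structure, that keeps this loss linear in $k$ and not in the size of the bundle.
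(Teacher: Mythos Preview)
Your proof is correct and follows essentially the same route as the paper's. The paper first isolates a reduction lemma (smoothness for $\PH$-$k$ implies smoothness for $\MOPH$-$k$ via the lower envelope) and then carries out the calculation for $\PH$-$k$ valuations by writing the winning probability of a hyperedge via the union bound $\Pr[S\subseteq W_i]\ge 1-\sum_{j\in S}(1-\Pr[j\in W_i])$; you merge these two steps by working directly with the lower envelope $w_i^*$ and phrasing the union bound as the set inequality $v_i^*(W_i\cap X_i^*)\ge v_i^*(X_i^*)-\sum_{j\in X_i^*\setminus W_i}V_{ij}^*$. The per-item values $V_{ij}^*=\sum_{T\ni j}w_i^*(T)$, the application of single-item smoothness to the induced profile $(V_{ij}^*)_i$, and the final case split on the sign of $\lambda-1$ are identical to the paper's argument.

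One minor imprecision: your remark that ``$b_{ij}^*=0$ for $j\notin X_i^*$'' is neither guaranteed by the single-item smoothness definition nor needed. The payment terms $\sum_j P_i^j$ appear identically on both sides of your per-player inequality and cancel, and the value terms match because $V_{ij}^*=0$ for $j\notin X_i^*$ by definition. So the per-player bound holds regardless of what the smoothness deviation does on items outside $X_i^*$.
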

\begin{proof}
We will first show that if we prove smoothness of the simultaneous auction for hypergraph-$k$ valuations then this immediately
implies smoothness for $\MOPH$-$k$ valuations (Lemma \ref{lem:hypergraph-to-MOPH}).  We will then complete the proof of Theorem \ref{thm:smooth-poa} by proving smoothness for the class of hypergraph-$k$ valuations.
\begin{lemma}\label{lem:hypergraph-to-MOPH}
If a simultaneous single-item auction game is $(\lambda,\mu)$-smooth for $\PH$-$k$ valuations then it is also
$(\lambda,\mu)$-smooth for the class of $\MOPH$-$k$ valuations.
\end{lemma}
\begin{proof}
Consider a valuation profile $\vals$.  Let $X^*(\vals)$ be the optimal allocation for valuation profile $\vals$, so that $X_i^*(\vals)$ is the allocation to player $i$ in this optimal allocation.  Since $\val_i$ is $\MOPH$-$k$, we have $\val_i(X_i^*(\vals)) = \max_{\ell \in \Ell} \val_i^{\ell}(X_i^*(\vals))$.  Choose $\val_i^* \in \arg\max_{\ell\in\Ell} \val_i^{\ell}(X_i^*(\vals))$, so $\val_i^*$ is the additive valuation supporting player $i$'s value for set $X_i^*(\vals)$.
Then, by definition, $\val_i(X_i^*(\vals)) = \val_i^*(X_i^*(\vals))$ and for any set $T\subseteq [m]$, $\val_i(T)\geq  \val_i^*(T)$.

Observe that a player's utility under valuation $v_i$ is at least as much as his utility under $v_i^*$: $u_i(a;v_i)\geq u_i(a;v_i^*)$.
Now consider the smoothness deviations $a_i^*(\vals^*)$ that correspond to valuation profile $\vals^*=(v_1^*,\ldots,v_n^*)$, which exist by the assumption that the mechanism is smooth for hypergraph-$k$ valuations. Then we get that for any action profile:
\begin{align*}
\sum_i u_i(a_i^*(\vals^*),a_{-i};v_i)\geq~& \sum_i u_i(a_i^*(\vals^*),a_{-i};v_i^*)\geq \lambda\opt(\vals^*)-\mu\sum_i P_i(a)\\
\geq~& \lambda SW(X^*(\vals);\vals^*)-\mu \sum_i P_i(a) = \lambda \opt(\vals) - \mu \sum_i P_i(a).
\end{align*}
\end{proof}

We will now move on to proving smoothness for the class of $\PH$-$k$ valuations.
Consider a $\PH$-$k$ valuation profile $\vals$ and for each valuation $v_i:2^M\to \nonnegR$, let $w_i:2^M\to \nonnegR$ be its positive hypergraph-$k$ representation. Also let $\opti$ be the optimal set of items for each player $i$. Consider an action profile $a=(a^j)_{j\in [m]}$ on each auction $j$ and each player deviating to some strategy $\tilde{a}_i=\left(\tilde{a}_i^j\right)_{j\in [m]}$. Denote with $\Pr(S,a)$ the probability of winning set $S$ under a randomized action profile $a$. Also we denote with $\ESk{S}$ the collection of subsets of a set $S$ of size at most $k$. Then by analyzing the utility of the player and applying the union bound in a generic way, we get
\begin{align*}
u_i(\tilde{a}_i,a_{-i})=~& \sum_{S\in \ESk{M}}w_i(S)\cdot \Pr(S,(\tilde{a}_i,a_{-i}))- \sum_{j\in M}P_i^j(\tilde{a}_i^j,a_{-i}^j)\\
\geq~& \sum_{S\in \ESk{\opti}}w_i(S)\cdot \Pr(S,(\tilde{a}_i,a_{-i}))- \sum_{j\in M}P_i^j(\tilde{a}_i^j,a_{-i}^j)\\
\geq~& \sum_{S\in \ESk{\opti}}w_i(S)\cdot \left(1-\sum_{j\in S}\left(1-\Pr\left(\{j\},(\tilde{a}_i,a_{-i}\right)\right)\right)- \sum_{j\in M}P_i^j(\tilde{a}_i^j,a_{-i}^j)\\
=~& \sum_{S\in \ESk{\opti}}w_i(S)\cdot(1-|S|)+\sum_{S\in \ESk{\opti}}w_i(S)\sum_{j\in S}\Pr\left(\{j\},(\tilde{a}_i,a_{-i})\right)- \sum_{j\in M}P_i^j(\tilde{a}_i^j,a_{-i}^j)\\
=~& \sum_{S\in \ESk{\opti}}w_i(S)\cdot(1-|S|)+\sum_{j\in M}\left\{\left(\sum_{S\in \ESk{\opti}: S\ni j}w_i(S)\cdot\mathrm{1}_{j\in \opti}\right)\cdot\Pr\left(\{j\},(\tilde{a}_i^j,a_{-i}^j)\right)-P_i^j(\tilde{a}_i^j,a_{-i}^j)\right\}.
\end{align*}

Consider the second summand in the above expression.  For each $j$, this corresponds to the utility of player $i$ under a deviation to action $\tilde{a}_i^j$, of a single-item auction, in which only
player $i$ has a value of $\sum_{s\in \ESk{\opti}: s\ni j}w_s$ for the auction and everyone else has a value of $0$.  Summing up over all players, this is a sum of deviating utilities for each player, over multiple single-item auctions:
\begin{align*}
\sum_{j\in M}\sum_{i\in P} \left\{\left(\sum_{S\in \ESk{\opti}: S\ni j}w_i(S)\cdot\mathrm{1}_{j\in \opti}\right)\cdot\Pr\left(\{j\},(\tilde{a}_i^j,a_{-i}^j)\right)-P_i^j(\tilde{a}_i^j,a_{-i}^j)\right\}.
\end{align*}
Thus, if we set $\tilde{a}^j$ equal to the smoothness deviation for the above valuation profile, we get that the latter expression is at least
\begin{align*}
\sum_{j\in M}\left\{\lambda\cdot \sum_{i\in P} \left(\sum_{S\in \ESk{\opti}: S\ni j}w_i(S)\cdot\mathrm{1}_{j\in \opti}\right) - \mu \sum_{i\in P} P_i^j(a^j)\right\}=
\lambda \sum_{i\in P}\sum_{S\in \ESk{\opti}}w_i(S)\cdot |S| - \mu \sum_{i\in P} P_i(a).
\end{align*}
Combining the above we get:
\begin{align*}
\sum_{i\in P} u_i(\tilde{a}_i,a_{-i}) \geq~& \sum_{i\in P}\sum_{S\in \ESk{\opti}}w_i(S)\cdot(1-|S|)+\lambda \sum_{i\in P} \sum_{S\in \ESk{\opti}}w_i(S)\cdot |S| - \mu \sum_{i\in P} P_i(a)\\
=~& \sum_{i\in P}\sum_{S\in \ESk{\opti}}w_i(S)\cdot\left(1-(1-\lambda)|S|\right) - \mu \sum_{i\in P} P_i(a).
\end{align*}
If $\lambda<1$ then we use the fact that $|S|\leq k$ to get the $(1-k+\lambda k,\mu)$-smoothness property, otherwise we can simply ignore the term $(1-\lambda)|S|$ and get the $(1,\mu)$-smoothness property, completing the proof of Theorem \ref{thm:smooth-poa}.
\end{proof}

We now show how to use Theorem \ref{thm:smooth-poa} to prove Theorem~\ref{t:poa}.
For the case in which each single-item auction is a first price auction then we know by \cite{Syrgkanis2013} that each auction is $(\beta\cdot(1-e^{-1/\beta}),\beta)$-smooth for any $\beta$.
Thus we get that for any $\beta$ the simultaneous first price auction with $\MOPH$-$k$ valuations is $(1-(1-\beta\cdot(1-e^{-1/\beta}))\cdot k,\beta)$-smooth. 
Substituting $\beta=\log(\frac{k}{k-1})$, we get a bound of $\frac{1}{1-(k-1)\log(\frac{k}{k-1})}\leq k(2-e^{-k})$ on the price of anarchy, as desired.
This establishes the proof of Theorem~\ref{t:poa}.

\section{Composition of General Mechanisms}\label{sec:composition}\label{SEC:COMPOSITION}

Our analysis can be extended beyond simultaneous single-item auctions to the simultaneous composition of general mechanisms, such as position auctions.
We consider the mechanism defined by running $m$ different mechanisms simultaneously. Each mechanism $M^j$ has
its own feasible set of allocations $\X^j\subseteq \X_1^j\times\ldots \times \X_n^j$,  action spaces $\A^j$, allocation function $X^j:\A^j\rightarrow \X^j$ and payment
function $P^j:\A^j\rightarrow \R^n_+$. Each player $i$ has a valuation over allocations in different mechanisms, given by $v_i:\X_i^1\times\ldots\times \X_i^m\rightarrow \R_+$. We consider the natural generalization of $\MOPH$-$k$ valuations:
\begin{defn} A valuation is $\MOPH$-$k$ across mechanisms if for any $x_i\in \X_i^1\times\ldots\times \X_i^m$
\begin{equation}
v_i(x_i^1,\ldots,x_i^m) = \max_{\ell \in \Ell_i} \textstyle{\sum_{e\in E_{\ell}} v_{i}^{e,\ell}(x_i^e)}
\end{equation}
where $\Ell_i$ is some arbitrary index set, $E_{\ell}\subseteq \{S\subseteq M: |S|\leq k\}$, $x_i^e = (x_i^j)_{j\in e}$ is the vector of outcomes on the mechanisms in the set $e$ and for all $e\in E_{\ell}$, $v_{i}^{e,\ell}(x_i^e)\geq 0$.
\end{defn}

We show that if each allocation space $\X_i^j$ is partially ordered and the value functions $v_{i}^{e,\ell}(x_i^e)$ are monotone coordinate-wise with respect to this partial order, then if each mechanism is $(\lambda,\mu)$-smooth for the class of monotone valuations then this implies that the simultaneous composition is  $\left(1-k+\min\left\{\lambda,1\right\}\cdot k,\mu\right)$-smooth.
\begin{theorem}\label{thm:composition}
Consider the simultaneous composition of $m$ mechanisms each being $(\lambda,\mu)$-smooth for any monotone valuation
with respect to some partial order of the allocation space. If players have $\MOPH$-$k$ valuations across mechanisms such that
$v_i^{e,\ell}(\cdot)$ are monotone coordinate-wise with respect to each partial order, then the composition is $\left(1-k+\min\left\{\lambda,1\right\}\cdot k,\mu\right)$-smooth
\end{theorem}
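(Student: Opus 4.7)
The plan is to mirror the two-step structure of the proof of Theorem~\ref{thm:smooth-poa}, replacing the binary event ``player $i$ wins item $j$'' with the monotone event $A_i^j = \{X_i^j(\tilde a_i^j, a_{-i}^j) \geq x_i^{*,j}\}$ in the given partial order on $\X_i^j$, where $x^*(v)$ denotes the welfare-maximizing allocation under the $\MOPH$-$k$ profile $v$.

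\textbf{Reduction to $\PH$-$k$ across mechanisms.} For each player $i$, pick $\ell_i^* \in \arg\max_{\ell \in \Ell_i} \sum_{e \in E_\ell} v_i^{e,\ell}(x_i^{*,e})$ and let $v_i^*$ be the resulting positive hypergraph valuation. Exactly as in Lemma~\ref{lem:hypergraph-to-MOPH}, $v_i(x_i^*) = v_i^*(x_i^*)$ and $v_i(x_i) \geq v_i^*(x_i)$ pointwise, so smoothness deviations constructed against $v^*$ transfer to $v$ with the same parameters (using $u_i(\cdot;v_i) \geq u_i(\cdot;v_i^*)$ and the fact that the welfare of $x^*(v)$ coincides under $v$ and $v^*$). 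It therefore suffices to establish the bound for a $\PH$-$k$ profile $v_i(x_i) = \sum_{e \in E_i} v_i^e(x_i^e)$ with each $v_i^e$ monotone in the product partial order on $\prod_{j\in e}\X_i^j$.

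\textbf{Choosing deviations via per-mechanism smoothness.} For each mechanism $M^j$, consider the step valuation profile $\hat v^j$ where player $i$ has value $w_i^j := \sum_{e \in E_i : j \in e} v_i^e(x_i^{*,e})$ for any allocation $\geq x_i^{*,j}$ in $\X_i^j$ and value $0$ otherwise; this is monotone by construction. Its optimal welfare is $\sum_i w_i^j$, witnessed by the feasible in-mechanism allocation $(x_1^{*,j}, \ldots, x_n^{*,j}) \in \X^j$ inherited from $x^*(v)$. The $(\lambda,\mu)$-smoothness of $M^j$ on monotone valuations then supplies deviations $\tilde a_i^j$ with
\begin{equation*}
\sum_i \bigl[w_i^j \Pr[A_i^j] - \E[P_i^j(\tilde a_i^j, a_{-i}^j)]\bigr] \geq \lambda \sum_i w_i^j - \mu \sum_i \E[P_i^j(a^j)].
\end{equation*}

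\textbf{Bounding the deviation utility.} Coordinate-wise monotonicity of each $v_i^e$ yields
\begin{equation*}
v_i(x_i(\tilde a_i, a_{-i})) \geq \sum_{e \in E_i} v_i^e(x_i^{*,e}) \cdot \mathrm{1}\left[\bigwedge_{j \in e} A_i^j\right].
\end{equation*}
Applying the union bound together with $|e| \leq k$ and decomposing exactly as in the proof of Theorem~\ref{thm:smooth-poa} gives
\begin{equation*}
\E[u_i(\tilde a_i, a_{-i})] \geq \sum_e v_i^e(x_i^{*,e})(1-|e|) + \sum_j w_i^j \Pr[A_i^j] - \sum_j \E[P_i^j(\tilde a_i^j, a_{-i}^j)].
\end{equation*}
Summing over $i$ and substituting the per-$j$ smoothness inequality from the previous step bounds the right-hand side below by $\sum_i \sum_e v_i^e(x_i^{*,e})[1 - (1-\lambda)|e|] - \mu \sum_i P_i(a)$. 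Using $|e|\leq k$ and splitting on the sign of $1-\lambda$ then yields $(1-k+\min\{\lambda,1\}k)\,\opt(v) - \mu \sum_i P_i(a)$, as required.

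\textbf{Main obstacle.} The heart of the proof is purely conceptual: identifying a step valuation on each $M^j$ that (i)~is monotone in the given partial order so that the smoothness hypothesis actually applies, (ii)~has optimal welfare equal to $\sum_i w_i^j$ and witnessed by the projection of $x^*(v)$ onto $\X^j$, and (iii)~produces a per-$j$ utility term under the deviation $\tilde a_i^j$ that exactly matches the corresponding term in the union-bound decomposition of the multi-mechanism utility. Once these three items line up, the algebraic manipulation that converts $|e|\leq k$ into the $(1-k+\min\{\lambda,1\}k,\mu)$ parameters is identical verbatim to the one in Theorem~\ref{thm:smooth-poa}.
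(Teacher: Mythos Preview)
Your proposal is correct and follows essentially the same approach as the paper: reduce from $\MOPH$-$k$ to $\PH$-$k$ via the supporting hypergraph at $x^*(v)$, use coordinate-wise monotonicity and a union bound to decompose the deviation utility into per-mechanism terms, recognize each per-mechanism term as the deviating utility under a monotone step valuation with threshold $x_i^{*,j}$ and height $w_i^j$, and then invoke the $(\lambda,\mu)$-smoothness of each $M^j$ before finishing with the $|e|\le k$ algebra from Theorem~\ref{thm:smooth-poa}. Your explicit verification that the projection $(x_1^{*,j},\ldots,x_n^{*,j})\in\X^j$ witnesses optimal welfare $\sum_i w_i^j$ for $\hat v^j$ is a nice clarification that the paper leaves implicit.
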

\begin{proof}
For simplicity we will consider a $\PH$-$k$ valuation profile $\vals$ across mechanisms, i.e. for each player $i$ we have:
\begin{equation}
v_i(x) = \sum_{e\in E_i}v_i^e(x_i^e),
\end{equation}
where $E_i\subseteq \{S\subseteq M: |S|\leq k\}$ and $x_i^e=(x_i^j)_{j\in e}$.

Let $\tilde{x}_i=(\tilde{x}_i^j)_{j\in [m]}$ be the optimal allocation of each player $i$. Consider an action profile $a=(a^j)_{j\in M}$ on each auction $j$ and each player deviating to some strategy $\tilde{a}_i=\left(\tilde{a}_i^j\right)_{j\in [m]}$. Then following an analysis similar to the proof of Theorem \ref{thm:smooth-poa} we
can obtain the following lower bound a player's utility from the deviation:
\begin{align*}
u_i(\tilde{a}_i,a_{-i})=~& \sum_{e\in E_i}\sum_{x_i^e}v_i^e(x_i^e)\cdot \Pr\left(X_i^e(\tilde{a}_i,a_{-i})=x_i^e\right)- \sum_{j\in M}P_i^j(\tilde{a}_i^j,a_{-i}^j)\\
\geq~& \sum_{e\in E_i}v_i^e(\tilde{x}_i^e)\cdot \Pr\left(X_i^e(\tilde{a}_i,a_{-i})\succeq \tilde{x}_i^e\right)- \sum_{j\in M}P_i^j(\tilde{a}_i^j,a_{-i}^j)\\
\geq~& \sum_{e\in E_i}v_i^e(\tilde{x}_i^e)\cdot \left(1-\sum_{j\in e}\left(1- \Pr\left(X_i^j(\tilde{a}_i^j,a_{-i}^j)\succeq \tilde{x}_i^j\right)\right)\right)- \sum_{j\in M}P_i^j(\tilde{a}_i^j,a_{-i}^j)\\
=~& \sum_{e\in E_i}v_i^e(\tilde{x}_i^e)\cdot(1-|e|)+\sum_{e\in E_i}v_i^e(\tilde{x}_i^e)\sum_{j\in e}\Pr\left(X_i^j(\tilde{a}_i^j,a_{-i}^j)\succeq \tilde{x}_i^j\right)- \sum_{j\in M}P_i^j(\tilde{a}_i^j,a_{-i}^j)\\
=~&  \sum_{e\in E_i}v_i^e(\tilde{x}_i^e)\cdot(1-|e|)+\sum_{j\in M}\left\{\left(\sum_{e\ni j}v_i^e(\tilde{x}_i^e)\right)\cdot\Pr\left(X_i^j(\tilde{a}_i^j,a_{-i}^j)\succeq \tilde{x}_i^j\right)-P_i^j(\tilde{a}_i^j,a_{-i}^j)\right\}.
\end{align*}

Summing up over all players we observe that the second summand in the above expression will correspond to deviating utilities of individual single-item auctions, where each player unilaterally deviates to  $\tilde{a}_i^j$ and in which every player has a valuation of $\sum_{e\ni j}v_i^e(\tilde{x}_i^e)$ for getting any allocation $x_i^j\succeq \tilde{x}_i^j$ and $0$ otherwise. The latter is a monotone valuation and hence we can set the local deviating utilities at each mechanism to the smoothness deviations for the latter monotone valuation profiles and get:
\begin{align*}
\sum_{i} u_i(\tilde{a}_i,a_{-i}) \geq~& \sum_i \sum_{e\in E_i}v_i^e(\tilde{x}_i^e)(1-|e|)+\lambda \sum_i  \sum_{e\in E_i}v_i^e(\tilde{x}_i^e)\cdot |e| - \mu \sum_i P_i(a)\\
=~& \sum_i\sum_{e\in E_i}v_i^e(\tilde{x}_i^e)\cdot\left(1-(1-\lambda)|e|\right) - \mu \sum_i P_i(a).
\end{align*}
The remainder of the proof precisely follows the proof of Theorem \ref{thm:smooth-poa}.
\end{proof}

\paragraph{Example: position auctions.}
An example of such a composition of mechanisms is the composition of $m$ position auctions. Suppose that each position auction
is a first price pay-per-impression auction. Syrgkanis and Tardos \cite{Syrgkanis2013} showed that such a mechanism
is $(1/2,1)$-smooth. We extend this analysis to show that it is $(1-\frac{1}{2\beta},\beta)$-smooth for any $\beta\geq 1$.
\begin{lemma}\label{lem:position-auction}
The first-price pay-per-impression position auction is a $(1-\frac{1}{2\beta},\beta)$-smooth mechanism for any $\beta\geq 1$.
\end{lemma}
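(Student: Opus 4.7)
I plan to prove the lemma by adapting the smoothness argument of Syrgkanis--Tardos for the single-item first-price auction to the position auction setting via a carefully calibrated randomized deviation. Fix a valuation profile $v$ and let $\sigma^*$ be the welfare-maximizing assignment of bidders to slots, where slots have click-through rates $\alpha_1 \geq \cdots \geq \alpha_k$. Write $j_i^* = \sigma^*(i)$ for the slot assigned to bidder $i$ and $v_i^* = \alpha_{j_i^*} v_i$ for the resulting value. The smoothness deviation $a_i^*(v)$ will be the randomized per-impression bid $b_i^*$ drawn uniformly from the interval $[0, v_i/\beta]$.

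For an arbitrary bid profile $a$, I will establish the per-bidder inequality
\[
\E[u_i(b_i^*, a_{-i})] + \beta\, P_{j_i^*}(a) \;\geq\; \bigl(1 - \tfrac{1}{2\beta}\bigr) v_i^*,
\]
where $P_{j_i^*}(a)$ denotes the payment extracted from slot $j_i^*$ under $a$. Let $q$ be the $j_i^*$-th highest bid in $a_{-i}$. In the case $q \leq v_i/\beta$, any bid $b \in (q, v_i/\beta]$ wins bidder $i$ a slot with click-through rate at least $\alpha_{j_i^*}$, giving utility at least $\alpha_{j_i^*}(v_i - b)$; integrating this lower bound against the uniform density $\beta/v_i$ produces
\[
\E[u_i(b_i^*, a_{-i})] \;\geq\; \alpha_{j_i^*}\Bigl[\bigl(1-\tfrac{1}{2\beta}\bigr) v_i \;-\; \beta q \;+\; \tfrac{\beta q^2}{2 v_i}\Bigr].
\]
Since the $j_i^*$-th order statistic of the full profile $a$ is at least $q$ as well, $P_{j_i^*}(a) \geq \alpha_{j_i^*} q$. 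Adding $\beta$ times this payment bound cancels the linear $-\beta q$ term and leaves the desired $(1 - 1/(2\beta)) v_i^*$ plus the non-negative remainder $\alpha_{j_i^*} \beta q^2/(2 v_i)$. In the remaining case $q > v_i/\beta$, no realization of $b_i^*$ succeeds in outbidding $q$, so $\E[u_i] \geq 0$, while $\beta P_{j_i^*}(a) \geq \beta \alpha_{j_i^*} q > v_i^*$, and the inequality again holds.

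Summing the per-bidder inequality over all bidders $i$ and using that $\sigma^*$ is a bijection from winning bidders to slots, so that $\sum_i P_{j_i^*}(a) \leq \sum_j P_{\text{slot-}j}(a) = \sum_i P_i(a)$ and $\sum_i v_i^* = \opt(v)$, I will obtain $\sum_i \E[u_i(a_i^*, a_{-i})] \geq (1 - \tfrac{1}{2\beta})\,\opt(v) - \beta \sum_i P_i(a)$, which is exactly the $(1 - \tfrac{1}{2\beta}, \beta)$-smoothness inequality.

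The core technical point I expect to require care is the calibration of the support $[0, v_i/\beta]$ of the randomized deviation. This endpoint is essentially forced by the requirement that the linear term $-\beta q$ in the Case A utility bound exactly cancel the payment-side contribution $\beta \alpha_{j_i^*} q$. A deterministic choice such as $b_i^* = v_i/(2\beta)$ yields only the weaker parameters $(1 - 1/(2\beta), 2\beta - 1)$, which would translate to a worse PoA bound via Theorem~\ref{thm:composition}, so the randomization is essential for achieving the $\mu = \beta$ coefficient.
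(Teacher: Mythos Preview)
Your approach is essentially the paper's: a uniform random deviation on an interval scaled by $1/\beta$, a per-bidder inequality charged against the payment in the target slot, and summation. One discrepancy worth flagging: the lemma concerns the \emph{pay-per-impression} format (the occupant of slot $j$ pays her bid $b$), whereas your utility bound $\alpha_{j_i^*}(v_i-b)$ and slot-payment bound $P_{j_i^*}(a)\geq \alpha_{j_i^*} q$ implicitly assume pay-per-click (payment $\alpha_j b$). In pay-per-impression the utility when winning a slot $j\leq j_i^*$ is $\alpha_j v_i - b \geq v_i^* - b$, which is \emph{not} in general bounded below by $\alpha_{j_i^*}(v_i-b)$; correspondingly, the deviation support should be $[0,\,v_i^*/\beta]$ rather than $[0,\,v_i/\beta]$. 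With that adjustment the computation actually simplifies relative to yours: the paper upper-bounds expected payment by the mean bid $v_{ij_i^*}/(2\beta)$ and obtains
\[
u_i(b_i',a_{-i}) \;\geq\; \bigl(1-\tfrac{1}{2\beta}\bigr) v_{ij_i^*} \;-\; \beta\, b_{\pi(j_i^*)}
\]
in one line, with neither the case split on $q$ nor the quadratic slack term $\beta q^2/(2v_i)$ needed. The paper also works with general monotone slot valuations $v_{ij}$ rather than the separable form $\alpha_j v_i$, and uses as threshold the bid $b_{\pi(j_i^*)}$ of the current occupant of slot $j_i^*$ (which dominates your $q$ and is directly a summand of total revenue).
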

\begin{proof}
Consider a bid profile $b$ and let $j_i^*$ be the optimal position of player $i$ and let $\pi(j)$ be the player that
gets slot $j$ under bid profile $b$. Suppose that each player deviates to bidding a random bid $b_i'$, uniformly in $[0,\frac{v_{ij_i^*}}{\beta}]$
and let $f(t)$ denote the density function of the random bid. If the random bid $t$ of a player is $b_{\pi(j_i^*)}<t$ then
player $i$ wins his optimal slot or a higher slot and hence his value is at least $v_{ij_i^*}$ by monotonicity of the valuation.

Thus a player's utility from this deviation is at least:
\begin{align*}
u_i(b_i',b_{-i})\geq \int_{b_{\pi(j_i^*)}}^{\frac{v_{ij_i^*}}{\beta}}v_{ij_i^*}f(t)dt - \frac{v_{ij_i^*}}{2\beta}=
 \int_{b_{\pi(j_i^*)}}^{\frac{v_{ij_i^*}}{\beta}}\beta\cdot dt - \frac{v_{ij_i^*}}{2\beta} = \left(1-\frac{1}{2\beta}\right)v_{ij_i^*}-\beta\cdot b_{\pi(j_i^*)}
\end{align*}
Summing over all players we get the $(1-\frac{1}{2\beta},\beta)$-smoothness property.
\end{proof}
Combined with Theorem \ref{thm:composition}, we get that for any $\MOPH$-$k$ valuation across position auctions the simultaneous position auction mechanism is $(1-\frac{k}{2\beta},\beta)$-smooth, yielding a price of anarchy bound of $2k$ for $\beta=k$.

\section{Non-Monotone Valuations}\label{sec:app-non-monotone}\label{sec:non-monotone}


As we saw in Section \ref{sec:proofs}, positive lower envelopes are a useful tool for showing that a monotone function admits a $\MOPH$-$k$ representation. Observe that while $\MOPH$-$k$ contains only monotone functions, even a non-monotone function can admit a positive lower envelope for any restriction to a subset. 
This observation motivates the following hierarchy of (not necessarily monotone) set functions.

\begin{defn} [Positive Lower Envelope $k$ ($\PLE$-$k$) class]
Let $f: 2^M \to \nonnegR$ be a set function.
$f$ is in $\PLE$-$k$ if any restriction of it to a subset of its ground set admits a positive lower envelope of rank $k$.
\end{defn}

While monotonicity is a very natural assumption in the context of combinatorial auctions, 
hierarchies of non-monotone functions can have potential applications in more general contexts
such as set function minimization and maximization.

In Propositions~\ref{p:PLEtoMOH} and~\ref{p:MOHtoPLE} we showed that $\PLE$-$k$ is equivalent to $\MOPH$-$k$ for monotone functions.
Thus $\MOPH$-$k$ is exactly the monotone part of the $\PLE$-$k$ class. Moreover, we point out that both our algorithmic approximation result (Theorem~\ref{t:algorithmic}) and our price of anarchy result (Theorem~\ref{t:poa}) apply more generally to the class of $\PLE$-$k$ functions, and in particular do not require monotonicity (assuming that players are able to drop out of an auction and receive no items).

With respect to expressiveness, exploring the expressive power of the $\PLE$ hierarchy is an interesting space for future research. In particular, one might explore the connection between $\PLE$-$k$ and the class of functions with a given positive and/or negative hypergraph rank.  
To this end, we show some results on this connection.
We first observe that a strong positive result similar to Theorem \ref{t:PleForAnyMonotoneSymmetricFunction},
cannot possibly hold for non-monotone non-symmetric set functions.
That is, for any constant $k$, there exist a non-negative set function of rank $2$ that is not $\PLE$-$k$.

This is not the case for non-negative functions with positive rank $1$ and arbitrary negative rank, which can be shown to be in $\PLE$-$1$.

\begin{theorem} \label{t:negResult_NonNeg}
Let $k \in \mathbb{N}$.
There exists a non-negative set function $f_k:2^M \to \nonnegR$ of rank~2 that has no positive lower envelope of rank $k$.
\end{theorem}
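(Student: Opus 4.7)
The plan is to exhibit, for each $k$, an explicit non-negative rank-2 function $f_k$ and prove no positive lower envelope of rank $k$ can match it at $M$. I will start with the base case $k=1$ as the template. Take $M=\{1,2,3,4\}$ with vertex weights $a_1=a_2=2$, $a_3=a_4=0$, and edge weights $B_{12}=-1$, $B_{34}=+1$ and all other edges zero. Direct computation gives $f(M)=4$ and rank 2, with non-negativity holding on every subset. For any rank-1 positive lower envelope $g(S)=\sum_{v\in S} w_v$, the constraints $w_v\le f(\{v\})$ force $w_3=w_4=0$ and $w_1,w_2\le 2$, while the constraint $w_1+w_2\le f(\{1,2\})=3$ caps the total mass at $3<4=f(M)$. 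Hence no PLE-$1$ exists.

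For the general case, I plan to nest the diamond gadget: let $M_k=\{u_1,\dots,u_{k+2}\}\cup\{v_1,v_2\}$ (or, if needed, several disjoint copies of $\{v_1,v_2\}$-gadgets), and choose vertex weights $a_{u_i}=\alpha_k$ and pairwise negative edges $B_{u_iu_j}=-\beta_k$ within the $u$-clique so that $f$ restricted to every $(k+1)$-subset of the $u$-side is very small, together with an isolated positive contribution from the $v$-gadget that provides the remainder of $f_k(M_k)$. The parameters $\alpha_k,\beta_k$ are to be tuned so that (i) non-negativity holds on all subsets, and (ii) the $u$-clique acts as a tight constraint at size $k+1$ analogous to the $\{1,2\}$ edge in the base case.

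The verification then has two parts. Part (a), verifying rank 2 and non-negativity, is a direct calculation. Part (b), showing no PLE of rank $k$ exists, I approach by LP duality: by the duality between the primal PLE-feasibility LP and its dual (a fractional cover problem over non-empty sets of size at most $k$), it suffices to exhibit weights $\{y_S\}_{S\subseteq M_k}\ge 0$ with $\sum_{S\supseteq e} y_S\ge 1$ for every $e$ with $1\le|e|\le k$ such that $\sum_S y_S f_k(S)<f_k(M_k)$. Using the decomposition $\sum_S y_S f_k(S)=\sum_v a_v\alpha_v+\sum_{uv}B_{uv}\beta_{uv}$ with $\alpha_v=\sum_{S\ni v}y_S$ and $\beta_{uv}=\sum_{S\supseteq\{u,v\}}y_S$, the task reduces to choosing a cover for which the negative edges inside the $u$-clique satisfy $\beta_{uv}$ strictly larger than $1$, while the cost at the positive vertices and positive edges is controlled.

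The main obstacle is precisely the construction of this fractional cover. Natural symmetric covers (supported on $(n-1)$-subsets, or jointly on $(n-1)$- and $(n-2)$-subsets) give $\sum_S y_S f_k(S)\ge f_k(M_k)$ for rank-2 non-negative functions — indeed, a direct computation shows such covers force $\sum_v a_v<0$, which contradicts non-negativity of singletons. The proof must therefore use an asymmetric cover that concentrates its excess weight on subsets containing the negative $u$-clique edges and avoids inflating the contributions of the $v$-gadget and of the positive vertex weights $a_{u_i}$. This is where the tailored structure of $f_k$ — the sharp interaction between the size-$(k+1)$ clique and the isolated positive $\{v_1,v_2\}$-edge — must be leveraged, exactly as the $\{3,4\}$ gadget is leveraged in the $k=1$ base case.
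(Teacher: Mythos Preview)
Your base case $k=1$ is correct, but the general case is genuinely incomplete: the parameters $\alpha_k,\beta_k$ are never chosen, and the decisive step---producing the asymmetric $k$-fractional cover witnessing $\sum_S y_S f_k(S) < f_k(M_k)$---you yourself flag as an unresolved obstacle. As written, the proposal is a plan without the construction that makes it work.

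The paper's proof avoids LP duality entirely and is much shorter. Take $m \geq k+3$ items and single out one item $j$; give vertex $j$ weight $\binom{k+1}{2}$ and all other vertices weight $0$; give every pair containing $j$ weight $-k$ and every other pair weight $+1$. For $S \ni j$ with $l = |S|-1$ one computes $f_k(S) = \binom{l}{2} - lk + \binom{k+1}{2}$, which vanishes exactly at $l \in \{k,k+1\}$ and is positive otherwise; sets not containing $j$ are trivially non-negative. The key observation is that \emph{every} subset $T$ with $|T| \le k+1$ is contained in some set on which $f_k$ vanishes (add $j$ if absent and pad to total size $k+1$ or $k+2$). Since any positive-hypergraph function is monotone, any PLE $g$ of $f_k$ must have $g(T)=0$ for all $|T|\le k+1$. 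But if $g$ has rank at most $k$, every hyperedge of $g$ is contained in such a $T$, so $g\equiv 0$, contradicting $g(M)=f_k(M)>0$.

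The conceptual difference: rather than searching for a dual certificate, the paper engineers $f_k$ to have a ``zero barrier'' at sizes $k+1$ and $k+2$ that no monotone lower envelope can pass through. This sidesteps the cover computation you were stuck on; your $u$-clique idea is heading toward a similar barrier (the full $(k+2)$-clique has value $0$ when $\alpha = \tfrac{k+1}{2}\beta$), but the separate $v$-gadget allows rank-$k$ hyperedges straddling the two parts, which the paper's single-special-vertex design cleanly rules out.
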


\begin{proof} 
Given $k$, we have the following hypergraph representing $f_k$.
We set some $m \geq k + 3$ to be the number of items in the ground set $M$ of $f$.
We set a special item $j \in M$. Let $v_j$ be its corresponding vertex in the hypergraph representation.
We set the value of the singleton hyperedge $v_j$ to be $\binom{k+1}{2}$ and the value of any other singleton hyperedge to be~0.
We set the value of any rank~2 edge to be~1 if it does not contain $v_j$ and $-k$ otherwise.
There are no other hyperegdes, so the set function represented by this representation is indeed of rank~2.
We show that $f_k$ is non-negative.
It is trivial that the value of any subset of items not containing $j$ is non-negative.
Let $S$ be a subset of items such that $j \in S$ and let $l = |S|-1$.
Then, $f_k(S) = \binom{l}{2} - l k + \binom{k+1}{2}$.
Clearly, for $l=k$ and for $l=k+1$, it equals~0, and for any other integer it is strictly greater than~0.
Therefore, $f_k$ is indeed non-negative and it has value~0 for any subset containing $j$ and exactly other $k$ (or $k+1$) distinct items.
Assume towards a contradiction that $f_k$ has a positive lower envelope of rank $k$. Let $g_k$ be such a positive lower envelope.
Then, $g_k$ must be monotone, since its hypergraph representation contains only hyperedges of non-negative value.
Therefore, it has value~0 for any subset of up to size $k+1$, since any such subset is either of value~0 by $f_k$ (in case it contains $j$) or contained in a subset of value~0 by $f_k$ (in case it does not contain $j$).
Therefore and since the rank of $g_k$ is assumed to be at most $k$, any hyperedge in its hypergraph representation must be of value~0.
But, since $m \geq k + 3$, we have $f_k(M) > 0$. contradiction.
Therefore, $f_k$ does not have any positive lower envelope of rank up to $k$, as desired.
\end{proof}

Moreover, the proof of Theorem~\ref{t:laminar_nonneg} extends to non-negative functions in $\PLE$-$k$. That is, any non-negative function with positive rank $k$ and arbitrary negative rank is in $\PLE$-$k$, if negative hyperedges are laminar.

\begin{theorem} \label{t:rank1nonneg}
There is a positive lower envelope of rank~1 for any {\em non-negative} set function of positive rank~1.
\end{theorem}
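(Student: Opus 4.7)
The plan is to use a flow-theoretic argument, similar in spirit to the rank-$2$ case in Section~\ref{sec:neg-hyperedge-proof}, but with a much simpler bipartite network that falls out from the singleton structure of the positive hyperedges. Let $h$ denote the hypergraph representation of $f: 2^M\to \nonnegR$, write $a_j := h(\{j\}) \geq 0$, and note that $h(T) \leq 0$ for every $T$ with $|T| \geq 2$ since $f$ has positive rank~$1$. I will construct a positive lower envelope of rank~$1$ by producing non-negative weights $w_j$ obtained from $a_j$ by subtracting ``charges'' that absorb the contribution of the negative hyperedges.

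Concretely, the aim is to find non-negative charges $c_{T,j}$, one for each negative hyperedge $T$ and each $j \in T$, such that
\begin{equation*}
\sum_{j \in T} c_{T,j} = |h(T)| \quad \text{for every negative } T, \qquad \sum_{T \ni j,\, |T|\geq 2} c_{T,j} \leq a_j \quad \text{for every } j \in M.
\end{equation*}
Given such charges, set $w_j := a_j - \sum_{T \ni j,\,|T|\geq 2} c_{T,j}$. The second condition yields $w_j \geq 0$, and summing over $j$ gives $\sum_j w_j = \sum_j a_j - \sum_T |h(T)| = f(M)$. For any $S \subseteq M$, the contributions from negative hyperedges $T \subseteq S$ to $w(S)$ and $f(S)$ cancel exactly, so
\begin{equation*}
w(S) - f(S) = -\sum_{T\,:\,T \cap S \neq \emptyset,\; T \not\subseteq S,\; |T|\geq 2}\ \sum_{j \in T \cap S} c_{T,j} \leq 0.
\end{equation*}
Thus $w$ is a bona fide positive lower envelope of rank~$1$, by Definition~\ref{def:PLE}.

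The one substantive step is to produce the charges. I would model this as a flow problem on a bipartite network with a source connected to each negative hyperedge $T$ (capacity $|h(T)|$), each $T$ connected to every $j \in T$ (infinite capacity), and each $j$ connected to the sink (capacity $a_j$); the desired charges correspond to a flow saturating the source-side edges. By max-flow/min-cut, saturation is possible iff Hall's condition holds, i.e. $\sum_{T \in \mathcal{T}} |h(T)| \leq \sum_{j \in \bigcup \mathcal{T}} a_j$ for every collection $\mathcal{T}$ of negative hyperedges. Setting $S = \bigcup \mathcal{T}$ and applying non-negativity of $f$ directly yields $\sum_{j \in S} a_j \geq \sum_{T' \subseteq S,\,|T'|\geq 2} |h(T')| \geq \sum_{T \in \mathcal{T}} |h(T)|$, since every $T \in \mathcal{T}$ is contained in $S$. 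The hard-looking step therefore collapses: Hall's condition is simply the non-negativity of $f$ on unions of the selected hyperedges, and once this is observed everything else assembles mechanically. Since any restriction of $f$ to a subset is again a non-negative function of positive rank at most~$1$, the same argument applied to each restriction immediately yields $\PLE$-$1$ membership.
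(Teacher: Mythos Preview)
Your proof is correct and follows essentially the same approach as the paper: charge each negative hyperedge against the singletons it contains via a bipartite Hall/max-flow argument, and verify the Hall condition using non-negativity of $f$ on the union of the selected hyperedges. Your continuous flow formulation is in fact slightly cleaner than the paper's, which discretizes the weights into integer ``units of value'' and argues via a perfect matching, implicitly assuming commensurable hyperedge weights.
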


\begin{proof}
Let $f:2^M \to \nonnegR$ be a non-negative set function of positive rank~1 and let ${\cal E}$ be the largest value dividing all images of $f$.
We refer to each value ${\cal E}$ as ``unit of value''.
Let $G_f$ be the hypergraph representation of $f$.
We show how to discard all the negative hyperedges of $G_f$.
We build the following bipartite graph $H(V, W, E)$.
For each unit of value of every vertex of $G_f$, we have a vertex in $V$.
For each unit of value of every negative hyperedge, we have a vertex in $W$.
For simplicity, we refer sometimes to vertices in $H$ as units of value.
Let $v \in V$ and $w \in W$. We have an edge $(v,w)$ if and only if the hyperedge in $G_f$ that corresponds to the unit of value of $w$ contains the vertex that corresponds to the unit of value of $v$.
We show that there exists a matching in $H$ that saturates $W$.
Such a matching will enable us to remove all the negative hyperedges of $G_f$ by reducing also the value of vertices of $G_f$ that their corresponding vertices participate in the matching in $H$.
Assume towards a contradiction that there is no matching in $H$ that saturates $W$.
Then, by Hall's theorem, there exists $W' \subseteq W$ such that $|W'| > |N_H(W')|$, where $N_H(W')$ are the neighbours of $W'$ in $H$.
Note that by definition of $H$, $N_H(W')$ contains either zero or all units of value of any single vertex of $G_f$.
Therefore, we can add to $W'$ all units of value of any hyperedge of $H$ that at least one unit of value of it is already containted in $W'$, without increasing $|N_H(W')|$.
Let $W''$ be the resulting subset. Clearly, $|W''| > |N_H(W'')|$.
Moreover $W''$ contains for any vertex of $G_f$ either zero or all units of values of it.
Additionally, as before, $N_H(W'')$ contains for any hyperedge of $G_f$ either zero or all units of values of it.
Therefore, for the set $S \subseteq M$ that contains exactly the items that corresponds to the vertices of $G_f$ contained in hyperedges corresponding to the vertices of $W''$,
it must be that $f(S) < 0$.\footnote{Note that all the positive hyperedges induced by $S$ are indeed represented appropriately in the matching, since we have positive rank~1. If we had positive rank strictly greater than~1, it might have been that a positive hyperedge will be induced by $S$ and not by the matching. This would have happened if e.g. part of its vertices were contained in one negative hyperedge participating in the matching and all the rest in another, but there was no single negative hyperedge containing all its vertices.} Contradiction.
Therefore, there is a matching saturating $W$ in $H$.
Note that if we discard all the negative hyperedges and also decrease the value of each of the vertices according to the number of units of value of it participating in the matching,
we will never increase the value of any subset of $M$ by $f$.
Moreover, we will not change $f(M)$.
Therefore, we can discard all the negative hyperedges and then return as a hypergraph representation of a positive lower envelope of $f$ all the positive vertices that survived, with the value of them that survived.
This concludes the proof of Theorem~\ref{t:rank1nonneg}.
\end{proof}

\Omit{
Note that from Theorem~\ref{t:negResult_NonNeg} we have the following gaps:
\begin{description}
\item [Sysmmetric - non-symmetric gap for not necessarily monotone set functions:]
From Theorem~\ref{t:rank2nonneg}, we have a positive lower envelope for any non-negative set function with rank~2 with only negative edges of rank~2.
Since from non-negativity, any edge of rank~1 must be non-negative, it means that for symmetric functions with negative edges of rank~2 we we have a positive lower envelope of rank~2.
Additionally, it is trivial that if the edges of rank~2 are non-negative, the function itself is a positive lower envelope of itself (and therefore of rank~2).
Therefore, on one hand, for any symmetric non-negative set function of rank~2 we have a positive lower envelope of rank~2, and on the other hand for any $k \in \mathbb{N}$,
there exists a non-symmetric non-negative set function with rank~2 and no positive lower envelope of rank $k$.
\item[Monotone - non monotone gap:]
For monotone set functions of rank~2, we have a lower envelope of rank~2, by Theorem~\ref{thm:hypergraph-vs-MOPH}.
\item[Laminar - non laminar gap:]
By Theorem~\ref{t:laminar_nonneg}, there is always a positive lower envelope of rank $r$ for any non-negative function of rank $r$ if the negative edges in its hypergraph representation are laminar.
By Theorem~\ref{t:negResult_NonNeg} this is far from being the case even for rank~2, without the laminarity restriction.
\end{description}}

\section{Tight bounds for symmetric set functions with ranks~3 and~4}\label{app:smallRanks}

\subsection{Any monotone symmetric set function of rank~3 is $\MOPH$-$4$}

We show the following theorem and then show its tightness.
\begin{theorem} \label{t:upperBoundRank3}
Let $f : 2^M \to \nonnegR$ be a monotone symmetric set function of rank~3.
Then, there exists a positive lower envelope of $f$ of rank~4 for any restriction of it to a subset of $M$.
That is, any monotone symmetric set function of rank~3 is in $\MOPH$-4.
\end{theorem}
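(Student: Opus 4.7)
The plan is to invoke the positive-lower-envelope machinery of Corollary~\ref{c:SymmetricPleClassificationByN-1}: it suffices to show that every monotone symmetric rank-$3$ function $f:\{0,\ldots,m\}\to\nonnegR$ (for every $m$, and thus for every restriction) satisfies $f(m-1)\geq\frac{m-4}{m}f(m)$. Writing $f(k)=h_1\binom{k}{1}+h_2\binom{k}{2}+h_3\binom{k}{3}$ with marginals $\phi(k):=f(k)-f(k-1)=h_1+h_2(k-1)+h_3\binom{k-1}{2}$, the monotonicity hypothesis becomes $\phi(k)\geq 0$ for $k=1,\ldots,m$, and the desired inequality is equivalent to $4f(m)\geq m\phi(m)$. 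A short calculation rewrites this as
\begin{equation*}
4f(m)-m\phi(m) \;=\; 3m\,h_1 \;+\; m(m-1)\,h_2 \;+\; \tfrac{m(m-1)(m-2)}{6}\,h_3 \;\geq\; 0.
\end{equation*}

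I would then invoke LP duality, exactly as advertised in the paragraph preceding Theorem~\ref{t:extensions}. The goal is to exhibit non-negative multipliers $y_1,\ldots,y_m\geq 0$ for the monotonicity constraints $\phi(k)\geq 0$ such that $\sum_k y_k\phi(k)$ equals the right-hand side above as an identity in $h_1,h_2,h_3$. Matching the three coefficients yields the three moment equations
\begin{equation*}
\sum_{k=1}^{m} y_k = 3m, \qquad \sum_{k=1}^m y_k(k-1) = m(m-1), \qquad \sum_{k=1}^m y_k\binom{k-1}{2} = \binom{m}{3}.
\end{equation*}
To produce a solution, I would look for $y$ supported on three carefully chosen points (at most three suffice because there are only three equations); the extremal analysis suggests that the relevant support lies near $k\approx(m+2)/3$, since the worst-case $\phi$ has shape $c(k-a)(k-a-1)$ with $a\approx m/3$ (this is what forces the constant $4$ to be tight as $m\to\infty$). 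Concretely, placing the support on $\{1,a,m\}$ (or, when $(m+2)/3$ is not an integer, on two adjacent integers surrounding it) reduces the problem to solving a $3\times 3$ linear system with closed-form solution in $m$.

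The main obstacle is to verify that the solution of this $3\times 3$ system is non-negative for \emph{every} $m$, not just asymptotically. Morally this is automatic from a moment-cone argument: the target data corresponds to a probability distribution on $\{1,\ldots,m\}$ with mean $(m+2)/3$ and variance $2(m-1)/9$, which lies strictly inside the moment cone since the maximum variance attainable with that mean (the two-point distribution on $\{1,m\}$) is $2(m-1)^2/9$. Turning this intuition into a clean algebraic proof requires choosing the support points so that the explicit rational formulas for the weights are manifestly non-negative as functions of $m$, which I would do by a case split on $m \bmod 3$ so that the ``middle'' support point is an integer close to $(m+2)/3$. Once non-negative multipliers are produced for each $m$, the inequality $4f(m)-m\phi(m)\geq 0$ follows, hence $f(m-1)\geq\frac{m-4}{m}f(m)$, and Corollary~\ref{c:SymmetricPleClassificationByN-1} together with Proposition~\ref{p:PLEtoMOH} yields the desired $\MOPH$-$4$ representation of $f$ on every restriction.
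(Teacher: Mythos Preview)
Your proposal is essentially the paper's proof: reduce via Corollary~\ref{c:SymmetricPleClassificationByN-1} to the inequality $f(m-1)\ge\frac{m-4}{m}f(m)$, recast this as an LP in $(h_1,h_2,h_3)$, and certify it by exhibiting a non-negative dual solution supported on three points, found through a case split on $m\bmod 3$; your moment system is exactly the paper's dual system after the reindexing $t=k-1$ and rescaling $y_t^{\text{paper}}=y_k^{\text{you}}/m$. The only cosmetic difference is the choice of support: the paper places its three dual points near $t\approx m/3$ and $t\approx 2m/3$ (never at the endpoints), whereas you propose $\{1,a,m\}$ with $a\approx(m+2)/3$---both choices yield non-negative weights, so either works.
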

\Omit{
In order to prove Theorem~\ref{t:upperBoundRank3}, we use the characterization of Corollary~\ref{c:SymmetricPleClassificationByN-1} and show an LP that finds a function that is the worst possible.
Then, by duality theorem, we bound the feasible solutions of that LP and conclude the theorem.
The proof is postponed to Appendix~\ref{app:smallRanks}.
}
\begin{proof}
For simplicity of presentation, we prove Theorem~\ref{t:upperBoundRank3} only for $f$ itself, and not for any restriction of it to a subset. The proof for any restriction is essentially the same.
Let $m=|M|$.
Assume towards a contradiction that there is a monotone symmetric set function $f$ of rank~3 with no positive lower envelope of rank~4. We scale $f$ to have $f(m)=m$. Then, by Corollary~\ref{c:PleSufficient}, we may assume without loss of generality that
$f(m-1) < m-4$. 

We consider the following linear program.
\begin{LP} \label{LP:min_rank3}
\ \\\textbf{Minimize} $\binom{m-1}{1}x_1 + \binom{m-1}{2}x_2 + \binom{m-1}{3}x_3$ \\
\textbf{Subject to:}
\begin{description}
\item{Monotonicity:} $\forall_{t \in \{ 0,1, \ldots m-1 \}} : x_1 + \binom{t}{1}x_2 + \binom{t}{2}x_3 \geq 0$.
\item{Scaling:} $m x_1 + \binom{m}{2}x_2 + \binom{m}{3}x_3 = m$
\end{description}
\end{LP}

Let $f^*$ be an optimum of Linear Program~\ref{LP:min_rank3}.
Then, it must be that $f^*(m-1) < n-4$.
We consider the dual of Linear Program~\ref{LP:min_rank3}:
\begin{LP} \label{LP:max_rank3}
\ \\\textbf{Maximize} $n \cdot z$ \\
\textbf{Subject to:}
\begin{enumerate}
\item $\sum_{t=0}^{m-1} y_t + \binom{m}{1} z = \binom{m-1}{1}$
\item $\sum_{t=0}^{m-1} \binom{t}{1} y_t + \binom{m}{2} z = \binom{m-1}{2}$
\item $\sum_{t=0}^{m-1} \binom{t}{2} y_t + \binom{m}{3} z = \binom{m-1}{3}$
\item $\forall_t y_t \geq 0$
\end{enumerate}
\end{LP}

Since $f^*(m-1) < m-4$ and from duality theorem, it must be that any feasible solution of Linear Program~\ref{LP:max_rank3} is of value strictly less that $m-4$.
We show a feasible solution to Linear Program~\ref{LP:max_rank3} of value $m-4$ and derive a contradiction.
Specifically, we prove that for any $m > 3$, there is a feasible solution for Linear Program~\ref{LP:max_rank3} with $z=\frac{m-4}{m}$. 
That is, we show that the following equation system has a solution with non-negative variables $y_t$ for 
$t \in \{ 0, \ldots, m-1 \}$, for any $m$:
\begin{equation} \label{eqs:rank3}
\begin{cases}
\sum_{t=0}^{m-1} y_t = 3 \\
\sum_{t=0}^{m-1} t \cdot y_t = m-1 \\
\sum_{t=0}^{m-1} \binom{t}{2} y_t + \binom{m}{3} z = \binom{m-1}{3}
\end{cases}
\end{equation}

We consider the following separate cases with solutions concluded with the aid of a computer\footnote{We used Microsoft .NET together with Gurobi (\cite{gurobi}) and IBM CPLEX and also Wolfram Mathematica.} to solve Linear Program~\ref{LP:max_rank3} for various values of $m$:

\paragraph{{\boldmath $m \mod 3 = 1$}:} 
The solution we consider is the following:
$$
y_t = 
\begin{cases}
\frac{2(m-1)}{m+2} & t = \frac{m-4}{3} \\
1 & t = \frac{m-1}{3} \\
\frac{6}{m+2} & t = 2\frac{m-1}{3} \\
0 & \mbox{otherwise}
\end{cases}
$$

We need to show that:
$$
\begin{cases}
y_{\frac{m-4}{3}} + y_{\frac{m-1}{3}} + y_{2\frac{m-1}{3}} = 3 \\
\frac{m-4}{3} y_{\frac{m-4}{3}} + \frac{m-1}{3} y_{\frac{m-1}{3}} + 2\frac{m-1}{3} y_{2\frac{m-1}{3}} = m-1 \\
\frac{(m-4)(m-7)}{18} y_{\frac{m-4}{3}} + \frac{(m-1)(m-4)}{18} y_{\frac{m-1}{3}} + \frac{(2m-2)(2m-5)}{18} y_{2\frac{m-1}{3}} = \frac{(m-1)(m-2)}{6}
\end{cases}
$$

When substituting the rest of our considered solution, we get the following:
For the first equation we get
$\frac{2m-2 + 6}{m+2} + 1 = 3$, which is obviously true.

For the second one we get
$$\frac{2(m-1)(m-4)}{3(m+2)} + \frac{m-1}{3} + \frac{12(m-1)}{3(m+2)} = m-1 \ . $$
By multiplying both sides by $3(m+2)/(m-1)$ we get
$2(m-4)+m+2+12 = 3m+6$,
which is obviously true.

For the third one we get
$$\frac{2(m-1)(m-4)(m-7)}{18(m+2)} + \frac{(m-1)(m-4)}{18} + \frac{6(2m-2)(2m-5)}{18(m+2)} = \frac{(m-1)(m-2)}{6} \ . $$
By multiplying both sides by $18(m+2)/(m-1)$, we get
$2(m-4)(m-7) + (m-4)(m+2) + 12(2m-5) = 3(m+2)(m-2)$,
which is true by simple calculations.

\paragraph{{\boldmath $n \mod 3 = 2$}:}
The solution we consider is the following:
$$
y_t = 
\begin{cases}
1 & t = \frac{m-5}{3} \\
\frac{2(m-5)}{m-2} & t = \frac{m-2}{3} \\
\frac{6}{m-2} & t = 2\frac{m-2}{3} \\
0 & \mbox{otherwise}
\end{cases}
$$

Correctness follows by substitution at Equation System~\eqref{eqs:rank3} and simple calculations.

\paragraph{{\boldmath $m \mod 3 = 0$}:} 
The solution we consider is the following:
$$
y_t = 
\begin{cases}
\frac{3m-23}{m-3} & t = \frac{m-3}{3} \\
\frac{8}{m} & t = \frac{m-6}{3} \\
\frac{6(m+4)}{m(m-3)} & t = 2\frac{m-3}{3} \\
0 & \mbox{otherwise}
\end{cases}
$$

Correctness follows by substitution at Equation System~\eqref{eqs:rank3} and simple calculations.
This concludes the proof of Theorem~\ref{t:upperBoundRank3}.
\end{proof}

Theorem~\ref{t:upperBoundRank3} is tight by the following theorem.
\begin{theorem} \label{t:tightnessRank3}
There exists a monotone symmetric set function of rank~3 that has no positive lower envelope of rank~3.
\end{theorem}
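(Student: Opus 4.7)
The plan is to invoke Corollary~\ref{c:SymmetricPleClassificationByN-1}, which reduces the theorem to exhibiting, for some ground set size $m$, a single monotone symmetric rank-$3$ function $f:\{0,\ldots,m\}\to\nonnegR$ satisfying $f(m-1) < \frac{m-3}{m}f(m)$; by the ``Moreover'' clause of that corollary, any such $f$ has no positive lower envelope of rank $3$.

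Since $f$ is symmetric and has rank $3$, I would parameterize it by its three hypergraph weights $a,b,c$ (common to all singletons, pairs, and triples) as $f(k)=ak+b\binom{k}{2}+c\binom{k}{3}$. Using the identity $\binom{m-1}{j}=\frac{m-j}{m}\binom{m}{j}$ to expand $\frac{m-3}{m}f(m)$, the $c$-term cancels cleanly, and a short computation reduces the target strict inequality to
\[
4a + (m-1)\,b \;<\; 0.
\]
Monotonicity of $f$ is equivalent to the marginals $a+bk+c\binom{k}{2}$ being non-negative for $k=0,1,\ldots,m-1$. The cases $k=0,1$ force $a\ge 0$ and $a+b\ge 0$. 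Combined with the displayed inequality this forces $b<0$ together with $m-1>4$, so $m\ge 6$ is necessary; the $c$-parameter is then free and can be used to restore monotonicity at the larger values of $k$.

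It then suffices to exhibit a concrete witness. I would take $m=6$, $a=50$, $b=-41$, so that $4a+5b=-5<0$ and $a+b=9\ge 0$, and then pick $c=32$, which is easily seen to be the smallest integer making the remaining marginals (at $k=2,3,4,5$) non-negative. A direct verification gives the monotone sequence $f(0),\ldots,f(6) = 0,50,59,59,82,160,325$, and indeed $f(5)=160<162.5=\tfrac{3}{6}\cdot 325=\tfrac{m-3}{m}f(m)$, as required, so $f$ witnesses the theorem.

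The main obstacle is balancing the three competing constraints: rank $3$ pins down $f$ to a rigid three-parameter family, pointwise monotonicity forbids $b$ from being too negative at small $k$, and the envelope inequality demands $b$ be negative \emph{enough}. These are compatible only once $m\ge 6$, and the role of $c$ is precisely to absorb the monotonicity failure at intermediate $k$ caused by pushing $b$ just below the critical threshold $-4a/(m-1)$.
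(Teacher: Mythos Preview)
Your proposal is correct and follows essentially the same approach as the paper: both invoke Corollary~\ref{c:SymmetricPleClassificationByN-1} and exhibit a concrete rank-$3$ symmetric function at $m=6$ violating $f(m-1)\ge\frac{m-3}{m}f(m)$. The paper uses the simpler witness $(a,b,c)=(1,-1,1)$, i.e.\ $f(x)=x-\binom{x}{2}+\binom{x}{3}$, yielding $f(5)=5<5.5=\tfrac12\cdot 11$; your extra reduction to $4a+(m-1)b<0$ and the consequent necessity of $m\ge 6$ is a nice clarification but not a different route.
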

\begin{proof}
Let $f: \{0, \ldots, m\} \to \nonnegR$ be a set function with the following hypergraph representation.
Hyperedges of rank~1 have value~1;
Hyperedges of rank~2 have value~-1;
Hyperedges of rank~3 have value~1;
That is, $f(x) = x-\binom{x}{2}+\binom{x}{3}$.
It is clear this function is symmetric and that it has rank~3.
We show monotonicity.
From simple calculations, $f(1) = f(2) = f(3) = 1$.
The marginal value for adding the $x^{th}$ item to at least~2 items is:
$$
f(x) - f(x-1) = 1-(x-1)+\binom{x-1}{2}=\binom{x-1}{2}-x+2 = \frac{1}{2}(x-1)(x-2)-x+2=\frac{1}{2}x^2 - 2\frac{1}{2}x + 3 \ .
$$
This is greater than~0 for any $x>3$, as desired.

We show inexistence of positive lower envelope of rank~3.
By Corollary~\ref{c:SymmetricPleClassificationByN-1} it is sufficient to show that $f(m-1) < \frac{m-3}{m}f(m)$.
We show it is true for $n=6$:
$$f(m) = f(6) = 6 - \binom{6}{2} + \binom{6}{3} = 11 \ . $$
$$f(m-1) = f(5) = 5- \binom{5}{2} + \binom{5}{3} = 5 \ . $$
Therefore, 
$$\frac{m-3}{m}f(m) = \frac{1}{2} \cdot 11 = 5.5 > 5 = f(m-1) \ , $$
as desired.
This concludes the proof of Theorem~\ref{t:tightnessRank3}.
\end{proof}

\subsection{Any monotone symmetric set function of rank~4 is $\MOPH$-$6$} 
We show the following theorem and then show its tightness.
\begin{theorem} \label{t:upperBoundRank4}
Let $f : 2^M \to \nonnegR$ be a monotone symmetric set function of rank~4.
Then, there exists a positive lower envelope of $f$ of rank~6 for any restriction of it to a subset of $M$.
That is, any monotone symmetric set function of rank~4 is in $\MOPH$-6.
\end{theorem}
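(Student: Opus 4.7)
The plan is to follow the LP-duality template of Theorem~\ref{t:upperBoundRank3}, lifted from rank 3 to rank 4. By Corollary~\ref{c:SymmetricPleClassificationByN-1}, after rescaling so that $f(m) = m$, it suffices to show that every monotone symmetric rank-4 set function $f : \{0,\ldots,m\} \to \nonnegR$ satisfies $f(m-1) \geq \frac{m-6}{m} f(m) = m-6$. Writing $x_j$ for the common weight of a hyperedge of rank $j$ in the hypergraph representation of $f$ (for $j \in \{1,2,3,4\}$), I would set up the primal LP that minimizes
\[
\binom{m-1}{1} x_1 + \binom{m-1}{2} x_2 + \binom{m-1}{3} x_3 + \binom{m-1}{4} x_4
\]
subject to the $m$ monotonicity constraints $x_1 + \binom{t}{1} x_2 + \binom{t}{2} x_3 + \binom{t}{3} x_4 \geq 0$ for $t \in \{0,1,\ldots,m-1\}$ (expressing non-negativity of the marginal of an item added to a set of size $t$), together with the scaling equality $m x_1 + \binom{m}{2} x_2 + \binom{m}{3} x_3 + \binom{m}{4} x_4 = m$.

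Next I would take the dual: one non-negative variable $y_t$ per monotonicity constraint and an unrestricted variable $z$ per scaling constraint, with objective $m z$ and the four equalities
\[
\sum_{t=0}^{m-1} \binom{t}{j-1} y_t + \binom{m}{j} z = \binom{m-1}{j}, \qquad j=1,2,3,4.
\]
By weak duality, exhibiting a feasible dual solution with $z = \frac{m-6}{m}$ yields dual objective $m-6$ and hence $f(m-1) \geq m-6$ as required. Substituting this value of $z$ reduces the task to producing non-negative $y_t$ satisfying the four moment conditions
\[
\sum_t y_t = 5, \qquad \sum_t t\, y_t = 2(m-1), \qquad \sum_t \binom{t}{2} y_t = \tfrac{(m-1)(m-2)}{2}, \qquad \sum_t \binom{t}{3} y_t = \tfrac{(m-1)(m-2)(m-3)}{12}.
\]
Since the system has four equations, an ansatz supported on four well-chosen integer points $t_1 < t_2 < t_3 < t_4$ in $\{0,\ldots,m-1\}$ should generically leave a unique solution that I must then certify to be non-negative.

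The main obstacle is the case analysis forced by integrality of the support: the solution of the four-moment problem is a rational function of $m$, and to keep the $t_i$ integer one must split according to $m$ modulo a small integer (in analogy with the three-way $m \bmod 3$ split used in Theorem~\ref{t:upperBoundRank3}; here one expects a split of size on the order of $4$ to $6$). Within each residue class I would propose an explicit four-point ansatz of the form $t_i = \alpha_i m + \beta_i$ with constants chosen so that the four moments can be matched and so that each $y_{t_i}$ is a rational function of $m$ with nonnegative numerator for all sufficiently large $m$. The candidate supports and weights would be obtained via a computer-assisted search, as was done in the rank-3 case; the remaining verification in each class is then a routine polynomial identity check in $m$ together with a non-negativity inspection.

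Finally, the small base cases are handled trivially: whenever $m \leq 6$, monotonicity gives $f(m-1) \geq 0 \geq m-6$, so the PLE-6 requirement is vacuous and the dual construction is only needed for $m \geq 7$ (in particular, large enough that the four ansatz points lie in $\{0,\ldots,m-1\}$ and remain distinct). Combining the base case with the per-residue-class dual feasibility then gives the desired bound $f(m-1) \geq m-6$ for every $m$, and by Corollary~\ref{c:SymmetricPleClassificationByN-1} every monotone symmetric rank-4 function admits a PLE of rank $6$, yielding $\MOPH$-$6$ via Proposition~\ref{p:PLEtoMOH}.
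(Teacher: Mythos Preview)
Your proposal follows essentially the same LP-duality template as the paper: reduce via Corollary~\ref{c:SymmetricPleClassificationByN-1} to the inequality $f(m-1)\ge m-6$ after scaling, write the primal LP in the hyperedge weights $x_1,\ldots,x_4$, take its dual with variables $(y_t)_{t}$ and $z$, and exhibit an explicit dual-feasible point with $mz\ge m-6$. Your derived moment system for $z=\tfrac{m-6}{m}$ is correct.

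The only substantive difference is in the parameterization of the dual witness. You pin $z=\tfrac{m-6}{m}$ and look for a four-point non-negative $y$ satisfying four moment equations, anticipating a residue-class case split of size $4$--$6$. The paper instead leaves $z$ free (so there are only four equations in the unknowns $(y_t,z)$), uses at most three support points, and verifies that the resulting $z$ satisfies $z>\tfrac{m-6}{m}$. This yields a cleaner split into just two cases:
\begin{itemize}
\item $m$ even: support $\{0,\tfrac{m}{2}-1,\tfrac{m}{2}\}$ with $y_0=1$, $y_{m/2-1}=y_{m/2}=\tfrac{2m-2}{m+2}$, giving $z=\tfrac{m-4}{m+2}$;
\item $m$ odd: support $\{0,\tfrac{m-1}{2}\}$ with $y_0=1$, $y_{(m-1)/2}=\tfrac{4(m-2)}{m+1}$, giving $z=\tfrac{(m-2)(m-3)}{m(m+1)}$.
\end{itemize}
Both approaches are valid; the paper's choice simply minimizes the bookkeeping. (Incidentally, your fixed-$z$ route also admits a three-point solution for odd $m$, e.g.\ support $\{0,\tfrac{m-1}{2},m-1\}$, so the four-point ansatz is more than you need.)
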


\begin{proof}
For simplicity of presentation, we prove Theorem~\ref{t:upperBoundRank4} only for $f$ itself, and not for any restriction of it to a subset. The proof for any restriction is essentially the same.
Let $m=|M|$.
Assume towards a contradiction that there is a monotone symmetric set function $f$ of rank~4 with no positive lower envelope of rank~6. We scale $f$ to have $f(m)=m$. Then, by Corollary~\ref{c:PleSufficient}, we may assume without loss of generality that
$f(m-1) < m-6$. 

We consider the following linear program (similarly to the case of rank~3).
\begin{LP} \label{LP:min_rank4}
\ \\\textbf{Minimize} $\binom{m-1}{1}x_1 + \binom{m-1}{2}x_2 + \binom{m-1}{3}x_3 + \binom{m-1}{4}x_4$ \\
\textbf{Subject to:}
\begin{description}
\item{Monotonicity:} $\forall_{t \in \{ 0,1, \ldots m-1 \}} : x_1 + \binom{t}{1}x_2 + \binom{t}{2}x_3 + \binom{t}{3}x_4 \geq 0$.
\item{Scaling:} $m x_1 + \binom{m}{2}x_2 + \binom{m}{3}x_3 + \binom{m}{4}x_4 = m$
\end{description}
\end{LP}

Let $f^*$ be an optimum of Linear Program~\ref{LP:min_rank4}.
Then, it must be that $f^*(m-1) < m-6$.
We consider the dual of Linear Program~\ref{LP:min_rank4}:
\begin{LP} \label{LP:max_rank4}
\ \\\textbf{Maximize} $m \cdot z$ \\
\textbf{Subject to:}
\begin{enumerate}
\item $\sum_{t=0}^{m-1} y_t + \binom{m}{1} z = \binom{m-1}{1}$
\item $\sum_{t=0}^{m-1} \binom{t}{1} y_t + \binom{m}{2} z = \binom{m-1}{2}$
\item $\sum_{t=0}^{m-1} \binom{t}{2} y_t + \binom{m}{3} z = \binom{m-1}{3}$
\item $\sum_{t=0}^{m-1} \binom{t}{3} y_t + \binom{m}{4} z = \binom{m-1}{4}$
\item $\forall_t y_t \geq 0$
\end{enumerate}
\end{LP}

Since $f^*(m-1) < n-6$ and from duality theorem, it must be that any feasible solution of Linear Program~\ref{LP:max_rank3} is of value strictly less that $m-4$.
We show a feasible solution to Linear Program~\ref{LP:max_rank3} of value $m-6$ and derive a contradiction.
Specifically, we prove that for any $n > 4$, there is a feasible solution for Linear Program~\ref{LP:max_rank4} with $z>\frac{m-6}{m}$\footnote{Of course, $z \geq \frac{m-6}{m}$ would also be enough.}. 

We consider separately even and odd values of $m$, with solutions concluded with the aid of a computer\footnote{We used Microsoft .NET together with Gurobi (\cite{gurobi}) and IBM CPLEX and also Wolfram Mathematica.} to solve Linear Program~\ref{LP:max_rank3} for various values of $m$:

\paragraph{{\boldmath $n$ is even:}}
$$
y_t = 
\begin{cases}
1 & t = 0 \\
\frac{2m-2}{m+2} & t = \frac{m}{2}-1 \\
\frac{2m-2}{m+2} & t = \frac{m}{2} \\
0 & \mbox{otherwise}
\end{cases}
$$
$$z = \frac{m-1-\sum_t{y_t}}{m} = \frac{m-4}{m+2}$$
We show that for any even $m$ this solution is feasible and that $z>\frac{m-6}{m}$.
We start with showing feasibility.
For the first equation $\sum_{t=0}^{m-1} y_t + \binom{m}{1} z = \binom{m-1}{1}$ it is trivial it is satisfied, by the value chosen for $z$.
For the second equation $\sum_{t=0}^{m-1} \binom{t}{1} y_t + \binom{m}{2} z = \binom{m-1}{2}$ we have
$$\sum_{t=0}^{m-1} t \cdot y_t = 0 \cdot 1 + \frac{m}{2}-1 \cdot \frac{2m-2}{m+2} + \frac{m}{2} \cdot \frac{2m-2}{m+2} = (m-1) \cdot \frac{2m-2}{m+2} = \frac{2(m-1)^2}{m+2}$$
and
$$\binom{m}{2} z = \frac{(1/2) \cdot m(m-1)(m-4)}{m+2} \ . $$
Therefore, 
\begin{align*}
\sum_{t=0}^{m-1} \binom{t}{1} y_t + \binom{m}{2} z =
\frac{2(m-1)^2 + (1/2) \cdot m(m-1)(m-4)}{m+2} &= \\ 
(m-1) \cdot \frac{\frac{1}{2} \cdot (m^2-4)}{m+2} &= \\
(m-1)\frac{\frac{1}{2}(m-2)(m+2)}{m+2} &= \binom{m-1}{2}
\end{align*} as desired.
For the third equation $\sum_{t=0}^{m-1} \binom{t}{2} y_t + \binom{m}{3} z = \binom{m-1}{3}$ we have
$$
\sum_{t=0}^{m-1} \binom{t}{2} y_t = \binom{\frac{m}{2}-1}{2} \cdot \frac{2m-2}{m+2} + \binom{\frac{m}{2}}{2} \cdot \frac{2m-2}{m+2} = \frac{(m-2)^2 (m-1)}{2(m+2)}
$$
and
$$
\binom{m}{3} z = \frac{m(m-1)(m-2)(m-4)}{6(m+2)} \ .
$$
Therefore,
\begin{align*}
\sum_{t=0}^{m-1} \binom{t}{2} y_t + \binom{m}{3} z = \frac{3(n-1)(n-2)^2 + m(m-1)(m-2)(m-4)}{6(m+2)} &= \\
\frac{(m-1)(m-2)(3(m-2)+m(m-4))}{6(m+2)} &= \\
\frac{(m-1)(m-2)(m+2)(m-3)}{6(m+2)} &= \binom{m-1}{3}
\end{align*} as desired.
For the fourth equation $\sum_{t=0}^{m-1} \binom{t}{3} y_t + \binom{m}{4} z = \binom{m-1}{4}$ we have
$$\sum_{t=0}^{m-1} \binom{t}{3} y_t = \binom{\frac{m}{2}-1}{3} \cdot \frac{2m-2}{m+2} + \binom{\frac{m}{2}}{3} \cdot \frac{2m-2}{m+2} = \frac{(m-1)(m-2)(m-3)(m-4)}{12(m+2)}$$
and
$$
\binom{m}{4} z = \frac{m(m-1)(m-2)(m-3)(m-4)}{24(m+2)} \ ,
$$
which immediately gives the desired.

In order to show that $z>\frac{m-6}{m}$ we show that $\sum_t y_t < 5$.
Since $z = \sum_t y_t = 1 + \frac{2(2m-2)}{m+2}$, it is enough to show that $\frac{2(2m-2)}{n+m} < 4$, which is obviously true.

\paragraph{{\boldmath $m$ is odd:}}
$$
y_t = 
\begin{cases}
1 & t = 0 \\
\frac{4(m-2)}{m+1} & t = \frac{m-1}{2} \\
0 & \mbox{otherwise}
\end{cases}
$$
$$z = \frac{(m-2)(m-3)}{m(m+1)}$$ 
We show that for any odd $m$ this solution is feasible and that $z>\frac{m-6}{m}$.
We start with showing feasibility.
For the first equation $\sum_{t=0}^{n-1} y_t + \binom{n}{1} z = \binom{n-1}{1}$ we have
$$
\sum_{t=0}^{m-1} y_t + \binom{m}{1} z = 1 + \frac{4(m-2)}{m+1} + \frac{(m-2)(m-3)}{m+1} = 1+\frac{(m-2)(m+1)}{m+1} = \binom{m-1}{1}
$$
as desired.
For the second equation $\sum_{t=0}^{n-1} \binom{t}{1} y_t + \binom{n}{2} z = \binom{n-1}{2}$ we have
$$
\sum_{t=0}^{m-1} t \cdot y_t + \binom{m}{2} z =  \frac{4(m-1)(m-2)}{2(m+1)} + \frac{(m-1)(m-2)(m-3)}{2(m+1)} = \frac{(m-1)(m-2)(m+1)}{2(m+1)} = \binom{m-1}{2}
$$
as desired.
For the third equation $\sum_{t=0}^{m-1} \binom{t}{2} y_t + \binom{m}{3} z = \binom{m-1}{3}$ we have
\begin{align*}
\sum_{t=0}^{m-1} \binom{t}{2} y_t + \binom{m}{3} z = 
\frac{3(m-1)(m-2)(m-3)}{6(m+1)} + \frac{(m-1)(m-2)^2 (m-3)}{6(m+1)} &= \\
\frac{(m-1)(m-2)(m-3)(m+1)}{6(m+1)} &= \binom{m-1}{3}
\end{align*}
as desired.
For the fourth equation $\sum_{t=0}^{m-1} \binom{t}{3} y_t + \binom{m}{4} z = \binom{m-1}{4}$ we have
\begin{align*}
\sum_{t=0}^{m-1} \binom{t}{3} y_t + \binom{m}{4} z = 
\frac{2(m-1)(m-2)(m-3)(m-5)}{24(m+1)} + \frac{(m-1)(m-2)^2(m-3)^2}{24(m+1)} &= \\
\frac{(m-1)(m-2)(m-3)(m^2-3m-4)}{24(m+1)} &= \\
\frac{(m-1)(m-2)(m-3)(m-4)(m+1)}{24(m+1)} &= \binom{m-1}{4}
\end{align*}
as desired.

We show $z>\frac{m-6}{m}$.
$$
z - \frac{m-6}{m} = \frac{(m-2)(m-3)}{m(m+1)} - \frac{m-6}{m} = \frac{(m-2)(m-3) - (m-6)(m+1)}{m(m+1)} = \frac{12}{m(m+1)} > 0
$$
where the second and third equalities follow from $m>0$.

This concludes the proof of Theorem~\ref{t:upperBoundRank4}.
\end{proof}

\begin{theorem} \label{t:tightnessRank4}
There exists a monotone symmetric set function of rank~4 that has no positive lower envelope of rank~5.
\end{theorem}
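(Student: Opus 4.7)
The plan is to apply Corollary~\ref{c:SymmetricPleClassificationByN-1} with $r=4$ and $R=5$: it suffices to exhibit a monotone symmetric function $f:\{0,\ldots,m\} \to \nonnegR$ of rank $4$ and a value of $m$ such that $f(m-1) < \frac{m-5}{m} f(m)$. So the whole task reduces to finding such an $f$.

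First, I would locate the smallest promising $m$. The proof of Theorem~\ref{t:upperBoundRank4} gives, via LP duality, a lower bound $f(m-1) \geq mz$ for any normalized $f$ with $f(m)=m$, where $mz = m(m-4)/(m+2)$ for even $m$ and $mz = (m-2)(m-3)/(m+1)$ for odd $m$. Comparing these to $m-5$ yields $mz < m-5$ iff $m>10$ (even) or $m>11$ (odd), so the smallest candidate is $m=12$. At $m=12$, the even-case dual solution is supported on $y_0, y_5, y_6$, hence by complementary slackness the primal optimum must have vanishing marginals $f(t+1)-f(t)=0$ at $t=0,5,6$.

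Second, I would write $f(x) = c_1 \binom{x}{1} + c_2 \binom{x}{2} + c_3 \binom{x}{3} + c_4 \binom{x}{4}$ and solve the three linear conditions from the tight monotonicity constraints. Up to scaling this gives a unique solution, and the marginal simplifies to
\[
f(t+1)-f(t) = \frac{c_4}{6}\, t(t-5)(t-6).
\]
For $c_4>0$ this is nonnegative on $\{0,1,\ldots,11\}$ (the only integer roots in this range being $0,5,6$), so $f$ is monotone (and hence nonnegative since $f(0)=0$). Clearing denominators, I would take
\[
f(x) = 10\binom{x}{2} - 8\binom{x}{3} + 3\binom{x}{4},
\]
a symmetric function of rank $4$. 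A direct computation gives $f(12)=385$ and $f(11)=220$, so $\frac{f(m-1)}{f(m)} = \frac{220}{385} = \frac{4}{7} < \frac{7}{12} = \frac{m-5}{m}$. Corollary~\ref{c:SymmetricPleClassificationByN-1} then concludes that $f$ admits no positive lower envelope of rank $5$.

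The only nontrivial steps are identifying the correct $m$ and guessing the structure of $f$. Once the dual slack $\frac{12}{m+1}$ (or its even-case analogue) is observed to dip below $1$ at $m=12$, complementary slackness pins down three zeros of the marginal polynomial, forcing it to be a scalar multiple of $t(t-5)(t-6)$; verifying monotonicity on $\{0,\ldots,11\}$ then reduces to the automatic observation that $(t-5)(t-6) \geq 0$ on integers outside $\{5,6\}$. The remaining arithmetic is routine.
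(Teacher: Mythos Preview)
Your proof is correct and takes essentially the same approach as the paper: the same function $f(x)=10\binom{x}{2}-8\binom{x}{3}+3\binom{x}{4}$, the same value $m=12$, the same numerical check $220<\tfrac{7}{12}\cdot 385$, and the same appeal to Corollary~\ref{c:SymmetricPleClassificationByN-1}. Your complementary-slackness derivation of the marginal $\tfrac{c_4}{6}t(t-5)(t-6)$ is a nice addition that explains \emph{why} this $f$ is the right choice (the paper merely says the solution was found with computer assistance), but the resulting argument is identical.
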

\begin{proof} 
Let $f: \{0, \ldots, m\} \to \nonnegR$ be a set function with the following hypergraph representation.
Hyperedges of rank~1 have value~0;
Hyperedges of rank~2 have value~10;
Hyperedges of rank~3 have value~-8;
Hyperedges of rank~4 have value~3.
That is, $f(x) = 10 \binom{x}{2} - 8\binom{x}{3} + 3\binom{x}{4}$.
It is clear this function is symmetric and that it has rank~4.
This function is monotone, since from simple calculation the marginal value is non-negative for any integer (actually it is positive for any $m>7$).
We show inexistence of positive lower envelope of rank~5.
By Corollary~\ref{c:SymmetricPleClassificationByN-1} it is sufficient to show that $f(m-1) < \frac{m-5}{m}f(m)$.
We show it is true for $m=12$:
$$f(m)=f(12)=385$$
$$f(m-1)=f(11)=220$$
$$\frac{m-5}{m}f(m) = \frac{7}{12} \cdot 385 = \frac{2695}{12} \approx 224.583 > 220 = f(m-1) \ , $$
as desired.
This concludes the proof of Theorem~\ref{t:tightnessRank4}.
\end{proof}

\section{Limitations of $\MOPH$}\label{sec:limitations}
Despite the advantages of the $\MOPH$ hierarchy, it has a few limitations.
First, even a function in the lowest level of the hierarchy may need an exponential number of hypergraphs in its support.

\begin{prop}
Every $\MOPH$ representation (regardless of rank) of the submodular function $f(S) = \min[|S|,n/2]$ (which is in $\MOPH$-1) requires exponentially many hypergraphs.
\end{prop}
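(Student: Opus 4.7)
The plan is to exploit supermodularity of positive hypergraph (PH) functions to show that any single PH function in a valid $\MOPH$ representation of $f(S)=\min[|S|,n/2]$ can ``tightly support'' $f$ on at most one set of size $n/2$. Since there are $\binom{n}{n/2}$ such sets, we get the claimed exponential lower bound.

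First I would record the following fact: any PH function $v_\ell$ (of arbitrary rank) is supermodular. This is immediate from the hypergraph representation $v_\ell(S)=\sum_{T\subseteq S} h(T)$ with $h(T)\ge 0$; for any $A,B$ and any $T$, the coefficient of $h(T)$ in $v_\ell(A\cup B)+v_\ell(A\cap B)$ is at least the coefficient in $v_\ell(A)+v_\ell(B)$, by a case analysis on whether $T$ is contained in $A\cap B$, in exactly one of $A,B$, in $A\cup B$ but neither, or in neither. Hence $v_\ell(A\cup B)+v_\ell(A\cap B)\ge v_\ell(A)+v_\ell(B)$.

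Next, fix any $\MOPH$ representation $f(S)=\max_{\ell\in\Ell}v_\ell(S)$. Each $v_\ell$ is then a PH function satisfying $v_\ell(T)\le f(T)$ for every $T\subseteq M$, i.e., $v_\ell(T)\le |T|$ for $|T|\le n/2$ and $v_\ell(T)\le n/2$ for $|T|\ge n/2$. The main claim is that for each such $\ell$, the set $\{S:|S|=n/2,\ v_\ell(S)=n/2\}$ has at most one element. Suppose instead that there exist distinct $S_1,S_2$ of size $n/2$ with $v_\ell(S_1)=v_\ell(S_2)=n/2$. By supermodularity,
\[
v_\ell(S_1\cup S_2)+v_\ell(S_1\cap S_2)\ \ge\ v_\ell(S_1)+v_\ell(S_2)\ =\ n.
\]
But $v_\ell(S_1\cup S_2)\le f(S_1\cup S_2)=n/2$, and since $S_1\neq S_2$ we have $|S_1\cap S_2|<n/2$, so $v_\ell(S_1\cap S_2)\le |S_1\cap S_2|<n/2$. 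The sum is strictly less than $n$, contradicting the previous display.

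To conclude: for every set $S$ of size $n/2$, there must exist some index $\ell(S)\in\Ell$ with $v_{\ell(S)}(S)=f(S)=n/2$. The map $S\mapsto\ell(S)$ is injective by the claim above, so $|\Ell|\ge\binom{n}{n/2}=2^{\Omega(n)}$. I do not anticipate a serious obstacle here; the only delicate point is the supermodularity argument, but it follows directly from non-negativity of $h$, and both upper bounds $v_\ell(S_1\cup S_2)\le n/2$ and $v_\ell(S_1\cap S_2)\le |S_1\cap S_2|$ use only the lower-envelope property of $v_\ell$ with respect to $f$.
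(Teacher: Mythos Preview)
Your proof is correct and follows essentially the same approach as the paper: both argue that a positive hypergraph function is supermodular, and then use supermodularity together with the lower-envelope constraint to show that no single PH function can attain the value $n/2$ on two distinct sets of size $n/2$, forcing $|\Ell|\ge\binom{n}{n/2}$. Your write-up is slightly more detailed (you spell out the supermodularity verification and the injectivity of $S\mapsto\ell(S)$), but the argument is the same.
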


\begin{proof}
Consider two different sets $S$ and $T$ with $|S| = |T| = m/2$. Let $h$ be an arbitrary supermodular function (hence any positive hypergraph qualifies here) satisfying $h(Q) \le f(Q)$ for every set $Q$. We claim that either $h(S) \not= f(S)$ or $h(T) \not= f(T)$. This proves that at least ${m \choose m/2}$ hypergraphs are needed.

To prove the claim observe that $f(S \cap T) < m/2$ and $f(S \cup T) = m/2$. Suppose for the sake of contradiction that $h(S) = h(T) = m/2$. Then by supermodularity  $h(S \cap T) + h(S \cup T) \ge m$, implying that either $h(S \cap T) > f(S \cap T)$ or $h(S \cup T) > f(S \cup T)$, a contradiction.
\end{proof}

Second, there are complement-free functions that can only be represented in level $m/2$ of the $\MOPH$ hierarchy.

\begin{prop}
There exists a subadditive function that cannot be represented by an $\MOPH$-$k$ function for any $k<m/2$, when $m$ is even.
\end{prop}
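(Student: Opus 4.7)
The plan is to exhibit an explicit three-valued subadditive function that requires level $m/2$ of the $\MOPH$ hierarchy. Define $f: 2^M \to \nonnegR$ by $f(\emptyset) = 0$, $f(M) = 2$, and $f(S) = 1$ for every nonempty proper subset $S \subsetneq M$. Monotonicity and non-negativity are immediate, and subadditivity reduces to a short case check: whenever $f(S \cup T) \leq 1$ the inequality is trivial, and the only remaining case is $S \cup T = M$ with both $S, T$ nonempty, where $f(S \cup T) = 2 \leq 1 + 1 \leq f(S) + f(T)$.

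Now I would invoke the PLE characterization (Proposition \ref{p:PLEtoMOH}) to translate ``$f \in \MOPH$-$k$'' into a concrete linear condition. If $f$ were $\MOPH$-$k$, the restriction $f_M = f$ would admit a positive lower envelope of rank at most $k$, i.e., a positive hypergraph $g(S) = \sum_{e \subseteq S} w(e)$ with $w(e) \geq 0$, $w(e) = 0$ whenever $|e| > k$, and satisfying
\begin{equation*}
g(M) = f(M) = 2, \qquad g(T) \leq f(T) \text{ for every } T \subseteq M.
\end{equation*}

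The heart of the argument is a double-counting step on the $m$ co-singleton constraints $g(M \setminus \{i\}) \leq f(M \setminus \{i\}) = 1$. Summing these over $i \in M$ and grouping by hyperedge gives
\begin{equation*}
\sum_{i \in M} g(M \setminus \{i\}) = \sum_{e} w(e)\,(m - |e|) \leq m,
\end{equation*}
and substituting $\sum_e w(e) = g(M) = 2$ yields $\sum_e w(e)|e| \geq m$. Since every hyperedge in $g$'s support has $|e| \leq k$, we also have $\sum_e w(e)|e| \leq k \sum_e w(e) = 2k$. Combining the two inequalities forces $k \geq m/2$, contradicting $k < m/2$.

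I do not anticipate a real obstacle: subadditivity is essentially by construction, and the lower bound follows from a single averaging inequality over the $m$ maximal proper subsets. The only thing to double-check is that the PLE characterization is being applied correctly (in particular, that the $g(T) \leq f(T)$ condition at the co-singletons alone suffices), but this is exactly the content of Definition \ref{def:PLE} and the forward direction of Proposition \ref{p:PLEtoMOH}.
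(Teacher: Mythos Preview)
Your proof is correct and uses exactly the same example as the paper: the function $f$ that is $1$ on all nonempty proper subsets and $2$ on $M$. The difference is in how the impossibility of a rank-$k$ PLE is derived. The paper invokes its Corollary~\ref{c:SymmetricPleClassificationByN-1} for symmetric functions, which says a rank-$R$ PLE exists only if $f(m-1)\ge\frac{m-R}{m}f(m)$; plugging in $f(m-1)=1$, $f(m)=2$ forces $R\ge m/2$. You bypass that corollary entirely and obtain the same inequality by a direct double-counting over the $m$ co-singletons: $\sum_i g(M\setminus\{i\})=\sum_e w(e)(m-|e|)$, together with $\sum_e w(e)=2$ and $|e|\le k$, yields $m\le 2k$. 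Your route is more elementary and self-contained (it does not rely on symmetrizing the PLE or on the canonical-PLE lemma), while the paper's route reuses machinery it had already built for general symmetric functions of bounded rank.
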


\begin{proof}
For a ground set of even size $m$, consider the function $f$ that gives value $1$ for every subset except for the set of all $m$ items, for which the value is $2$.
This function is clearly subadditive.
Additionally, for any rank $k < m/2$, it follows by the ``moreover" part of Corollary~\ref{c:SymmetricPleClassificationByN-1} and by straightforward calculations that $f$ does not have a positive lower envelope of rank $k$.
Finally, the canonical positive lower envelope of rank $m/2$ of $f$ (see Lemma~\ref{l:canonicalPLE_Symmetric}) is a legal positive lower envelope of it.
\end{proof}

\section{Integrality gap of LP~\ref{LP:configuration}} \label{app:integralityGap}
\begin{prop} \label{t:integralityGap}
Let $k \in \mathbb{N}$ be such than $k-1$ is a power of prime.
There exists an instance of the welfare maximization problem with $\PH$-$k$ valuations
and integrality gap $k - 1 + \frac{1}{k}$ for Linear Program~(\ref{LP:configuration}).
Note that such an instance is in particular $\MOPH$-$k$.
\end{prop}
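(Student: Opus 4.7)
The plan is to adapt the classical integrality-gap instance for $k$-uniform hypergraph matching (due to Chan and Lau) into an instance of welfare maximization with $\PH$-$k$ valuations. Since $k-1$ is a prime power, there exists a projective plane of order $q = k-1$. Recall that such a plane has $q^2 + q + 1 = k^2 - k + 1$ points and the same number of lines; every line contains exactly $q+1 = k$ points, every point lies on exactly $k$ lines, and any two distinct lines intersect in exactly one point.

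Using this structure, I would construct the following combinatorial auction instance. Let $M$ be the set of points of the plane (so $|M| = k^2 - k + 1$), and let $N$ be the set of $k^2 - k + 1$ bidders, one per line. Each bidder $i$ is single-minded with desired bundle $S_i \subseteq M$ equal to the $k$ points on their line, and valuation $v_i(T) = 1$ if $S_i \subseteq T$ and $v_i(T) = 0$ otherwise. Each such $v_i$ is monotone and admits the positive hypergraph representation $h_i(S_i) = 1$ and $h_i(e) = 0$ for all other $e$, with a single hyperedge of rank exactly $k$; hence $v_i \in \PH$-$k$, which in turn places it in $\MOPH$-$k$.

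Next, I would compute the integer optimum. Because any two distinct lines of the projective plane share a point, the desired bundles $S_i$ and $S_{i'}$ of any two bidders overlap; so in any feasible integral allocation at most one bidder can receive his entire desired bundle, and every other bidder contributes $0$ to welfare. Giving one bidder their line achieves integral welfare $1$, so the integer optimum equals exactly $1$.

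Finally, I would exhibit a fractional solution to LP~(\ref{LP:configuration}) achieving value $k - 1 + 1/k$. Set $x_{i,S_i} = 1/k$ for every bidder $i$, and $x_{i,S} = 0$ for all $S \neq S_i$. The agent constraint is satisfied since $\sum_S x_{i,S} = 1/k \le 1$. For the item constraint, each point $j$ lies on exactly $k$ lines, so $\sum_{i,S \ni j} x_{i,S} = k \cdot (1/k) = 1$. The objective value is
\[
\sum_{i} x_{i,S_i} \cdot v_i(S_i) = (k^2 - k + 1) \cdot \frac{1}{k} = k - 1 + \frac{1}{k}.
\]
Dividing by the integer optimum of $1$ yields an integrality gap of at least $k - 1 + 1/k$, as required. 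No step here is a serious obstacle; the main thing to verify carefully is that single-minded bidders with a $k$-element desired set truly fall into $\PH$-$k$ (which is immediate from the one-hyperedge representation), so that the known hypergraph-matching gap transfers cleanly to our setting.
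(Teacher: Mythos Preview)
Your proposal is correct and essentially identical to the paper's own proof: both use the projective plane of order $k-1$, identify items with points and single-minded bidders with lines, observe that pairwise-intersecting lines force the integral optimum to be $1$, and exhibit the uniform fractional solution $x_{i,S_i}=1/k$ with value $(k^2-k+1)/k$. The only cosmetic differences are notation and your explicit remark that the single-hyperedge representation places each valuation in $\PH$-$k$.
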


\begin{proof}
Let $FPP_{k-1}$ be the finite projective plane of order $k-1$
(it is known to exist, since $k-1$ is a power of prime). 
We set the following hypergraph $H=(V,E)$.
For each point in $FPP_{k-1}$, we have a vertex in $V$, and for each line, we have a hyperedge in $E$, containing the vertices representing the points that are on this line.
The following follows from the definitions of finite projective planes:
\begin{itemize}
\item $|V| = |E| = (k-1)^2+(k-1)+1 = (k-1)k + 1$.
\item Any two hyperedges in $E$ have a vertex in common.
\item Any hyperedge in $E$ contains exactly $k$ vertices (i.e. the hyperedges in $E$ are all of rank $k$).
\item Any vertex in $V$ is contained in exactly $k$ hyperedges.
\end{itemize}
Our instance of the welfare maximization problem has $(k-1)^2+k$ agents.
Each agent $i$ has one distinct preferred hyperedge $e_i \in E$.
The valuation function $v_i$ of agent $i$ has value~1 for any subset containing all the items represented by vertices in $e_i$ and~0 otherwise.
It is trivial that $v_i$ is in $\PH$-$k$. 
Furthermore, the agents are single minded.
An optimal {\em integral} solution of this instance is an allocation of all the items represented by vertices in $e_i$ to agent $i$, for some arbitrary $i$.
This solution has value of~1.
However, there exists a better {\em fractional} solution.
Any agent $i$ gets a fraction of $\frac{1}{k}$ of all the items represented by vertices in $e_i$.
It is easy to verify that this is a feasible fractional solution with value $((k-1)k+1)/k = k - 1 + \frac{1}{k}$, as desired.
\end{proof}

\section{Complement-Free Valuations}\label{s:typesOfSetFunctions}
For completeness, we present the hierarchy of complement-free valuations (see for example~\cite{Lehmann2001, Feige2006}).
Let $M$ be a ground set and let $f: 2^M \to \nonnegR$ be a set function.

\begin{defn} [Additive function]
We say that $f$ is {\em additive} or {\em linear} if for every subset $S' \subseteq M$ of items, we have $f(S') = \sum\limits_{j\in S'} f(\{j\})$.
\end{defn}

\begin{defn} [Submodular function]
We say that $f$ is {\em submodular} if for every $S'' \subseteq S' \subseteq M$
and $x \in M \setminus S'$,
$f(x \mid S') \leq f(x \mid S'')$.
\end{defn}

\begin{defn} [XOS]
We say that $f$ is in {\em $\XOS$} if for some $l \in \mathbb{N}$ there exist additive set functions $f_1, \cdots f_l$ such that for every $S' \subseteq M$, we have $f(S') = \max_{1 \leq i \leq l} f_i(S')$.
\end{defn}

\begin{defn} [Fractionally subadditive function]
We say that $f$ is {\em fractionally subadditive} if for every subset $S' \subseteq M$,
subsets $T_i \subseteq S'$ and
every coefficients $0 < \alpha_i \leq 1$
such that for every $x \in S'$, $\sum_{i: x \in T_i} \alpha_i \geq 1$,
it holds that $f(S') \leq \sum_i \alpha_i f(T_i)$.
\end{defn}

\begin{defn} [Subadditive function]
We say that $f$ is {\em subadditive} or {\em complement free} if for every $S_1, S_2 \subseteq M$,
$f(S_1 \cup S_2) \leq f(S_1) + f(S_2)$.
\end{defn}

Note that Additive $\subseteq $ Submodular $\subseteq \XOS = $ Fractionally subadditive $ \subseteq $ Subadditive.

\section{Types of queries} \label{app:queries}
We recall the definitions of basic types of queries for set functions.
Let $M$ be a ground set and let $f: 2^M \to \nonnegR$ be a set function.
\begin{defn} [Value queries]
{\em Value queries} are the following:
\\\textsf{Input:} A subset $S' \subseteq M$.
\\\textsf{Output:} $f(S')$.
\end{defn}
\begin{defn} [Demand queries]
{\em Demand queries} are the following:
\\\textsf{Input:} A cost function $c: M \to \nonnegR$.
\\\textsf{Output:} A subset $S' \subseteq M$ maximizing $f(S') - \sum_{j \in S'} c(j)$.
\end{defn}
Note that demand queries are strictly stronger than value queries (see~\cite{Nisanagt}).

\end{document}